\newcommand{\vertiii}[1]{{\left\vert\kern-0.25ex\left\vert\kern-0.25ex\left\vert #1 
    \right\vert\kern-0.25ex\right\vert\kern-0.25ex\right\vert}}
\newcommand{\serie}[1]{\{#1_{n}\}_n}
\let\coloneqq\relax
\newcolumntype{x}[1]{>{\centering\arraybackslash}p{#1}}
\newtheorem{thm}{Theorem}
\newtheorem*{thm*}{Theorem}
\newtheorem*{prop*}{Proposition}
\newtheorem{lemma}[thm]{Lemma}
\newtheorem*{lemma*}{Lemma}
\newtheorem*{cor*}{Corollary}
\newtheorem*{cj*}{Conjecture}
\newtheorem{Def}[thm]{Definition}
\newtheorem*{Def*}{Definition}
\newtheorem{remark}{Remark}
\def\thmhead@plain#1#2#3{%
  \thmname{#1}\thmnumber{\@ifnotempty{#1}{ }\@upn{#2}}%
  \thmnote{ {\the\thm@notefont#3}}}
\let\thmhead\thmhead@plain
\theoremstyle{definition}
\newcommand{\bb}{\begin{equation}\begin{aligned}\hspace{0pt}}
\newcommand{\bbb}{\begin{equation*}\begin{aligned}}
\newcommand{\ee}{\end{aligned}\end{equation}}
\newcommand{\eee}{\end{aligned}\end{equation*}}
\newcommand*{\coloneqq}{\mathrel{\vcenter{\baselineskip0.5ex \lineskiplimit0pt \hbox{\scriptsize.}\hbox{\scriptsize.}}} =}
\newcommand{\ketbra}[1]{\ket{#1}\!\!\bra{#1}}
\newcommand{\N}{\mathds{N}}
\newcommand{\C}{\mathds{C}}
\DeclareMathOperator{\Tr}{Tr}
\DeclareMathOperator{\rk}{rk}
\DeclareMathAlphabet{\pazocal}{OMS}{zplm}{m}{n}
\DeclareMathOperator{\Id}{Id}
\newcommand{\NN}{\mathcal{N}}
\newcommand{\lsmatrix}{\left(\begin{smallmatrix}}
\newcommand{\rsmatrix}{\end{smallmatrix}\right)}
\newcommand*\rel@kern[1]{\kern#1\dimexpr\macc@kerna}
\newcommand*\widebar[1]{%
  \begingroup
  \def\mathaccent##1##2{%
    \rel@kern{0.8}%
    \overline{\rel@kern{-0.8}\macc@nucleus\rel@kern{0.2}}%
    \rel@kern{-0.2}%
  }%
  \macc@depth\@ne
  \let\math@bgroup\@empty \let\math@egroup\macc@set@skewchar
  \mathsurround\z@ \frozen@everymath{\mathgroup\macc@group\relax}%
  \macc@set@skewchar\relax
  \let\mathaccentV\macc@nested@a
  \macc@nested@a\relax111{#1}%
  \endgroup
}
\tikzset{meter/.append style={draw, inner sep=10, rectangle, font=\vphantom{A}, minimum width=30, line width=.8, path picture={\draw[black] ([shift={(.1,.3)}]path picture bounding box.south west) to[bend left=50] ([shift={(-.1,.3)}]path picture bounding box.south east);\draw[black,-latex] ([shift={(0,.1)}]path picture bounding box.south) -- ([shift={(.3,-.1)}]path picture bounding box.north);}}}
\tikzset{roundnode/.append style={circle, draw=black, fill=gray!20, thick, minimum size=10mm}}
\tikzset{squarenode/.style={rectangle, draw=black, fill=none, thick, minimum size=10mm}}
\definecolor{Blues5seq1}{RGB}{239,243,255}
\definecolor{Blues5seq2}{RGB}{189,215,231}
\definecolor{Blues5seq3}{RGB}{107,174,214}
\definecolor{Blues5seq4}{RGB}{49,130,189}
\definecolor{Blues5seq5}{RGB}{8,81,156}
\definecolor{Greens5seq1}{RGB}{237,248,233}
\definecolor{Greens5seq2}{RGB}{186,228,179}
\definecolor{Greens5seq3}{RGB}{116,196,118}
\definecolor{Greens5seq4}{RGB}{49,163,84}
\definecolor{Greens5seq5}{RGB}{0,109,44}
\definecolor{Reds5seq1}{RGB}{254,229,217}
\definecolor{Reds5seq2}{RGB}{252,174,145}
\definecolor{Reds5seq3}{RGB}{251,106,74}
\definecolor{Reds5seq4}{RGB}{222,45,38}
\definecolor{Reds5seq5}{RGB}{165,15,21}
\pgfplotsset{width=10cm,compat=1.9}
\definecolor{marco}{rgb}{.4,.2,.6}
\newcommand*{\addFileDependency}[1]{
  \typeout{(#1)}
  \@addtofilelist{#1}
  \IfFileExists{#1}{}{\typeout{No file #1.}}
}
\algrenewcommand\algorithmicrequire{\textbf{Input:}}
\algrenewcommand\algorithmicensure{\textbf{Output:}}
\begin{document}

\author{Francesco Anna Mele}
\email{francesco.mele@sns.it}
\affiliation{NEST, Scuola Normale Superiore and Istituto Nanoscienze, Piazza dei Cavalieri 7, IT-56126 Pisa, Italy}

\author{Giovanni Barbarino}
\email{giovanni.barbarino@gmail.com}
\affiliation{Mathematics and Operational Research Unit, Faculté Polytechnique de Mons, UMONS, BE-7000 Mons, Belgium}

\author{Vittorio Giovannetti}
\email{vittorio.giovannetti@sns.it}
\affiliation{NEST, Scuola Normale Superiore and Istituto Nanoscienze, Consiglio Nazionale delle Ricerche, Piazza dei Cavalieri 7, IT-56126 Pisa, Italy}

\author{Marco Fanizza}
\email{marco.fanizza@uab.cat}
\affiliation{{F\'{\i}sica Te\`{o}rica: Informaci\'{o} i Fen\`{o}mens Qu\`{a}ntics, Departament de F\'{i}sica, Universitat Aut\`{o}noma de Barcelona, ES-08193 Bellaterra (Barcelona), Spain}}

\title{Non-asymptotic quantum communication on lossy transmission lines with memory}

\begin{abstract}
Non-asymptotic quantum Shannon theory analyses how to transmit quantum information across a quantum channel as efficiently as possible within a specified error tolerance, given access to a finite, fixed, number of channel uses. In a recent work, we derived computable lower bounds on the non-asymptotic capacities of \emph{memoryless} bosonic Gaussian channels. In this work, we extend these results to the \emph{non-Markovian} bosonic Gaussian channel introduced in F. A. Mele, G. D. Palma, M. Fanizza, V. Giovannetti, and L. Lami 
IEEE Transactions on Information Theory {\bf 70}(12), 8844-8869 (2024), which describes non-Markovian effects in optical fibres and is a non-Markovian generalisation of the pure loss channel. This allows us to determine how many uses of a non-Markovian optical fibre are sufficient in order to transmit $k$ qubits, distil $k$ ebits, or generate $k$ secret-key bits up to a given error tolerance $\varepsilon$. To perform our analysis, we prove novel properties of singular values of Toeplitz matrices, providing an error bound on the convergence rate of the celebrated Avram–Parter’s theorem, which we regard as a new tool of independent interest for the field of quantum information theory and matrix analysis.
\end{abstract}

\maketitle  

\section{Introduction}
Continuous-variable quantum communication is expected to play a pivotal role in the evolution of future quantum technologies~\cite{usenko2025continuousvariablequantumcommunication}. It enables a range of critical capabilities, including secure information transfer, qubit transmission, and entanglement distribution via free-space channels and optical fibres~\cite{BUCCO, Caves, weedbrook12, Pirandola20,usenko2025continuousvariablequantumcommunication}. Among these applications, continuous-variable quantum key distribution stands out as a mature technology, propelled by significant experimental progress in recent years~\cite{Record1, Record2, Record3, Record4, Record5}.
Traditionally, the analysis of bosonic quantum communication is conducted under the so-called \emph{memoryless assumption} (or \emph{i.i.d.~assumption})~\cite{memory-review,holwer,Caruso2006, Wolf2007, Mark2012,Mark-energy-constrained, Rosati2018, Sharma2018, Noh2019, Noh2020,fanizza2021estimating,lower-bound,Giova_classical_cap,LossyECEAC1, LossyECEAC2, PLOB, Davis2018, Goodenough16, TGW, MMMM, squashed_channel, Pirandola2009, Ottaviani_new_lower, Pirandola18,LL-bosonic-dephasing,Mele_2024}, which posits that the noise affecting the communication channel is independent of the input signals previously transmitted. While this assumption provides a valid approximation when signals are transmitted with sufficiently long time intervals, it breaks down when the temporal rate of signal transmission increases~\cite{memory-review}. In such high-rate scenarios, the noise affecting the channel depends on previous signals, giving rise to \emph{memory} (non-Markovian) effects~\cite{memory-review,Memory1,Memory2,Memory3,Die-Hard-2-PRA,Die-Hard-2-PRL,mele2023optical}. 
 In recent years, the study of memory effects in quantum communication has garnered significant attention due to its potential to optimise the performance of optical channels~\cite{memory-review,Memory1,Memory2,Memory3,Die-Hard-2-PRA,Die-Hard-2-PRL,mele2023optical}. These memory effects can profoundly influence critical channel properties, including its communication capabilities and noise dynamics. Consequently, a thorough understanding and accurate modeling of these phenomena are critical for enhancing the transmission rates of communication lines and for designing robust quantum communication protocols tailored to practical, real-world environments.

Prior to this work, the communication performances of bosonic Gaussian channels with memory effects have been evaluated only through their \emph{asymptotic capacities}~\cite{memory-review,Memory1,Memory2,Memory3,Die-Hard-2-PRA,Die-Hard-2-PRL,mele2023optical}, which quantify the highest ratio of extracted resources (e.g.~transmitted bits, qubits, Bell pairs, or secret-key bits) to the number of channel uses, assuming an asymptotic regime of infinite channel uses and negligible errors~\cite{Sumeet_book}. However, no progress has been made in the realm of \emph{non-asymptotic} (or \emph{one-shot}) quantum Shannon theory~\cite{Tomamichel2015, Sumeet_book, Tomamichel2012, Tomamichel2016, Tomamichel2008, Berta2011, cheng2024invitationsamplecomplexityquantum,  MMMM, Kaur_2017, khatri2021secondorder, WildeRenes2016,mele2025achievableratesnonasymptoticbosonic} under bosonic noise with memory effects.
In this non-asymptotic framework, the concept of asymptotic capacity is replaced by the one of \emph{non-asymptotic capacity}~\cite{Sumeet_book}, which measures the maximum number of resources --- such as bits, qubits, Bell pairs, or secret-key bits --- that can be reliably extracted with a specified error tolerance after a finite number of channel uses~\cite{MMMM, Kaur_2017, khatri2021secondorder, WildeRenes2016, LL-bosonic-dephasing}. This approach is crucial, as it provides practical insights into the waiting times required to complete specific quantum communication tasks. Additionally, it parallels the study of "sample complexity" or "query complexity," which are established concepts in quantum learning theory~\cite{anshu2023survey, mele2024learningquantumstatescontinuous,fanizza2024efficienthamiltonianstructuretrace,bittel2024optimalestimatestracedistance}.

In this work, we fill the gap of estimating  non-asymptotic capacities of relevant Gaussian channels with memory effects. Specifically, we derive easily computable lower bounds on the non-asymptotic capacities of the bosonic Gaussian channel model introduced in~\cite{mele2023optical}, which captures memory effects in long-distance optical communication links. To do this, we build on recent findings regarding non-asymptotic capacities of memoryless Gaussian channels~\cite{mele2025achievableratesnonasymptoticbosonic}. A key component of our analysis is the development of novel properties related to the singular values of Toeplitz matrices~\cite{Szego1920}, which allow us to rigorously bound the error in the convergence rate of the so-called \emph{Avram-Parter's theorem}~\cite{Szego1920, GRENADER, Avram1988, Parter1986}, and, in particular, of the \emph{Szeg\"o theorem}~\cite{Szego1920} --- both celebrated theorems in the matrix analysis literature. This not only serves as a crucial tool for our analysis of non-asymptotic capacities but also represents a remarkable advancement in matrix analysis, with potential applications extending to broader areas of quantum information theory.

The paper is structured as follows. In Section~\ref{sec_notation}, first we introduce the necessary notation and preliminaries to support the presentation of our results, and second we review the model of optical fibre with memory effects introduced in \cite{mele2023optical}. In Section~\ref{Sec_memory_main}, we present our main results: (i) we provide easily computable lower bounds on the non-asymptotic capacities of the model; (ii) we derive an explicit bound on the convergence of the Avram--Parter's theorem~\cite{Szego1920, GRENADER,Avram1988,Parter1986}, which turns out to be crucial to bound the non-asymptotic capacities. Finally, in the Appendix, we provide the complete proofs of all results presented in the paper.

\section{Preliminaries}\label{sec_notation}
Here, we introduce the relevant notation, review some basic facts about bosonic Gaussian channels, and introduce the model of optical fibre with memory effects of Ref.~\cite{mele2023optical}.

\subsection{Non-asymptotic capacities of memoryless channels}
In the context of quantum Shannon theory~\cite{Sumeet_book,MARK}, the noise affecting a memoryless communication line can be modelled via a single quantum channel $\NN$, which acts on each of the transmissions signals. Specifically, a quantum channel $\NN$ is a linear, completely positive, trace-preserving transformation from the set of input quantum states to the set of output quantum states~\cite{Sumeet_book,MARK}. Additionally, $n$ uses of such a memoryless communication line are modelled via the tensor-product quantum channel $\NN^{\otimes n}$. More precisely, any (possibly correlated) input state $\rho^{(n)}$ of the first $n$ signals are mapped to the (potentially degraded) output state via the following relation:
\begin{eqnarray} 
\rho^{(n)} \mapsto \NN^{\otimes n} \!\left(\rho^{(n)}\right)\;. \label{mapping} 
\end{eqnarray} 
The quality of the quantum channel $\NN$ can be characterised using figures of merit tailored to the specific communication task under consideration. In this work, we focus on three communication tasks: 
 \begin{itemize}
 	\item \emph{Qubit distribution}: The goal is to reliably transmit an arbitrary quantum state, possibly entangled with an auxiliary system, across the quantum channel;
 	\item \emph{Entanglement distribution assisted by two-way classical communication}: The goal is to generate two-qubit Bell pairs (referred to as \emph{ebits}) between two parties connected by the quantum channel, exploiting the additional resource of a two-way classical communication line;
 	\item \emph{Secret-key distribution assisted by two-way classical communication}: The goal is to generate secret keys shared between two parties linked by the quantum channel, exploiting the additional resource of a two-way classical communication line;
 \end{itemize}
For each of these communication tasks, one can define a proper notion of \emph{non-asymptotic capacity}~\cite{Sumeet_book,mele2025achievableratesnonasymptoticbosonic}:
\begin{itemize}
    \item the \emph{$n$-shot quantum capacity $Q^{(\varepsilon,n)}(\NN)$} is defined as the maximum number of qubits that can be transmitted up to error $\varepsilon$ across $n$ uses of $\NN$;
    \item the \emph{$n$-shot two-way quantum capacity $Q_2^{(\varepsilon,n)}(\NN)$} is defined as the maximum number of ebits  that one can distil up to error $\varepsilon$ using $n$ uses of $\NN$ and arbitrary LOCCs (Local Operation and Classical Communication operations) ;
 \item the \emph{$n$-shot secret-key capacity $K^{(\varepsilon,n)}(\NN)$} is defined as the maximum number of secret-key bits that one can distil up to error $\varepsilon$ using $n$ uses of $\NN$ and arbitrary LOCCs.
\end{itemize}
For the exact definitions of the errors $\varepsilon$ associated with each of the these quantum communication tasks, as well as rigorous definitions of the capacities $Q^{(\varepsilon,n)}(\NN)$, $Q_2^{(\varepsilon,n)}(\NN)$, $K^{(\varepsilon,n)}(\NN)$,
 we refer to Ref.~\cite{mele2025achievableratesnonasymptoticbosonic}.

The \emph{asymptotic capacities} are defined by taking the limit of large channel uses ($n\rightarrow\infty$) and vanishing error ($\varepsilon\rightarrow0$) of the regularised version of their non-asymptotic counterparts~\cite{MARK,Sumeet_book}. That is, the \emph{quantum capacity} $Q(\NN)$, \emph{two-way quantum capacity} $Q_2(\NN)$, and \emph{secret-key capacity} $K(\NN)$ are defined as
  \bb
        Q(\NN)&\coloneqq \lim\limits_{\varepsilon \rightarrow 0^+} \liminf_{n \to \infty}\frac{Q^{(\varepsilon,n)}(\NN)}{n}\,,\\
        Q_2(\NN)&\coloneqq \lim\limits_{\varepsilon \rightarrow 0^+} \liminf_{n \to \infty}\frac{Q_2^{(\varepsilon,n)}(\NN)}{n}\,,\\
        K(\NN)&\coloneqq \lim\limits_{\varepsilon \rightarrow 0^+} \liminf_{n \to \infty}\frac{K^{(\varepsilon,n)}(\NN)}{n}\,.
  \ee
  In other words, the quantum capacity of $\NN$ is the maximum achievable rate of qubits which can be transmitted with vanishing error in the asymptotic limit of infinitely many uses of the channel. Analogously, the two-way quantum capacity (resp.~secret-key capacity) of $\NN$ is the maximum achievable rate of ebits (resp.~secret-key bits) which can be distilled via LOCCs with vanishing error in the asymptotic limit of infinitely many uses of the channel. Note that a simple resource counting argument establishes the following ordering among the non-asymptotic capacities:
  \bb 
  Q^{(\varepsilon,n)}(\NN) &\leq Q_2^{(\varepsilon,n)}(\NN)\leq K^{(\varepsilon,n)}(\NN)\;, 
  \ee
  and, in particular, the same holds for their asymptotic counterparts, i.e.
  \bb
  Q(\NN) &\leq Q_2(\NN)\leq K(\NN)\;.
  \ee
 \subsection{Non-asymptotic capacities of non-Markovian channels}
When memory effects are present, the noise affecting a given input signal depends on the specific state of the previously sent signals~\cite{memory-review,Memory1,Memory2,Memory3,Die-Hard-2-PRA,Die-Hard-2-PRL,mele2023optical}. In this context, the input-output relations are no longer governed by a single quantum channel $\NN$. Instead, they are described by a family of quantum channels  \begin{eqnarray} \label{family} {\cal F}\coloneqq\{ \NN^{(1)},\NN^{(2)}, \cdots, \NN^{(n)}, \cdots\}\;,\end{eqnarray} 
 where $\NN^{(n)}$ replaces $\NN^{\otimes n}$ in Eq.~(\ref{mapping}) to model the input-output
 mapping of the first $n$ signals: 
\begin{eqnarray} 
\rho^{(n)} \mapsto \NN^{(n)} (\rho^{(n)})\; ,\label{mapping1} 
\end{eqnarray} 
where $\rho^{(n)} $ denotes the initial state of the first $n$ signals. Although the expressions of the various maps in ${\cal F}$ depend upon the specific details of the communication line, they must always satisfy the following consistency requirement, which ensures that future transmissions do not affect past signaling events: 
\begin{eqnarray}  \NN^{(n)} = {\mathcal T}_{n+1}\circ \NN^{(n+1)}\qquad \forall n\in\mathbb{N}\,, \end{eqnarray} 
where ${\cal T_{n+1}}$ represents the partial trace with respect to the $(n+1)$-th output signal. In this non-Markovian context, the $n$-shot capacities of the communication line can be 
formally identified as the one-shot capacities of the quantum channel $\NN^{(n)}$, i.e. 
\bb \label{defmem-ncap} 
    Q^{(\varepsilon,n)}\left({\cal F}\right)&\coloneqq Q^{(\varepsilon,1)}\!\left(\NN^{(n)}\right)\,, \\
      Q_2^{(\varepsilon,n)}\!\left({\cal F}\right)&\coloneqq Q_2^{(\varepsilon,1)}\!\left(\NN^{(n)}\right)\,,\\
     K^{(\varepsilon,n)}\!\left({\cal F}\right)&\coloneqq K^{(\varepsilon,1)}\left(\NN^{(n)}\right)\,.
\ee
Additionally, as before, the asymptotic capacities are defined in terms of the regularised version of the corresponding non-asymptotic capacities in the limit of
large $n$ and vanishing $\varepsilon$:
\bb
    Q\left({\cal F}\right)&\coloneqq
    \lim\limits_{\varepsilon \rightarrow 0^+} \liminf_{n \to \infty} \frac{Q^{(\varepsilon,n)}\left({\cal F}\right)}{n}\,,\\
    Q_2\left({\cal F}\right)&\coloneqq
    \lim\limits_{\varepsilon \rightarrow 0^+} \liminf_{n \to \infty} \frac{Q_2^{(\varepsilon,n)}\left({\cal F}\right)}{n}\,,\\
    K\left({\cal F}\right)&\coloneqq
    \lim\limits_{\varepsilon \rightarrow 0^+} \liminf_{n \to \infty} \frac{K^{(\varepsilon,n)}\left({\cal F}\right)}{n}\,.
\ee

\subsection{Pure loss channel} 
Let us briefly give some preliminaries about continuous-variable systems; for further details, we refer to Ref.~\cite{BUCCO}. A continuous-variable system with $n$ modes is characterised by the Hilbert space $L^2(\mathbb{R}^n)$, which consists of all square-integrable, complex-valued functions over $\mathbb{R}^n$~\cite{BUCCO}. In quantum optics, such systems are commonly used to describe the electromagnetic field~\cite{BUCCO, Caves}. By definition, a state is said to be \emph{Gaussian} if it is a Gibbs state of a quadratic hamiltonian in the position and momentum operators of the system~\cite{BUCCO}. Moreover, a \emph{bosonic 
Gaussian channel}~\cite{BUCCO} is a quantum channel that maps the set of Gaussian states into itself.

The most important Gaussian channel is arguably the \emph{pure loss channel}, a single-mode bosonic Gaussian channel that models the phenomenon of photon loss in memoryless optical links~\cite{BUCCO}. The pure loss channel, denoted as $\mathcal{E}_{\lambda}$, is characterised by a single parameter $\lambda\in[0,1]$, which represents the ratio between the output energy and the input energy of the transmitted signals. 
Mathematically, $\mathcal{E}_{\lambda}$ acts on an input single-mode state $\rho$ by mixing it in a beam splitter (i.e.~a semi-reflective mirror) with an environmental mode ${E}$ initialised in the vacuum state $|0\rangle_{E}$:
\bb
        \mathcal{E}_{\lambda}(\rho)&\coloneqq\Tr_E\left[U_\lambda^{SE} \big(\rho \otimes\ketbra{0}_{E}\big) (U_\lambda^{SE})^\dagger\right] \,,\label{deflossy} 
\ee
where  $U_{\lambda}\coloneqq\exp\!\left[\arccos\sqrt{\lambda}\left(a^\dagger b-a\, b^\dagger\right)\right]$ denotes the beam splitter unitary of transmissivity $\lambda$, with $a$ and $b$ being the annihilation operators of the  signal and of the environment, respectively, and where $\Tr_{E}$ denotes the partial trace with respect to the environment --- see Fig.~\ref{fig_pure_loss} for a pictorial representation.
\begin{figure}[h!]
	\centering
	\includegraphics[width=0.3\linewidth]{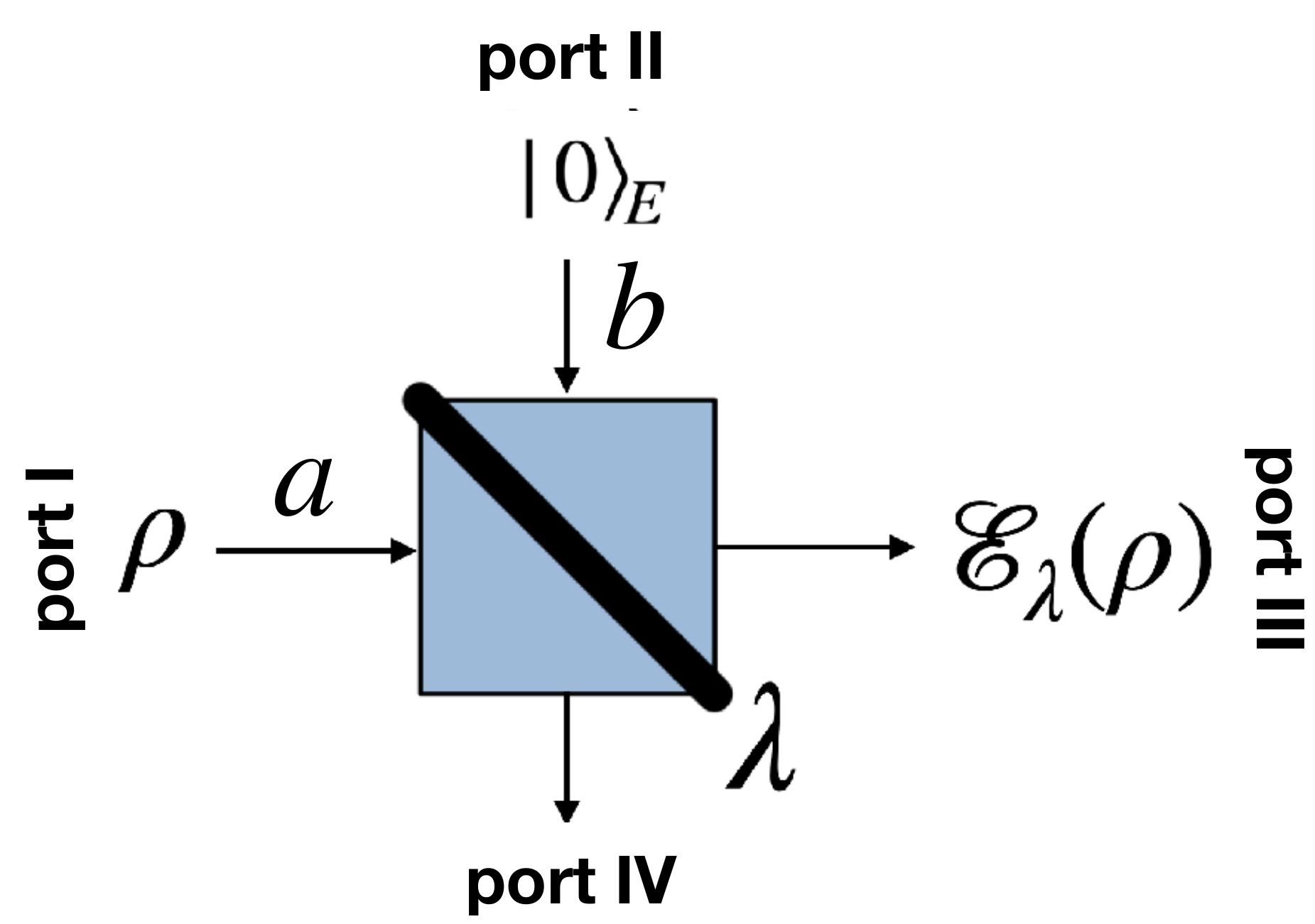}
	\caption{Schematic of the beam splitter representation~(\ref{deflossy}) of the pure loss channel 
	$\mathcal{E}_\lambda$. A beam splitter of a transmissivity $\lambda$ is a two-inputs/two-outputs device, that coherently mixes an incoming signal state $\rho$ entering from port I, with the vacuum state $|0\rangle_E$ of the environmental mode $E$, entering from port II. The transmitted signal emerges from port III in the final state  $\mathcal{E}_{\lambda}(\rho)$ obtained by taking the partial trace with respect to $E$ of the state $U_\lambda^{SE} \big(\rho \otimes\ketbra{0}_{E}\big) (U_\lambda^{SE})^\dagger$. The modified state of the environment after the interaction with the signal emerges instead from port IV.  }
    \label{fig_pure_loss}
\end{figure}

Note that the pure loss channel $\mathcal{E}_{\lambda}$ is noiseless for $\lambda=1$ (it equals the identity channel), while it is completely noisy for $\lambda=0$ (it maps any input state into the vacuum). It is also worth recalling that 
 the pure loss channel satisfies the following composition rule (see e.g.~\cite{Die-Hard-2-PRA}):
\bb 
\mathcal{E}_{\lambda_1}\circ \mathcal{E}_{\lambda_2}=\mathcal{E}_{\lambda_1\lambda_2}\qquad\forall\lambda_1,\lambda_2\in[0,1]\;. \label{concatenation}
\ee
Both the asymptotic and non-asymptotic capacities of the pure loss channel have been extensively analysed~\cite{holwer, Wolf2007, Wolf2006,PLOB,MMMM,mele2025achievableratesnonasymptoticbosonic}. Specifically, the (asymptotic) quantum capacity, two-way quantum capacity, and secret-key capacity of the pure loss channel are given by the following simple expressions:~\cite{holwer, Wolf2007, Wolf2006,PLOB}
 \bb\label{capacities_pure_loss_main}
    Q(\mathcal{E}_{\lambda})&=   
        \begin{cases}
        \log_2\!\left(\frac{\lambda}{1-\lambda}\right) &\text{if $\lambda\in(\frac{1}{2},1]$ ,} \\
        0 &\text{if $\lambda\in[0,\frac{1}{2}]$ ,}
    \end{cases} \\
    Q_2(\mathcal{E}_{\lambda})&=K(\mathcal{E}_{\lambda})= \log_2\!\left(\frac{1}{1-\lambda}\right)\,,
\ee
Moreover, given $\varepsilon\in(0,1)$ and $n\in\mathbb{N}$, the $n$-shot two-way quantum capacity and the $n$-shot secret-key capacity can be lower bounded as~\cite{mele2025achievableratesnonasymptoticbosonic}
\bb\label{ineq_best_bounds_main02}
K^{(\varepsilon,n)}(\mathcal{E}_{\lambda})&\ge Q_2^{(\varepsilon,n)}(\mathcal{E}_{\lambda}) \ge  n Q_2(\mathcal{E}_{\lambda})
- \log_2\!\left(\frac{2^{6}\,3\,(4-\sqrt{\varepsilon})^2}{(2-\sqrt{\varepsilon})\varepsilon^3 }\right)\,,
\ee
and upper bounded as~\cite{MMMM}
\bb\label{upper_bound_n_shotmain}
Q_2^{(\varepsilon,n)}(\mathcal{E}_\lambda)\le K^{(\varepsilon,n)}(\mathcal{E}_\lambda)\le n Q_2(\mathcal{E}_{\lambda})+\log_26+2\log_2\!\left(\frac{1+\varepsilon}{1-\varepsilon}\right)\,.
\ee
Additionally, the $n$-shot quantum capacity can be lower bounded as~\cite{mele2025achievableratesnonasymptoticbosonic}
\bb\label{ineq_best_bounds_main01}
    Q^{(\varepsilon,n)}(\mathcal{E}_{\lambda}) 
    &\geq  
    n Q(\mathcal{E}_{\lambda}) - \log_2\!\left( \frac{2^{23}(32-\varepsilon)^2}{ (16-\varepsilon)\varepsilon^6}\right)\;,
\ee
and, for $\varepsilon\in(0,\frac{1}{2})$, it can be upper bounded as
\bb
Q^{(\varepsilon,n)}(\mathcal{E}_\lambda)\le n \frac{Q(\mathcal{E}_{\lambda})}{(1-2\varepsilon)}-\varepsilon \log_2\varepsilon-(1-\varepsilon) \log_2(1-\varepsilon)\,,
\ee
where the latter follows from the fact that the pure loss channel is degradable, together with a general upper bound of the one-shot quantum capacity in terms of the coherent information~\cite[Corollary 14.4]{khatri2021secondorder}.

\subsection{Non-Markovian generalisation of the the pure loss channel} \label{Sec_memory}
In the absence of memory effects, an optical fibre is commonly modelled by the pure loss channel ${\mathcal E}_\lambda$, where the parameter $\lambda$ coincides with the effective transmissivity of the fibre. Specifically, the transmission of $n$ signals through the fibre is modelled as in Eq.~\eqref{mapping}, where the channel $\NN$ is identified with ${\mathcal E}_\lambda$. The first attempts to describe memory effects in optical fibres arose in Refs.~\cite{memory-review,Memory1,Memory2,Memory3}. Specifically, the models presented in Refs.~\cite{memory-review,Memory1,Memory2,Memory3} were based on a suitable non-Markovian generalisation of the pure loss channel~\cite{memory-review,Memory1,Memory2,Memory3}, which tried to describe memory effects by allowing for interferences between the state emerging from the environmental output port (i.e., port IV of Fig.~\ref{fig_pure_loss}) of the beam splitter associated with a given input signal and the states of the subsequent input signals~\cite{memory-review,Memory1,Memory2,Memory3}. However, as noted in Ref.~\cite{mele2023optical}, these models are not fully satisfactory as they fail to describe memory effects for long optical fibres; instead, they constitute a good model just for very short optical fibres.

To overcome these limitations, Ref.~\cite{mele2023optical} introduced a more refined model of optical fibres with memory effects. The peculiar feature of this model is that the interferences between signals occurs continuously along the entire fibre, making it a good model even for arbitrarily long fibres. In order to explain this more refined model, it is necessary to start from the following simple, provoking question: Why the pure loss channel is a consistent model of memoryless optical fibres?

At a first glance, one may be tempted to think that modelling a memoryless optical fibre with a pure loss channel is a too crude approximation, as it would mean to model a possibly very long line of glass (the fibre) via a single short mirror (the beam splitter). Following the same intuition, the pure loss channel appears to be a reasonable model when the fibre is very short (ideally infinitesimal), as the latter essentially becomes a small piece of glass, i.e.~a single beam splitter. To see why the pure loss channel $\mathcal{E}_\lambda$ is a good model even for long (memoryless) optical fibres, let us consider an optical fibre of transmissivity $\lambda$. Let us divide such fibre into $M$ segments, each of length $L/M$, where $L$ denotes the total lenght of the fibre. Since the transmissivity $\lambda$ is exponentially decreasing in the length $L$, each segment has a transmissivity of $\lambda^{1/M}$. As $M$ approaches infinity, each segment becomes infinitesimally short and thus can be modelled as a beam splitter of transmissivity $\lambda^{1/M}$, coupling the input signals traveling through the fibre with a (single-mode) localised environment initialised in the vacuum state (see Fig.~\ref{fig_memoryless_main} for a pictorial representation). Since the memoryless assumption guarantees that all the $M$ environments resets to the vacuum state after every channel use, it follows that the entire optical fibre can be modelled as the composition of $M\rightarrow\infty$ pure loss channels, each with transmissivity $\lambda^{1/M}$. By exploiting the composition rule in~\eqref{concatenation}, such a composition of $M$ pure loss channels turns out to be exactly equal to the pure loss channel~$\mathcal{E}_\lambda$. Hence, we conclude that the pure loss channel is a good model for \emph{memoryless} optical fibres.
\begin{figure}[h!]
	\centering
	\includegraphics[width=1\linewidth]{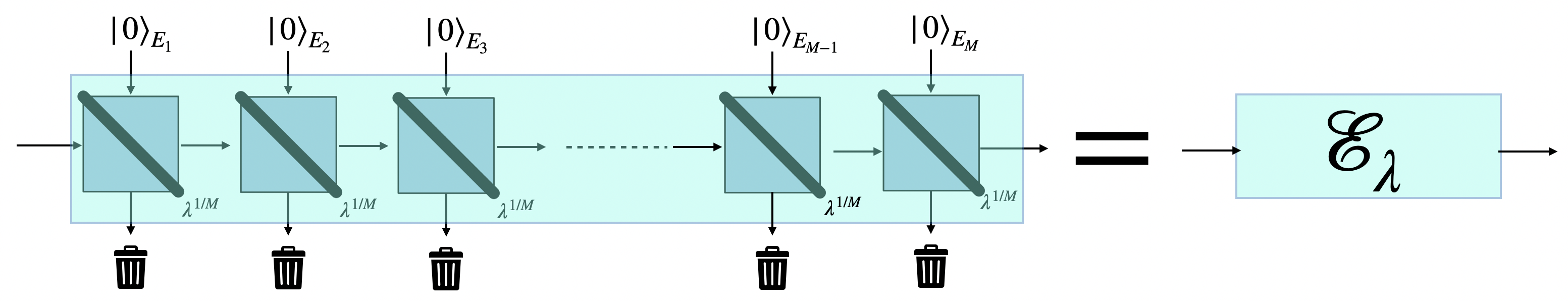}
	\caption{Pure loss channel $\mathcal{E}_\lambda$ as a model for a memoryless optical fibre with transmissivity $\lambda$. The fibre can be seen as the composition of $M$ optical fibres of transmissivity $\lambda^{1/M}$. In the limit $M\rightarrow\infty$, each of the $M$ infinitesimal optical fibres can be modelled as a pure loss channel of transmissivity $\lambda^{1/M}$. Due to the composition rule~(\ref{concatenation})  the entire optical fibre corresponds to a single pure loss channel of transmissivity $\lambda$.}
    \label{fig_memoryless_main}
\end{figure}

We are now ready to explain the model of optical fibre with memory effects introduced in Ref.~\cite{mele2023optical}.  The above-mentioned assumption that each environment resets to the vacuum state after every channel use is precisely what defines the memoryless (i.e.~Markovian) regime: the communication channel returns to its initial configuration after each use, meaning that all environments are restored to the vacuum. This resetting can be interpreted as a thermalisation process that counteracts the perturbation induced by the interaction between the input optical signals and the environment via the $M$ beam splitters. Under the memoryless assumption, this thermalisation process is perfect --- each environment is fully restored to the vacuum state after every channel use. However, when input signals arrive in rapid succession, thermalisation becomes imperfect, causing each new signal to interact with an $M$-mode environment that retains traces of previous signals instead of returning to the $M$-mode vacuum state. Since a pure loss channel provides an effective model for describing the thermalisation of a single-mode system undergoing energy dissipation~\cite{petruccione_book}, the authors of Ref.~\cite{mele2023optical}  model such a thermalisation process as a pure loss channel of transmissivity $\mu$ acting on each of the $M$ local environment after every channel use, as illustrated in Fig.~\ref{fig_memory_main}. When $\mu=0$, 
such a model reduces to the memoryless configuration described by the pure loss channel. As $\mu$ increases, the influence of previously transmitted signals on the optical link becomes more pronounced.

By summarising, the model of optical fibre with memory effects introduced in Ref.~\cite{mele2023optical} is characterised by two parameters: the transmissivity of the fibre $\lambda\in[0,1]$ and the memory parameter $\mu\in[0,1]$, which quantifies the extent of memory effects in the optical link.  In such a model, $n$ uses of the optical fibre are described by the $n$-mode Gaussian channel $\Phi^{(n)}_{\lambda,\mu}$ whose interferometric representation is depicted in Fig.~\ref{fig_memory_main} (for a precise mathematical definition of $\Phi^{(n)}_{\lambda,\mu}$, see Ref.~\cite{mele2023optical}). Using the notation introduced in Eq.~\eqref{family}, the communication line associated with this model turns out to be represended by the family of quantum channels 
\bb
	 {\cal F}_{\lambda,\mu}\coloneqq\{ \Phi^{(1)}_{\lambda,\mu},\Phi^{(2)}_{\lambda,\mu}, \cdots, \Phi^{(n)}_{\lambda,\mu}, \cdots\}\;,
\ee
where the channel $\Phi^{(n)}_{\lambda,\mu}$ evolves the first $n$ input signals. Finally, note that when the memory parameter vanishes ($\mu=0$), such a channel reduces to the memoryless configuration $\Phi^{(n)}_{\lambda,0}=\mathcal{E}_\lambda^{\otimes n}$ described by the pure loss channel  $\mathcal{E}_\lambda$ --- in this sense, such a model constitutes a non-Markovian generalisation of the pure loss channel.

\begin{figure}[h!]
	\centering
	\includegraphics[width=1\linewidth]{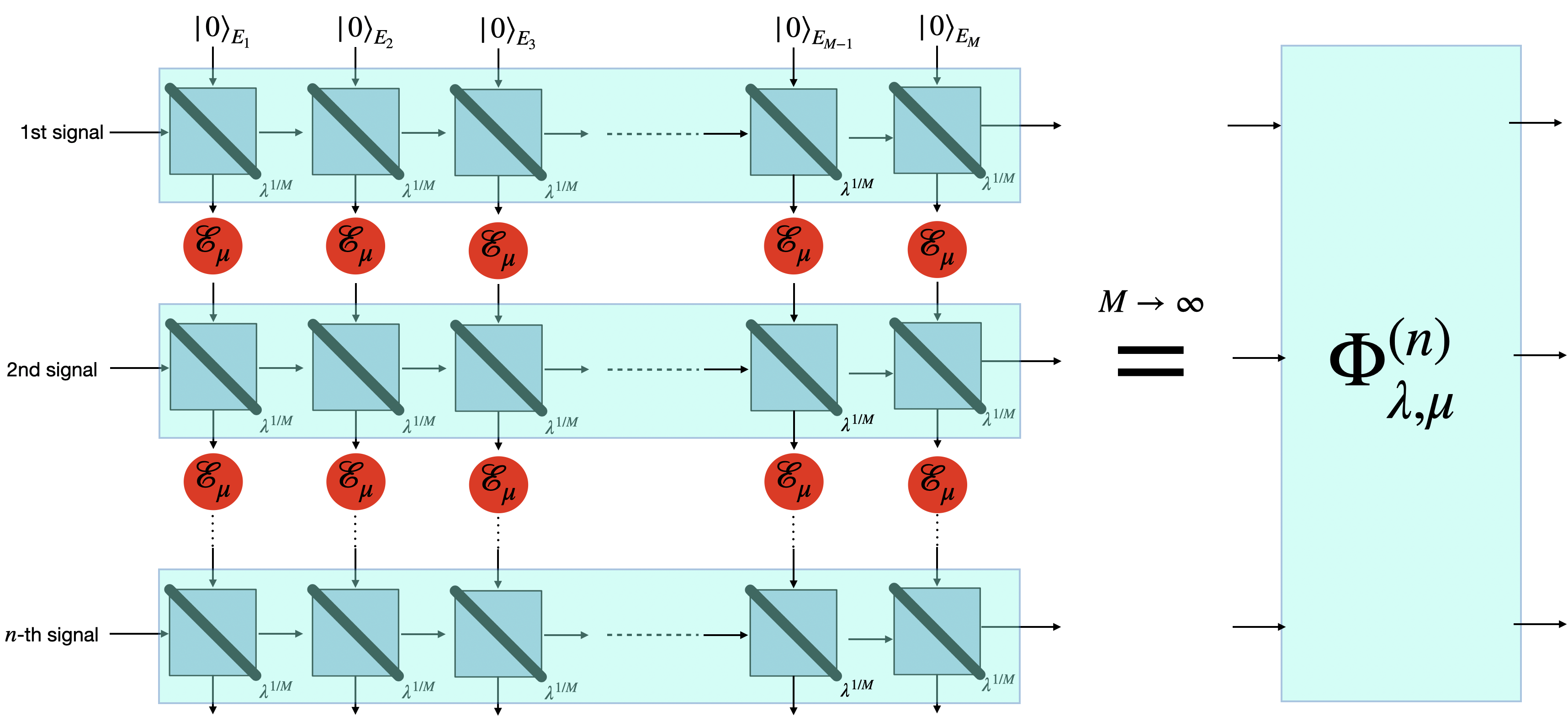}
	\caption{Interferometric representation of the
		the $n$-mode bosonic Gaussian channel $\Phi^{(n)}_{\lambda,\mu}$ used in 
		Ref.~\cite{mele2023optical} to describe memory effects on an optical fibre.	The fibre can be seen as the composition of $M$ optical fibres of transmissivity $\lambda^{1/M}$. In the limit $M\rightarrow\infty$, each of the $M$ infinitesimal optical fibres can be modelled as a beam splitter of transmissivity $\lambda^{1/M}$, which couples the input optical signal with a dedicated single-mode environment. After each use of the fibre, the $M$ environments undergo a thermalisation process, modelled by the pure loss channel $\mathcal{E}_\mu$, where the parameter $\mu$  is referred to as the memory parameter.  }
	\label{fig_memory_main}
\end{figure}

The asymptotic capacities of this model have been exactly determined~\cite{mele2023optical}. Specifically, the quantum capacity $Q(\lambda,\mu)\coloneqq Q({\cal F}_{\lambda,\mu})$, the two-way quantum capacity $Q_2(\lambda,\mu)\coloneqq Q_2({\cal F}_{\lambda,\mu})$, and the secret-key capacity $K(\lambda,\mu)\coloneqq K({\cal F}_{\lambda,\mu})$ read~\cite{mele2023optical}
\bb\label{capacities_formula_diehard3_main}
    Q({\lambda,\mu})=\int_{0}^{2\pi}\frac{\mathrm{d}\theta}{2\pi}Q(\mathcal{E}_{\eta_{\lambda,\mu}(\theta)})\,,\qquad \qquad 
  Q_2({\lambda,\mu}) =
  K({\lambda,\mu}) =\int_{0}^{2\pi}\frac{\mathrm{d}\theta}{2\pi}Q_2(\mathcal{E}_{\eta_{\lambda,\mu}(\theta)})\,,
\ee
where $Q(\mathcal{E}_{\eta_{\lambda,\mu}(\theta)})$ and $Q_2(\mathcal{E}_{\eta_{\lambda,\mu}(\theta)})$ are the capacities reported in Eq.~(\ref{capacities_pure_loss_main}) of the pure loss channel $\mathcal{E}_{\eta_{\lambda,\mu}(\theta)}$ of transmissivity $\eta_{\lambda,\mu}(\theta)$, where
\bb\label{eff_tr_fu}
\eta_{\lambda,\mu}(\theta)\coloneqq \lambda^{\frac{1-\mu}{1-2\sqrt{\mu}\cos(\theta)+\mu}}\qquad\forall\theta\in[0,2\pi]\;.
 \ee
Since these capacities are monotonically increasing in the memory parameter $\mu$~\cite{mele2023optical}, it follows that memory effects improve the communication capabilities of optical fibres. Additionally, by exploiting Eq.~\eqref{capacities_formula_diehard3_main} and Eq.~\eqref{capacities_pure_loss_main},  it easily follows that the quantum capacity $Q({\lambda,\mu})$ vanishes if and only if  $\lambda^{\frac{1-\sqrt \mu}{1+\sqrt\mu}}\le \frac12$. This implies the following remarkable phenomenon~\cite{mele2023optical}: for any arbitrarily small non-zero value of the transmissivity $\lambda$, if the memory parameter $\mu$ is sufficiently large (so that the condition $\lambda^{\frac{1-\sqrt \mu}{1+\sqrt\mu}}>\frac{1}{2}$ is satisfied), then qubit distribution across an optical fibre of transmissivity $\lambda$ becomes achievable. This is remarkable because, in the absence of memory effects, it is impossible to perform qubit distribution across an optical fibre of transmissivity $\le\frac{1}{2}$ (as its quantum capacity vanishes).

\section{Non-asymptotic quantum communication across optical fibres with memory effects}\label{Sec_memory_main}
According to Eq.~\eqref{defmem-ncap}, the $n$-shot capacities of the model of optical fibres with memory effects of Ref.~\cite{mele2023optical} are defined in terms of the one-shot capacities of the $n$-mode Gaussian channel $\Phi^{(n)}_{\lambda,\mu}$ as 
\bb
  Q^{(\varepsilon,n)}({\lambda,\mu}) 
  \coloneqq Q^{(\varepsilon,1)}\left(\Phi^{(n)}_{\lambda,\mu}\right)\,, \qquad 
    Q_2^{(\varepsilon,n)}({\lambda,\mu})\coloneqq Q_2^{(\varepsilon,1)}\left(\Phi^{(n)}_{\lambda,\mu}\right)\,,\qquad    
    K^{(\varepsilon,n)}({\lambda,\mu})\coloneqq K^{(\varepsilon,1)}\left(\Phi^{(n)}_{\lambda,\mu}\right)\,.
\ee
The goal of this paper is to obtain easily computable lower bounds on these quantities. This is important as it allows one to answer the following practical question: How many uses of an optical fibre of transmissivity $\lambda$ and memory parameter $\mu$ are sufficient to transmit $k$ qubits, distil $k$ ebits, or generate $k$ secret-key bits within a given error tolerance $\varepsilon$? We derive such bounds on the non-asymptotic capacities in the following theorem, proved in Theorems~\ref{sm_thm_q_mem}-\ref{sm_thm_k_mem} in the Appendix.
\begin{thm}[(Non-asymptotic quantum communication across an optical fibre with memory effects)]\label{thm_main_memoryq}
For all $n\ge4$ and all $\lambda,\mu,\varepsilon\in(0,1)$, the $n$-shot capacities of the above non-Markovian model of optical fibre with transmissivity $\lambda$ and the memory parameter $\mu$ can be lower bounded as:
\bb\label{ineq_best_bounds_main01112}
    Q^{(\varepsilon,n)}({\lambda,\mu})\ge nQ({\lambda,\mu})-\sqrt{n}C(\lambda,\mu)-\log_2\!\left( \frac{2^{23}(32-\varepsilon)^2}{ (16-\varepsilon)\varepsilon^6}\right)\,,
\ee
\bb\label{ineq_best_bounds_main0133}
    K^{(\varepsilon,n)}({\lambda,\mu})\ge Q_2^{(\varepsilon,n)}({\lambda,\mu})\ge n Q_2({\lambda,\mu})-\sqrt{n}C_2(\lambda,\mu)-\log_2\!\left(\frac{2^{6}\,3\,(4-\sqrt{\varepsilon})^2}{(2-\sqrt{\varepsilon})\varepsilon^3 }\right)\,,
\ee
where $Q({\lambda,\mu})$ and $Q_2({\lambda,\mu})$ are the (asymptotic) capacities of the model reported in Eq.~\eqref{capacities_formula_diehard3_main}, and 
\bb             
    C(\lambda,\mu)&\coloneqq
 \sqrt 8\lambda^{\frac{1-\sqrt \mu}{1+\sqrt\mu}} \sqrt {\mu} \ln\!\left(\frac{1}{\lambda}\right) \frac{1+\sqrt \mu \ln\!\left(\frac{1}{\lambda}\right) +\mu}{(1-\sqrt \mu)^4}+\frac{4(2\pi)^{3/2}(\log_2e)}{1-\lambda^{\frac{1-\sqrt \mu}{1+\sqrt\mu}}}+8\log_2\!\left(\frac{\lambda^{\frac{1-\sqrt \mu}{1+\sqrt\mu}}}{1-\lambda^{\frac{1-\sqrt \mu}{1+\sqrt\mu}}}\right)\,,\\    
C_2(\lambda,\mu)&\coloneqq
 \sqrt 8\lambda^{\frac{1-\sqrt \mu}{1+\sqrt\mu}} \sqrt {\mu} \ln\!\left(\frac{1}{\lambda}\right) \frac{1+\sqrt \mu \ln\!\left(\frac{1}{\lambda}\right) +\mu}{(1-\sqrt \mu)^4}+4(2\pi)^{3/2}(\log_2e)\frac{\lambda^{\frac{1-\sqrt \mu}{1+\sqrt\mu}}}{1-\lambda^{\frac{1-\sqrt \mu}{1+\sqrt\mu}}}+8\log_2\!\left(\frac{1}{1-\lambda^{\frac{1-\sqrt \mu}{1+\sqrt\mu}}}\right)\,.
 \ee
Finally, the lower bound in Eq.~\eqref{ineq_best_bounds_main01112} is trivial if $\lambda^{\frac{1-\sqrt \mu}{1+\sqrt\mu}}\le \frac12$, as the quantum capacity $Q({\lambda,\mu})$ vanishes in such a parameter region.
\end{thm}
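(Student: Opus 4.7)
The plan is to reduce the $n$-shot capacities of $\mathcal{F}_{\lambda,\mu}$ to the one-shot capacities of a tensor product of independent pure loss channels, and then to pass quantitatively from the resulting sum to the integral appearing in~\eqref{capacities_formula_diehard3_main}. Concretely, the interferometric representation of Fig.~\ref{fig_memory_main} is translation-invariant along the temporal axis, so passive Gaussian unitaries $U_{\text{in}}$ and $V_{\text{out}}$ (essentially discrete Fourier transforms) should bring the channel into the form
\bb
V_{\text{out}}\circ\Phi^{(n)}_{\lambda,\mu}\circ U_{\text{in}}=\bigotimes_{k=1}^n \mathcal{E}_{\eta^{(n)}_k}\,,
\ee
where $\{\eta^{(n)}_k\}_{k=1}^n$ are the squared singular values of an $n\times n$ Toeplitz matrix with symbol $\sqrt{\eta_{\lambda,\mu}(\theta)}$, $\eta_{\lambda,\mu}$ being as in~\eqref{eff_tr_fu}. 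Since non-asymptotic capacities are invariant under Gaussian unitary pre- and post-processing, it suffices to lower-bound the one-shot capacities of the right-hand side.

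The one-shot bounds~\eqref{ineq_best_bounds_main02} and~\eqref{ineq_best_bounds_main01} of Ref.~\cite{mele2025achievableratesnonasymptoticbosonic} rely on single-mode code constructions whose rates tensorise additively across modes, so they extend to tensor products of pure loss channels of varying transmissivities and give
\bb
Q^{(\varepsilon,1)}\!\left(\bigotimes_{k=1}^n \mathcal{E}_{\eta^{(n)}_k}\right)\ge \sum_{k=1}^n Q(\mathcal{E}_{\eta^{(n)}_k})-\log_2\!\left( \frac{2^{23}(32-\varepsilon)^2}{ (16-\varepsilon)\varepsilon^6}\right)\,,
\ee
and analogously for $Q_2^{(\varepsilon,1)}$ and $K^{(\varepsilon,1)}$ with their own additive constants (which match exactly the constants in the statement). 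The central task is then to prove a quantitative Avram--Parter estimate
\bb
\left|\sum_{k=1}^n F(\eta^{(n)}_k)-n\int_0^{2\pi}F(\eta_{\lambda,\mu}(\theta))\,\frac{d\theta}{2\pi}\right|\le \sqrt n\,C_F(\lambda,\mu)
\ee
for $F\in\{Q\circ\mathcal{E}_\bullet,\,Q_2\circ\mathcal{E}_\bullet\}$, with $C_F$ equal to $C(\lambda,\mu)$ or $C_2(\lambda,\mu)$ respectively. I would adopt the strategy of writing $F=F_{\text{reg}}+F_{\text{sing}}$, handling the Lipschitz piece by Szeg\"o-type Riemann-sum bounds controlled by the modulus of continuity of the symbol $\eta_{\lambda,\mu}$ (producing the first summand of $C$ and $C_2$, whose $\sqrt\mu$ prefactor tracks the strength of the memory), and absorbing the non-smoothness of $F$ at the critical level $\eta=\tfrac12$ (for $Q$) or $\eta=1$ (for $Q_2$) into $F_{\text{sing}}$, whose error contribution is controlled by counting how many singular values of the Toeplitz matrix fall inside an $O(1/\sqrt n)$ window around the critical level.

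The main obstacle is precisely this last step. Classical Avram--Parter is only qualitative, and because the symbol $\eta_{\lambda,\mu}(\theta)$ varies continuously through the critical threshold of $F$, no convergence rate better than $O(\sqrt n)$ can generically be expected; delivering an \emph{explicit} constant of the form displayed in the theorem requires sharp anti-concentration estimates for the singular values of $T_n(\sqrt{\eta_{\lambda,\mu}})$, tracked as functions of $\lambda$ and $\mu$ through the factors $(1-\sqrt\mu)^{-4}$ and $(1-\lambda^{(1-\sqrt\mu)/(1+\sqrt\mu)})^{-1}$ visible in $C(\lambda,\mu)$ and $C_2(\lambda,\mu)$. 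Once the quantitative Avram--Parter bound is secured, combining the three steps and invoking the definition~\eqref{defmem-ncap} directly yields~\eqref{ineq_best_bounds_main01112} and~\eqref{ineq_best_bounds_main0133}; the triviality claim for $\lambda^{(1-\sqrt\mu)/(1+\sqrt\mu)}\le \tfrac12$ is then immediate from the vanishing of $Q(\lambda,\mu)$ in that regime, already noted after~\eqref{eff_tr_fu}.
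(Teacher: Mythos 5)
Your first two steps match the paper: Lemma~\ref{lemma_unitarily} gives the unitary reduction of $\Phi^{(n)}_{\lambda,\mu}$ to $\bigotimes_i\mathcal{E}_{\eta_i^{(n,\lambda,\mu)}}$ with the $\eta_i$'s the squared singular values of a Toeplitz matrix, and the additive one-shot bounds for products of distinct pure loss channels are exactly Lemma~\ref{thm_l_rel_non_iid22_sm}, obtained from the additivity of relative entropy and relative entropy variance. The problem is the third step, which you yourself flag as ``the main obstacle'' and do not resolve: the entire content of the theorem is the \emph{explicit} quantitative Avram--Parter estimate, and your sketched route to it is both incomplete and misdirected. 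You propose splitting $F=F_{\mathrm{reg}}+F_{\mathrm{sing}}$ and controlling the singular piece near the critical level $\eta=\tfrac12$ via anti-concentration of the Toeplitz singular values. No such anti-concentration argument is needed, because $q(\eta)=Q(\mathcal{E}_\eta)$ is \emph{continuous and Lipschitz} (not merely continuous) across $\eta=\tfrac12$ --- it has a kink there, not a jump --- and, crucially, Theorem~\ref{th:norm_toeplitz_symbol} with $p=\infty$ gives the a priori bound $\sigma_{\max}(T_n(f_{\lambda,\mu}))\le\|f_{\lambda,\mu}\|_\infty=\lambda^{\frac{1-\sqrt\mu}{2(1+\sqrt\mu)}}<1$, so the test functions $x\mapsto q(x^2)$ and $x\mapsto k(x^2)$ restricted to the range actually visited by the singular values admit globally Lipschitz, compactly supported extensions with explicit Lipschitz constants $L$. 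A single error bound for Lipschitz test functions then suffices for both capacities; your claim that ``no convergence rate better than $O(\sqrt n)$ can generically be expected'' because the symbol crosses the threshold is also belied by the paper, which obtains $O(n^{3/7})$ for the sum (with $k=2$) and only loosens it to $\sqrt n$ at the end for readability.

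What is actually missing from your proposal is the proof of that quantitative Avram--Parter bound (Theorem~\ref{th:final_estimation_lip_main0}). ``Szeg\H{o}-type Riemann-sum bounds controlled by the modulus of continuity of the symbol'' is not a proof: the paper's argument requires (a) truncating the Fourier series of the symbol $f_{\lambda,\mu}$ at degree $N$ and controlling $\|f-f_N\|_2$ via $\|f^{(k)}\|_2/N^k$, (b) replacing the banded Toeplitz matrix $T_n(f_N)$ by the circulant $C_n(f_N)$, a rank-$O(N)$ perturbation handled through the singular-value interlacing theorem and the monotone decomposition of $F$, and (c) a rectangle-rule quadrature estimate for $F(|f_N|)$, followed by optimising $N\sim n^{1/(k+3/2)}$. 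The explicit constants $C(\lambda,\mu)$ and $C_2(\lambda,\mu)$ then arise from computing $\|f''_{\lambda,\mu}\|_2$ (source of the $(1-\sqrt\mu)^{-4}$ factor), $\|f_{\lambda,\mu}\|_2$, $\|F\|_\infty$ and $\|F'\|_1$ for the two extensions --- not from anti-concentration of singular values in an $O(1/\sqrt n)$ window, which your proposal neither constructs nor shows how to obtain. As written, the proposal identifies the right reduction but leaves the theorem's quantitative core unproved.
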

The above theorem demonstrates that $n$ uses of an optical fibre with transmissivity $\lambda$ and memory parameter $\mu$ are sufficient to transmit, up to an error $\varepsilon$, the number of qubits specified on the right-hand side of Eq.~\eqref{ineq_best_bounds_main01112} and the number of ebits (and hence of secret-key bits) specified on the right-hand side of Eq.~\eqref{ineq_best_bounds_main0133}.  Hence, Theorem~\ref{thm_main_memoryq} provides a method to determine how many uses of an optical fibre of transmissivity $\lambda$ and memory parameter $\mu$ suffice to transmit a desired number $k$ of qubits, ebits, or secret-key bits up to a given error tolerance $\varepsilon$. Indeed, it suffices to impose that the lower bounds in Eq.~\eqref{ineq_best_bounds_main01112} and Eq.~\eqref{ineq_best_bounds_main0133} are larger than $k$, and solve the corresponding inequalities with respect to $n$.  The simplicity of these lower bounds makes it possible to solve these inequalities by solving a simple quadratic inequality.

Now, let us briefly give the rough idea behind the proof of the above Thereom~\ref{thm_main_memoryq}. In Ref.~\cite{mele2023optical} it has been shown that the $n$-mode Gaussian channel $\Phi^{(n)}_{\lambda,\mu}$ is unitarily equivalent to a tensor product of $n$ pure loss channels, i.e.~there exist unitary channels $\mathcal{U} $ and $\mathcal{V}$ and transmissivities  $\{\eta^{(n,\lambda,\mu)}_i\}_{i=1,2,\ldots,n}$ such that
\bb 
\Phi^{(n)}_{\lambda,\mu}= \mathcal{U}\circ\left(\bigotimes_{i=1}^n\mathcal{E}_{\eta^{(n,\lambda,\mu)}_i}\right)\circ \mathcal{V}\,.
\ee
Consequently, by exploiting the fact that capacities are invariant under unitary transformations, the asymptotic capacities of optical fibres with memory effects are characterised by the asymptotic behaviour for $n\rightarrow\infty$ of the transmissivities $\{\eta^{(n,\lambda,\mu)}_i\}_{i=1,2,\ldots,n}$. Crucially, as shown Ref.~\cite{mele2023optical}, these transmissivities can be retrieved by calculating the singular values of a suitable \emph{Toeplitz matrix}~\cite{Szego1920, GRENADER,Avram1988,Parter1986}. By definition, a Toeplitz matrix is characterised by the property that its element in position $(i,j)$ depends only on the difference $i-j$ for all $i,j$, meaning that all entries along each diagonal are constant. Notably, the asymptotic behaviour of the singular values of a Toeplitz matrix can be analytically understood thanks to the celebrated \emph{Avram--Parter theorem}~\cite{Avram1988,Parter1986}:
\begin{thm}[(\emph{Avram--Parter's theorem}~\cite{Avram1988,Parter1986})]\label{Avram--Partermain}
	Let $\{a_l\}_{l\in\mathbb{Z}}$ be a sequence of real numbers. For all $n\in\N$ let $T^{(n)}$ be the $n\times n$ matrix with elements $T^{(n)}_{k,j}\coloneqq a_{k-j}$ for all $k,j\in\{1,2,\ldots,n\}$. Let $\{s^{(n)}_j\}_{j=1,2,\ldots,n}$ be the singular values of the matrix $T^{(n)}$ ordered in increasing order in $j$. Assume that the function
	$s:[0,2\pi]\to \C$, defined by 
	\bb\label{asymptotic_distribution0main}
	s(x)\coloneqq  \sum_{l=-\infty}^{+\infty}a_l\,e^{ilx} \,\qquad\forall \,x\in[0,2\pi]\,,
	\ee
	is bounded. Then for all continuous functions $F:\mathbb{R}\to\mathbb{R}$ with bounded support it holds that
	\bb
	\lim\limits_{n\rightarrow\infty}\frac{1}{n}\sum_{j=1}^nF\!\left(s^{(n)}_j\right)=\int_{0}^{2\pi}\frac{\mathrm{d}x}{2\pi}F\!\left(|s(x)|\right)\,.
	\ee
\end{thm}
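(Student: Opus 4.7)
The strategy is to reduce the statement to a trace-moment identity by Weierstrass approximation, and then to prove that identity by replacing $T^{(n)}$ with a circulant matrix whose eigenvalues are a Riemann sample of the symbol. Since $s$ is assumed bounded, a standard Fourier-analytic estimate yields $\|T^{(n)}\|_\infty \le \|s\|_\infty \eqqcolon M$ for every $n$, so all singular values $s_j^{(n)}$ lie in the compact interval $[0,M]$. The Weierstrass approximation theorem then lets one approximate any continuous $F$ with bounded support uniformly on $[0,M]$ by polynomials, and the nonnegativity of the singular values further lets one restrict to even monomials. Thus it suffices to prove, for every $k\in\mathbb{N}$,
\begin{equation*}
\frac{1}{n}\,\mathrm{tr}\!\left[(T^{(n)*}T^{(n)})^k\right] \,\xrightarrow[n\to\infty]{}\, \int_0^{2\pi}\frac{dx}{2\pi}\,|s(x)|^{2k}.
\end{equation*}

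The key step is the comparison with the circulant matrix $C^{(n)}$ of size $n$ whose eigenvalues are exactly $\{s(2\pi\ell/n)\}_{\ell=0}^{n-1}$. When $s$ is a trigonometric polynomial of degree $L<n$, direct inspection shows that $T^{(n)}$ and $C^{(n)}$ coincide outside two triangular corner blocks of size at most $L$, so $\|T^{(n)}-C^{(n)}\|_F^2$ is bounded uniformly in $n$ and hence $\|T^{(n)}-C^{(n)}\|_F/\sqrt n\to 0$. Combining the telescoping identity $A^k-B^k = \sum_{j=0}^{k-1} A^j(A-B)B^{k-1-j}$ with the dual trace inequality $|\mathrm{tr}(XY)|\le \|X\|_1\|Y\|_\infty$ and the Cauchy--Schwarz bound $\|X\|_1\le \sqrt n\,\|X\|_F$, one arrives at
\begin{equation*}
\frac{1}{n}\bigl|\mathrm{tr}[(T^{(n)*}T^{(n)})^k]-\mathrm{tr}[(C^{(n)*}C^{(n)})^k]\bigr| \,\le\, \frac{c_k\,M^{2k-1}\,\|T^{(n)}-C^{(n)}\|_F}{\sqrt n} \,\xrightarrow[n\to\infty]{}\, 0,
\end{equation*}
with $c_k$ a purely combinatorial constant. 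Since $C^{(n)}$ is circulant, $\mathrm{tr}[(C^{(n)*}C^{(n)})^k] = \sum_{\ell=0}^{n-1} |s(2\pi\ell/n)|^{2k}$, so the normalized trace is a Riemann sum converging to $\int_0^{2\pi}\frac{dx}{2\pi}\,|s(x)|^{2k}$, which establishes the moment identity.

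\textbf{Main obstacle.} The principal technical difficulty is extending the comparison from trigonometric-polynomial symbols to general bounded ones. For the former, the $O(1)$ corner correction is transparent; for the latter, one must approximate $s$ in $L^2$ by Fej\'er partial sums $s_N$, exploit the bound $\|T^{(n)}(f)\|_F \le \sqrt n\,\|f\|_{L^2}$ to control the residual Toeplitz matrix $T^{(n)}(s-s_N)$ in Frobenius norm, and then interchange the double limit $n\to\infty$, $N\to\infty$ while preserving the $\|\cdot\|_F/\sqrt n\to 0$ estimate. This truncation-and-limit bookkeeping, together with the initial Weierstrass reduction on $F$, is the delicate but entirely classical Fourier-analytic part of the argument; no ingredient beyond standard Szeg\H{o}-type techniques is required for the qualitative statement. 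Obtaining the \emph{quantitative} rate that the remainder of the paper relies on, by contrast, will demand a much finer analysis of each of these approximation steps.
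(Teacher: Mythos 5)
The paper never actually proves Theorem~\ref{Avram--Partermain}: it is imported from the literature with citations, and the only related argument in the paper is the proof of the quantitative error bound (Theorem~\ref{th:final_estimation_lip_main0}, proved as Theorem~\ref{th:final_estimation_lip} in the appendix), from which the qualitative statement is recovered in the limit $n\to\infty$ only under the stronger hypotheses $s\in C^k_{per}$ with $k\ge1$ and $F$ Lipschitz. Your argument is therefore a genuinely different, and in fact more classical, route. You reduce to the trace moments $\frac1n\Tr[(T^{(n)*}T^{(n)})^k]$ via Weierstrass approximation of $F$ by polynomials in $x^2$ on $[0,\|s\|_\infty]$, compare the Toeplitz matrix with the wrapped circulant by telescoping plus H\"older and the bound $\|X\|_1\le\sqrt n\,\|X\|_F$, and then pass from trigonometric-polynomial symbols to general bounded ones by Fej\'er truncation combined with $\|T_n(f)\|_2\le\sqrt{n/2\pi}\,\|f\|_2$. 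The paper's quantitative proof shares the skeleton (truncate the symbol, compare with a circulant, treat the circulant sum as a quadrature rule), but it never passes through moments: it controls $\frac1n\sum_i F(\sigma_i(A_n))-\frac1n\sum_i F(\sigma_i(B_n))$ directly for Lipschitz $F$ via the Schatten-norm perturbation inequality of Theorem~\ref{th:Lip_test_function} and, for the low-rank circulant correction, via singular-value interlacing (Lemma~\ref{lem:low_rank_perturbation_Lip}). That direct control is precisely what makes the paper's rates explicit; your Weierstrass step forfeits any rate because the degree of the approximating polynomial is uncontrolled, but in exchange your hypotheses (merely bounded symbol, merely continuous $F$) match the actual Avram--Parter statement, which the paper's quantitative theorem does not cover. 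Your proof is correct: the two points you compress --- keeping the operator-norm constant $M$ uniform under Fej\'er summation (so that $\|s_N\|_\infty\le\|s\|_\infty$ and $\|C_n(s_N)\|\le M$), and the interchange of the limits in $n$ and $N$ --- are standard and you flag them honestly.
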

This theorem was exploited in Ref.~\cite{mele2023optical} to show that the  asymptotic behaviour of the transmissivities $\{\eta^{(n,\lambda,\mu)}_i\}_{i=1,2,\ldots,n}$ is determined by the function $\eta_{\lambda,\mu}$ reported in \eqref{eff_tr_fu}, i.e.~it roughly holds that $\eta^{(n,\lambda,\mu)}_{n\frac{\theta}{2\pi}}\xrightarrow{n\rightarrow\infty} \eta_{\lambda,\mu}(\theta)$ for all $\theta\in(0,2\pi]$. This observation was the main tool of Ref.~\cite{mele2023optical} to find the closed expressions of the asymptotic capacities reported in \eqref{capacities_formula_diehard3_main}. Instead, in order to obtain easily computable lower bounds on the $n$-shot capacities of optical fibres with memory effects, the Avram--Parter's theorem alone is insufficient, as it only provides tools for analysing the asymptotic behaviour of Toeplitz matrices (as it inherently regards the asymptotic limit $n\rightarrow\infty$). Therefore, it becomes necessary to establish explicit error bounds for the Avram--Parter's theorem. We do this in the following theorem, which is of independent interest.

\begin{thm}[(Error bound on the Avram--Parter's theorem)]\label{th:final_estimation_lip_main0}
	Let $\{a_l\}_{l\in\mathbb{Z}}$ be a sequence of real numbers. For all $n\in\N$ let $T^{(n)}$ be the $n\times n$ matrix with elements $T^{(n)}_{k,j}\coloneqq a_{k-j}$ for all $k,j\in\{1,2,\ldots,n\}$. Let $\{s^{(n)}_j\}_{j=1,2,\ldots,n}$ be the singular values of the matrix $T^{(n)}$ ordered in increasing order in $j$. Let us define the function
	$s:\mathbb{R}\to \C$ as
	\bb\label{asymptotic_distribution00}
	s(x)\coloneqq \sum_{l=-\infty}^{+\infty}a_l\,e^{ilx}  \qquad\forall \,x\in\mathbb{R}\,.
	\ee
	Let $F:\mathbb{R}\to\mathbb{C}$ be a continuous complex-valued function with bounded support, and assume that $F$ is Lipschitz with Lipschitz constant equal to $L$. Then, for all $n\ge4 $ and all $k\in\mathbb{N}$ such that the $k$-th derivative of $s$ exists and is continuous, it holds that
	\bb\label{key_eq_er_bo}
	\left| \frac{1}{n}\sum_{j=1}^nF\!\left(s_j^{(n)}\right)-\int_{0}^{2\pi}\frac{\mathrm{d}x}{2\pi}F\left(|s(x)|\right) \right|\le  \left(
	\frac {2^{k+1}\|s^{(k)}\|_2L}{\sqrt{2\pi}}  
	+    
	4\pi L \|s\|_2
	\right)
	\frac1{n^{\frac {k}{k+3/2}}}
	+ 
	(2\|F'\|_1 
	+ 
	4\|F\|_\infty)
	\frac{1}{n^{\frac {k+1/2}{k+3/2}}}\,,
	\ee
	where $s^{(k)}$ denotes the $k$-th derivative of $s$, $F'$ denotes the first derivative of $F$, and 
	\bb 
	\|s^{(k)}\|_2&\coloneqq \sqrt{\int_{0}^{2\pi}\mathrm{d}x|s^{(k)}(x)|^2}\,,\\
	\|s\|_2&\coloneqq \sqrt{\int_{0}^{2\pi}\mathrm{d}x|s(x)|^2}\,,\\
	\|F'\|_1&\coloneqq \int_{-\infty}^{\infty}\mathrm{d}x|F'(x)|\,,\\
	\|F\|_\infty&\coloneqq \sup_{x\in\mathbb{R}}|F(x)|\,.
	\ee
\end{thm}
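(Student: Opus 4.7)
The plan is to interpolate between the Toeplitz matrix $T^{(n)}$ and a circulant matrix with explicitly known singular values, via a Fourier truncation parameter $N \in \N$ (to be optimized at the end). Specifically, set $s_N(x) := \sum_{|l| \le N} a_l\, e^{ilx}$, let $T_N^{(n)}$ be the banded Toeplitz matrix that agrees with $T^{(n)}$ in the band $|k-j| \le N$ and vanishes elsewhere, and let $C_N^{(n)}$ be the circulant matrix with symbol $s_N$; since circulants are diagonalized by the discrete Fourier transform, the singular values of $C_N^{(n)}$ are exactly the $n$ numbers $|s_N(2\pi j/n)|$ for $j=1,\dots,n$. I would then split the quantity to be bounded, by the triangle inequality, into four pieces $A_1+A_2+A_3+A_4$ corresponding to the successive replacements $T^{(n)} \to T_N^{(n)} \to C_N^{(n)}$, followed by passing from the Riemann sum of $F\circ|s_N|$ to its integral, and finally from $s_N$ back to $s$ inside the integral.

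For the Toeplitz-truncation piece $A_1$, I would combine the Hoffman--Wielandt inequality for singular values with the crude estimate $\|T^{(n)} - T_N^{(n)}\|_F \le \sqrt{n/(2\pi)}\,\|s-s_N\|_2$ (coming from $\|T_a\|_F^2 = \sum_l (n-|l|)|a_l|^2 \le n\sum_l |a_l|^2$) and Cauchy--Schwarz to conclude $\sum_j |s_j(T^{(n)})-s_j(T_N^{(n)})| \le n\|s-s_N\|_2/\sqrt{2\pi}$, which by the Lipschitz hypothesis gives $|A_1|\le L\|s-s_N\|_2/\sqrt{2\pi}$. The symbol-truncation piece $A_4$ in the integral is bounded similarly by $L\|s-s_N\|_1/(2\pi) \le L\|s-s_N\|_2/\sqrt{2\pi}$. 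Parseval applied to $s^{(k)}$ gives the Fourier-tail bound $\|s-s_N\|_2 \le \|s^{(k)}\|_2/N^k$, so $|A_1|+|A_4|$ contributes at most a constant times $L\|s^{(k)}\|_2/N^k$.

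For the circulant-approximation piece $A_2$, the difference $T_N^{(n)}-C_N^{(n)}$ is concentrated in two opposite triangular corners and hence has rank at most $2N$. Using the layer-cake identity $\sum_j F(s_j(A)) - \sum_j F(s_j(B)) = \int F'(t)\bigl[\#\{j: s_j(A)>t\} - \#\{j: s_j(B)>t\}\bigr] dt$ together with the Weyl-type bound $|N_A(t)-N_B(t)|\le r$ for rank-$r$-related matrices yields $|A_2|\le (2N/n)\|F'\|_1$, while the blunt alternative of bounding at most $4N$ of the sorted singular values of $T_N^{(n)}$ and $C_N^{(n)}$ differ by up to $2\|F\|_\infty$ gives $|A_2|\le (8N/n)\|F\|_\infty$; either bound (or their combination) gives the $(\|F'\|_1+\|F\|_\infty)N/n$ contribution of the announced statement. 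For the Riemann-sum piece $A_3$ on $g(x):=F(|s_N(x)|)$, I would use the classical estimate $|\tfrac{1}{n}\sum_j g(2\pi j/n) - \tfrac{1}{2\pi}\int g|\le V(g)/n$ for bounded-variation $g$, with the crude chain-rule bound $V(g)\le 2\pi L\|s_N'\|_\infty$, and control $\|s_N'\|_\infty$ by Bernstein's inequality $\|s_N'\|_\infty \le N\|s_N\|_\infty$ and $\|s_N\|_\infty \le \sqrt{(2N+1)/(2\pi)}\,\|s\|_2$ (Cauchy--Schwarz on the $2N+1$ Fourier coefficients), yielding $|A_3|\le C\, L\|s\|_2\, N^{3/2}/n$.

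Collecting the four bounds gives a total error at most $C_1 L\|s^{(k)}\|_2/N^k + C_2 (\|F'\|_1+\|F\|_\infty)N/n + C_3 L\|s\|_2\,N^{3/2}/n$. The choice $N=\lceil n^{1/(k+3/2)}\rceil$ exactly balances the terms involving $L\|s^{(k)}\|_2$ and $L\|s\|_2$ at the common rate $n^{-k/(k+3/2)}$, while the remaining $N/n$ terms scale at the faster rate $n^{-(k+1/2)/(k+3/2)}$, recovering the stated bound. The main technical obstacle is the rank-perturbation analysis of $A_2$: to avoid a spurious $\sqrt{n}$ loss (that naive use of Hoffman--Wielandt would produce for this piece), one must pass through the layer-cake/counting-function identity and exploit the Lipschitz/BV hypothesis sharply; the remaining bookkeeping — tracking constants through Bernstein's inequality, through the Fourier Cauchy--Schwarz bound on $\|s_N\|_\infty$, and through the final optimization of $N$ — is routine but delicate enough that one must be careful to keep the prefactors $2^{k+1}/\sqrt{2\pi}$, $4\pi$, etc.\ under control.
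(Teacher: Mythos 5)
Your proposal is correct and follows essentially the same route as the paper's proof in Section~\ref{ergodic_estimates}: the same four-term decomposition via the truncated symbol $s_N$, the banded Toeplitz matrix and its circulant correction, with the Schatten-2/Hoffman--Wielandt bound for the truncation step, a rank-$2N$ interlacing (counting-function) argument for the circulant step, a quadrature bound controlled by $\|s_N'\|_\infty\lesssim N^{3/2}\|s\|_2$, and the optimization $N\sim n^{1/(k+3/2)}$. The only detail to adjust is your choice $N=\lceil n^{1/(k+3/2)}\rceil$: the paper takes the floor (so that $N<n/2$ holds for all $n\ge 4$, $k\ge 1$, and so that the $N^{3/2}/n$ and $N/n$ terms stay below the stated rates), absorbing the resulting loss $N\ge\frac12 n^{1/(k+3/2)}$ into the $2^{k+1}$ prefactor of the $\|s^{(k)}\|_2$ term.
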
 
The proof of the above theorem is provided in Section~\ref{ergodic_estimates} in the Appendix. Note that this theorem is stronger than the Avram--Parter's theorem, as the latter can be recovered by taking the limit of Eq.~(\ref{key_eq_er_bo}) as $n \rightarrow \infty$. This theorem is the crucial tool that we exploit to derive the lower bounds on the $n$-shot capacities of optical fibres with memory effects reported in Theorem~\ref{thm_main_memoryq}. We refer to Section~\ref{memory_chapter} in the Appendix for more details.

\section{Discussion}\label{sec:discussion} 
In this work, we have analysed the performance of non-asymptotic quantum comunication across suitable non-Markovian optical links. In Theorem~\ref{thm_main_memoryq}, we have provided easily computable lower bounds on the non-asymptotic capacities of the model of non-Markovian optical fibre introduced in Ref.~\cite{mele2023optical}. This result allows one to calculate how many uses of a non-Markovian optical fibre are sufficient in order to transmit $k$ qubits, distil $k$ ebits, or generate $k$ secret-key bits up to a fixed error. In order to derive such lower bounds, in Theorem~\ref{th:final_estimation_lip_main0} we introduced an explicit upper bound on the convergence of the celebrated Avram–Parter theorem, which we regard as a tool of independent interest for the field of quantum information theory and matrix analysis.

We leave as an open problem to determine upper bounds on the non-asymptotic capacities of the above model. Moreover, it would be interesting to estimate the non-asymptotic capacities of the above model with the additional presence of thermal noise and amplifier noise.
\medskip

\begin{acknowledgments}
\smallskip
\noindent \emph{Acknowledgements.}  We thank Ludovico Lami and Mark Wilde for useful discussions. F.A.M. and V.G. acknowledge financial support by MUR (Ministero dell'Istruzione, dell'Universit\`a e della Ricerca) through the following projects: PNRR MUR project PE0000023-NQSTI, PRIN 2017 Taming complexity via Quantum Strategies: a Hybrid Integrated Photonic approach (QUSHIP) Id.\ 2017SRN-BRK, and project PRO3 Quantum Pathfinder.
G.B. is member of the Research Group GNCS (Gruppo Nazionale per il Calcolo Scientifico) of INdAM (Istituto Nazionale di Alta Matematica) and  acknowledges the support by the ERC Consolidator Grant 101085607 through the Project eLinoR.  M. F. is supported by the European Research Council (ERC) under Agreement 818761 and by VILLUM FONDEN via the QMATH Centre of Excellence (Grant No. 10059). M.F. was previously supported by a Juan de la Cierva Formaci\'on fellowship (Spanish MCIN project FJC2021-047404-I), with funding from MCIN/AEI/10.13039/501100011033 and European Union NextGenerationEU/PRTR, by European Space Agency, project ESA/ESTEC 2021-01250-ESA, by Spanish MCIN (project PID2022-141283NB-I00) with the support of FEDER funds, by the Spanish MCIN with funding from European Union NextGenerationEU (grant PRTR-C17.I1) and the Generalitat de Catalunya, as well as the Ministry of Economic Affairs and Digital Transformation of the Spanish Government through the QUANTUM ENIA ``Quantum Spain'' project with funds from the European Union through the Recovery, Transformation and Resilience Plan - NextGenerationEU within the framework of the "Digital Spain 2026 Agenda".
\end{acknowledgments}

\medskip


\bibliographystyle{apsrev4-1}
\nocite{apsrev41Control}
\bibliography{biblio,bibarb,revtex-custom}

\newpage
\appendix
\tableofcontents

\section{Lower bounds on the non-asymptotic capacities of non-Markovian bosonic communication lines}\label{memory_chapter}
This section is devoted to the proof of Theorem~\ref{thm_main_memoryq}, which establishes lower bounds on the non-asymptotic capacities of the model of non-Markovian optical fibre of Ref.~\cite{mele2023optical}.

Let us start by recalling some properties and notation regarding this model. It is characterised by two parameters: the transmissivity $\lambda\in[0,1]$ and the memory parameter $\mu\in[0,1]$ of the optical fibre. Such a model describes an optical fibre of transmissivity $\lambda$ and memory parameter $\mu$ via the family of quantum channels
\bb
	 \{ \Phi^{(1)}_{\lambda,\mu},\Phi^{(2)}_{\lambda,\mu}, \cdots, \Phi^{(n)}_{\lambda,\mu}, \cdots\}\;,
\ee
where $\Phi^{(n)}_{\lambda,\mu}$ is an $n$-mode bosonic Gaussian channel that models the evolution the first $n$ input signals. The interferometric representation of $\Phi^{(n)}_{\lambda,\mu}$ is reported in Fig.~\ref{fig_memory_main} in the main text. For any $\varepsilon\in[0,1]$ and $n\in\mathbb{N}$, the $n$-shot quantum capacity, the $n$-shot two-way quantum capacity, and the $n$-shot secret-key capacity of the model are defined in terms of the corresponding one-shot capacities of the quantum channel $\Phi^{(n)}_{\lambda,\mu}$ as
\bb  
    Q^{(\varepsilon,n)}(\lambda,\mu)&\coloneqq Q^{(\varepsilon,1)}\!\left(\Phi^{(n)}_{\lambda,\mu}\right)\,, \\
      Q_2^{(\varepsilon,n)}(\lambda,\mu)&\coloneqq Q_2^{(\varepsilon,1)}\!\left(\Phi^{(n)}_{\lambda,\mu}\right)\,,\\
    K^{(\varepsilon,n)}(\lambda,\mu)&\coloneqq K^{(\varepsilon,1)}\left(\Phi^{(n)}_{\lambda,\mu}\right)\,.
\ee
The goal of this section is to find an easily computable lower bound on these non-asymptotic capacities --- also referred to as $n$-shot capacities --- in terms of the parameters $n,\varepsilon,\lambda,\mu$.

The asymptotic capacities of the model are defined by taking the limit of vanishing error and infinite channel uses of the corresponding $n$-shot capacities. Specifically, the quantum capacity, the two-way quantum capacity, and the secret-key capacity are defined as
\bb
    Q(\lambda,\mu)&\coloneqq
    \lim\limits_{\varepsilon \rightarrow 0^+} \liminf_{n \to \infty} \frac{Q^{(\varepsilon,n)}(\lambda,\mu)}{n}\,,\\
    Q_2(\lambda,\mu)&\coloneqq
    \lim\limits_{\varepsilon \rightarrow 0^+} \liminf_{n \to \infty} \frac{Q_2^{(\varepsilon,n)}(\lambda,\mu)}{n}\,,\\
    K(\lambda,\mu)&\coloneqq
    \lim\limits_{\varepsilon \rightarrow 0^+} \liminf_{n \to \infty} \frac{K^{(\varepsilon,n)}(\lambda,\mu)}{n}\,.
\ee
These asymptotic capacities have been exactly computed in Ref.~\cite{mele2023optical}, and they read:
\bb\label{capacities_formula_diehard3}
    Q(\lambda,\mu)&=\int_{0}^{2\pi}\frac{\mathrm{d}\theta}{2\pi}q(\eta_{\lambda,\mu}(\theta))\,,\\
    Q_2(\lambda,\mu)=K(\lambda,\mu)&=\int_{0}^{2\pi}\frac{\mathrm{d}\theta}{2\pi}k(\eta_{\lambda,\mu}(\theta))\,,
\ee
where $\eta_{\lambda,\mu}:[0,2\pi]\to[0,1]$ is defined as 
\bb\label{def_effect_transm}
    \eta_{\lambda,\mu}(\cdot)\coloneqq \lambda^{\frac{1-\mu}{1-2\sqrt{\mu}\cos(\cdot)+\mu}}\,,
\ee
and $q(\lambda), k(\lambda)$ denote the quantum capacity and the two-way quantum capacity (and hence the secret-key capacity), respectively, of the pure loss channel $\mathcal{E}_\lambda$:
\bb\label{def_q}
    q(\lambda)&\coloneqq Q(\mathcal{E}_{\lambda})=   
        \begin{cases}
        \log_2\!\left(\frac{\lambda}{1-\lambda}\right) &\text{if $\lambda\in(\frac{1}{2},1]$ ,} \\
        0 &\text{if $\lambda\in[0,\frac{1}{2}]$ ,}
    \end{cases}\\
    k(\lambda)&\coloneqq K(\mathcal{E}_{\lambda})=Q_2(\mathcal{E}_{\lambda})= \log_2\!\left(\frac{1}{1-\lambda}\right)\,.
\ee 
In the following lemma we state a useful property regarding the channel $\Phi^{(n)}_{\lambda,\mu}$~\cite{mele2023optical}.
\begin{lemma}[(Factorisation of the quantum channel modelling an optical fibre with memory effects~\cite{mele2023optical})]\label{lemma_unitarily}
Let $\lambda,\mu\in[0,1]$ and $n\in\mathbb{N}$. The channel $\Phi^{(n)}_{\lambda,\mu}$ is unitarily equivalent to a tensor product of $n$ distinct pure loss channels. That is, there exist unitary channels $\mathcal{U}$ and $\mathcal{V}$ such that
\bb\label{memory_channel}
    \Phi^{(n)}_{\lambda,\mu}= \mathcal{U}\circ\left(\bigotimes_{i=1}^n\mathcal{E}_{\eta^{(n,\lambda,\mu)}_i}\right)\circ \mathcal{V}\,,
\ee
where the transmissivities $\{\eta^{(n,\lambda,\mu)}_i\}_{i=1,2,\ldots,n}$ are given by 
\bb\label{trasm_etai}
        \eta_i^{(n,\lambda,\mu)}\coloneqq (s_i^{(n,\lambda,\mu)})^2\,,
\ee
where $s_i^{(n,\lambda,\mu)}$ is the $i$th smallest singular value of the $n\times n$ top-left corner of the infinite matrix $\bar{A}^{(\lambda,\mu)}$, defined as follows. For all $i,k\in\mathbb{N}$, the $(i,k)$ element of the infinite matrix $\bar{A}^{(\lambda,\mu)}$ is given by
        \bb\label{matrix_sm22}
            \bar{A}^{(\lambda,\mu)}_{i,k}\coloneqq a^{(\lambda,\mu)}_{i-k}\,,
        \ee
    where 
    \bb\label{eq_elements}
        a_j^{(\lambda,\mu)}\coloneqq  \Theta(j)\,\sqrt{\lambda}\mu^{\frac{j}{2}} L_{j}^{(-1)}(-\ln\lambda)\qquad \forall\, j\in\mathbb{Z}\,.
    \ee
In \eqref{eq_elements}, $\Theta$ denotes the Heaviside Theta function defined as 
    \bb
\Theta(x)\coloneqq\begin{cases}
1, & \text{if $x\ge0$,} \\
0, & \text{otherwise,}
\end{cases}
    \ee
while $\{L_m^{(-1)}\}_{m\in\N}$ are the generalised Laguerre polynomials defined as 
    \bb
        L_m^{(-1)}(x)&\coloneqq \sum_{l=1}^m\binom{m-1}{l-1}\frac{(-x)^l}{l!}\qquad\forall\,m\in\mathbb{N}^+ \,,\\
        L_0^{(-1)}(x)&\coloneqq 1\,,
    \ee 
for all $x\in\mathbb{R}$. 
\end{lemma}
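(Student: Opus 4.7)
The plan is to start from the interferometric representation of $\Phi^{(n)}_{\lambda,\mu}$ depicted in Fig.~\ref{fig_memory_main}, identify the $n\times n$ scattering matrix that takes the $n$ input annihilation operators to the $n$ output annihilation operators, and then apply a singular value decomposition to it.

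Since every building block of the interferometric representation (beam splitters and pure-loss thermalisations $\mathcal{E}_\mu$ on the $M$ localised environments, all initialised in the vacuum) is Gaussian with zero displacement, the overall channel $\Phi^{(n)}_{\lambda,\mu}$ is an $n$-mode pure-loss Gaussian channel, hence completely determined by a single $n\times n$ "signal-to-signal" block $A^{(n,\lambda,\mu)}$ of the symplectic/unitary transformation acting on the signals together with the (now infinite-dimensional) extended environment. Writing the SVD $A^{(n,\lambda,\mu)}=\tilde U D \tilde V$ with $D$ diagonal and nonnegative, and with $\tilde U,\tilde V$ unitary, the unitary factors correspond to passive (number-conserving) Gaussian unitary channels $\mathcal{U}$ and $\mathcal{V}$ on $n$ modes, while $D$ corresponds to a tensor product $\bigotimes_{i=1}^{n}\mathcal{E}_{\eta_i^{(n,\lambda,\mu)}}$ of pure loss channels with transmissivities $\eta_i^{(n,\lambda,\mu)}=(s_i^{(n,\lambda,\mu)})^2$. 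This already gives the factorisation~\eqref{memory_channel}, conditional on establishing the claimed form of $A^{(n,\lambda,\mu)}$.

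The real work is therefore to prove that $A^{(n,\lambda,\mu)}$ is the top-left $n\times n$ corner of the semi-infinite Toeplitz matrix $\bar A^{(\lambda,\mu)}$ with coefficients~\eqref{eq_elements}. Two structural observations make this tractable. First, because every use of the fibre applies the same cascade of $M$ beam splitters followed by the same $\mathcal{E}_\mu$ on each local environment, the input-output map is time-translation invariant, so $A^{(n,\lambda,\mu)}_{i,k}$ depends only on $i-k$; second, since a signal entering at time $k$ cannot affect earlier outputs, $a^{(\lambda,\mu)}_j=0$ for $j<0$, explaining the Heaviside factor. To compute $a^{(\lambda,\mu)}_j$ for $j\ge 0$ explicitly, I would track the amplitude with which a single photon entering at time $k$ re-emerges as a signal at time $k+j$: at each of the $M$ beam splitters it can leak into the local environment with amplitude $\sqrt{1-\lambda^{1/M}}$, decay by a factor $\sqrt{\mu}$ per subsequent use through $\mathcal{E}_\mu$, and re-enter the signal line at a later beam splitter with the analogous reflection amplitude. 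Summing the amplitudes over all such paths in the continuum limit $M\to\infty$, and recognising the generating-function identity $\sum_{j\ge 0}L_j^{(-1)}(y)\,t^j=\exp\!\bigl(-yt/(1-t)\bigr)$ with $y=-\ln\lambda$ and $t$ tracking the memory parameter $\sqrt{\mu}$, one arrives at $a_j^{(\lambda,\mu)}=\sqrt{\lambda}\,\mu^{j/2}L_j^{(-1)}(-\ln\lambda)$ as in~\eqref{eq_elements}.

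The main obstacle is the explicit combinatorial/analytical computation of $a^{(\lambda,\mu)}_j$ in the continuum limit $M\to\infty$: one must carefully enumerate the paths by which a single excitation is delayed by exactly $j$ uses, keep the bookkeeping of the $M$ transmission/reflection amplitudes and of the $j$ memory factors $\sqrt{\mu}$ consistent, justify the $M\to\infty$ limit, and then recognise the resulting sum as a generalised Laguerre polynomial via the generating function above. All the remaining steps --- Gaussianity of $\Phi^{(n)}_{\lambda,\mu}$, the identification of the signal-to-signal scattering block with $A^{(n,\lambda,\mu)}$, the Toeplitz and causality structure, the SVD, and the translation of the two unitary factors into passive Gaussian unitary channels $\mathcal{U},\mathcal{V}$ --- are, by comparison, essentially routine.
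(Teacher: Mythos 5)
The paper does not supply its own proof of Lemma~\ref{lemma_unitarily}: the result is stated as an import from Ref.~\cite{mele2023optical}, and nothing in the present manuscript's appendix re-derives the explicit formula~\eqref{eq_elements}. Your proposal is therefore a blind reconstruction of a proof the paper never presents, and as such I cannot line it up against "the paper's own argument" --- only against what would be required for a complete derivation.

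With that caveat, the structural half of your outline is sound and essentially forced. Because every elementary gate in the interferometric picture is passive Gaussian with vacuum ancillas, $\Phi^{(n)}_{\lambda,\mu}$ is a zero-displacement, zero-added-noise Gaussian channel; its action on annihilation operators is $a_{\mathrm{out}}=A^{(n)}a_{\mathrm{in}}+Ba_{\mathrm{env}}$ with $A^{(n)}$ a contraction (a submatrix of a unitary), and writing the SVD $A^{(n)}=\tilde{U}D\tilde{V}$ with $D=\mathrm{diag}(s_i)$ immediately gives $\Phi^{(n)}_{\lambda,\mu}=\mathcal{U}\circ(\bigotimes_i\mathcal{E}_{s_i^2})\circ\mathcal{V}$ with $\mathcal{U},\mathcal{V}$ the passive unitary channels associated with $\tilde{U},\tilde{V}$. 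Time-homogeneity of the model gives the Toeplitz structure, causality gives the Heaviside factor, and the generating function $\sum_{j\ge 0}L^{(-1)}_j(y)\,t^j=\exp(-yt/(1-t))$ with $y=-\ln\lambda$, $t=\sqrt{\mu}\,e^{i\theta}$ correctly reproduces the symbol $f_{\lambda,\mu}(\theta)=\lambda^{-1/2+1/(1-\sqrt{\mu}e^{i\theta})}$ used elsewhere in the appendix, so the Laguerre form in~\eqref{eq_elements} is at least internally consistent.

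The genuine gap is exactly the one you flag, and it is the entire content of the lemma: the explicit evaluation of $a^{(\lambda,\mu)}_j$ in the $M\to\infty$ limit. Your single-bounce path heuristic is not yet a derivation. Concretely, you would need to (i) pass from the discrete cascade of $M$ beam splitters to a continuum description, typically by writing a first-order ODE in the fibre position for the joint evolution of the signal and the delayed environment modes, so that the amplitude that leaks at position $x$, sits in the local environment for $j$ steps (picking up $\mu^{j/2}$), and re-enters at the same position is integrated over $x\in[0,1]$; (ii) show that multi-bounce contributions, which you silently drop, either vanish in the $M\to\infty$ limit for the leading Toeplitz coefficient or must be resummed to produce the Laguerre polynomial; and (iii) fix the sign and phase conventions of the beam-splitter scattering so that the amplitudes add coherently to $\sqrt{\lambda}\,\mu^{j/2}L^{(-1)}_j(-\ln\lambda)$ and not to a different special function with the same generating-function skeleton. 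Until those three points are settled, the proposal is a correct high-level description of the route but not a proof of~\eqref{eq_elements}, and~\eqref{eq_elements} is precisely what everything else in Section~\ref{memory_chapter} is built on.
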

By exploiting this lemma, together with the simple fact that the capacities are invariant under composition with unitary channels, it follows that the $n$-shot capacities of the model can be expressed in terms of the one-shot capacity of the following tensor product of $n$ pure loss channels:
\bb\label{eqmpde_d}
    Q^{(\varepsilon,n)}(\lambda,\mu)&=Q^{(\varepsilon,1)}\left(  \bigotimes_{i=1}^n\mathcal{E}_{\eta^{(n,\lambda,\mu)}_i} \right)\,,\\
    Q_2^{(\varepsilon,n)}(\lambda,\mu)&=Q_2^{(\varepsilon,1)}\left(  \bigotimes_{i=1}^n\mathcal{E}_{\eta^{(n,\lambda,\mu)}_i} \right)\,,\\
    K^{(\varepsilon,n)}(\lambda,\mu)&=K^{(\varepsilon,1)}\left(  \bigotimes_{i=1}^n\mathcal{E}_{\eta^{(n,\lambda,\mu)}_i} \right)\,,
\ee
where the transmissivities $\{\eta^{(n,\lambda,\mu)}_i\}_{i=1,2,\ldots,n}$ are defined in \eqref{trasm_etai}. In order to find an easily computable lower bound on these one-shot capacities, we exploit a generalisation of the \emph{relative entropy variance approach} developed in Ref.~\cite{mele2025achievableratesnonasymptoticbosonic}. This approach will lead us to the following lemma, which can regarded as a generalisation of \cite[Theorem~5]{mele2025achievableratesnonasymptoticbosonic} to tensor product of pure loss channels.
\begin{lemma}[(Lower bound on the one-shot capacities of a tensor product of pure loss channels via entropy variance approach)]\label{thm_l_rel_non_iid22_sm}
Let $n\in\mathbb{N}$, $\lambda_1,\ldots,\lambda_n\in[0,1]$, and $\varepsilon\in(0,1)$. The one-shot quantum capacity, the one-shot two-way quantum capacity, and the one-shot secret-key capacity of the tensor product of pure loss channels $\bigotimes_{i=1}^n\mathcal{E}_{\lambda_i}$ can be lower bounded as follows:
\bb\label{eq_l_bound_one_sm}
    Q^{(\varepsilon,1)}\left(\bigotimes_{i=1}^n\mathcal{E}_{\lambda_i}\right) 
        &\geq  \sum_{i=1}^n Q(\mathcal{E}_{\lambda_i})-\log_2\!\left( \frac{2^{23}(32-\varepsilon)^2}{ (16-\varepsilon)\varepsilon^6}\right)\,,\\
        K^{(\varepsilon,1)}\left(\bigotimes_{i=1}^n\mathcal{E}_{\lambda_i}\right)&\ge Q_2^{(\varepsilon,1)}\left(\bigotimes_{i=1}^n\mathcal{E}_{\lambda_i}\right) \ge  \sum_{i=1}^n Q_2\left(\mathcal{E}_{\lambda_i}\right)-\log_2\!\left(\frac{2^{6}\,3\,(4-\sqrt{\varepsilon})^2}{(2-\sqrt{\varepsilon})\varepsilon^3 }\right)\,,
    \ee
where $Q(\mathcal{E}_{\lambda})$ and $Q_2(\mathcal{E}_{\lambda})$ are reported in \eqref{def_q}.
\end{lemma}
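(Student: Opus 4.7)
The plan is to reverse-engineer and then re-run the argument of \cite[Theorem~5]{mele2025achievableratesnonasymptoticbosonic}, which establishes the analogous bound for $n$ i.i.d.\ copies of a single pure loss channel, so that it accommodates the non-i.i.d.\ tensor product $\bigotimes_{i=1}^n \mathcal{E}_{\lambda_i}$ with potentially distinct transmissivities $\lambda_i$. The starting point is a one-shot achievability bound for $Q^{(\varepsilon,1)}$ (respectively $Q_2^{(\varepsilon,1)}$ and $K^{(\varepsilon,1)}$) expressed in terms of a suitable $\varepsilon$-smoothed divergence (a hypothesis-testing divergence for $Q_2$ and $K$, and a coherent-information variant for $Q$) evaluated on a specific input state. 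For each pure loss channel $\mathcal{E}_{\lambda_i}$, I would import the per-channel optimal choice of input from \cite{mele2025achievableratesnonasymptoticbosonic} (typically a suitably truncated thermal or two-mode-squeezed state with a specific reference state $\sigma_i$) such that the mean log-likelihood ratio $D(\rho_i\|\sigma_i)$ equals $Q(\mathcal{E}_{\lambda_i})$ (respectively $Q_2(\mathcal{E}_{\lambda_i})$).

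Next, I would feed into $\bigotimes_i \mathcal{E}_{\lambda_i}$ the tensor product $\bigotimes_i \rho_i$ of these per-channel optimizers, and use $\bigotimes_i \sigma_i$ as the reference. The additivity of the logarithm on tensor products turns the associated log-likelihood ratio into a sum of independent per-mode contributions with mean $\sum_i Q(\mathcal{E}_{\lambda_i})$ (respectively $\sum_i Q_2(\mathcal{E}_{\lambda_i})$). I would then apply to this product state the very same one-shot divergence estimate that \cite[Theorem~5]{mele2025achievableratesnonasymptoticbosonic} uses on $\rho^{\otimes n}$: the estimate only exploits the product structure of the input/reference pair, not the fact that all factors are identical. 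This yields the claimed bound with exactly the same $\varepsilon$-dependent overhead, since the overhead arises purely from the smoothing step and does not depend on the factor structure.

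Finally, I would collect the two resulting inequalities: one for $Q^{(\varepsilon,1)}(\bigotimes_i \mathcal{E}_{\lambda_i})$ with overhead $\log_2\!\bigl(2^{23}(32-\varepsilon)^2/[(16-\varepsilon)\varepsilon^6]\bigr)$, and one for $Q_2^{(\varepsilon,1)}(\bigotimes_i \mathcal{E}_{\lambda_i})$ with overhead $\log_2\!\bigl(2^6\cdot 3\cdot(4-\sqrt{\varepsilon})^2/[(2-\sqrt{\varepsilon})\varepsilon^3]\bigr)$, and use the generic ordering $Q_2^{(\varepsilon,1)}\le K^{(\varepsilon,1)}$ to obtain the bound on the secret-key capacity for free.

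The main obstacle is verifying that every quantitative step in \cite{mele2025achievableratesnonasymptoticbosonic} that appears to depend on the i.i.d.\ assumption actually only uses the product structure. In particular, any estimate invoking permutation symmetry or a central-limit-type control of the log-likelihood ratio must be revisited: since the final bound in \cite{mele2025achievableratesnonasymptoticbosonic} is linear in $n$ with a purely $\varepsilon$-dependent overhead (no $\sqrt{n}$ or $\log n$ term), the underlying estimates must already decouple per mode, and the only care needed is to check that the energy/truncation parameters chosen for the $i$-th factor (which may vary with $\lambda_i$) still make the per-mode quantities well defined and additive. Once this decoupling is verified, the non-i.i.d.\ bound follows without any quantitative loss relative to the i.i.d.\ case.
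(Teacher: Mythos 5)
Your proposal follows essentially the same route as the paper: the paper's proof also reduces the statement to the observation that the one-shot achievability bound of \cite[Theorem~5]{mele2025achievableratesnonasymptoticbosonic} only uses the product structure of the input/reference pair, via the additivity of the relative entropy and of the relative entropy variance under tensor products (this is packaged in the paper as Lemma~\ref{thm_low_pure_loss_non_iid}, the product-operator version of \cite[Lemma~55]{mele2025achievableratesnonasymptoticbosonic}), after which the per-mode convergence of the truncated two-mode-squeezed-vacuum quantities is invoked mode by mode. Your identification of where the $\varepsilon$-overhead comes from, and your use of $Q_2^{(\varepsilon,1)}\le K^{(\varepsilon,1)}$ for the secret-key bound, also match the paper.

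There is, however, one concrete step in your plan that fails as written: for the quantum capacity bound you assert that for each $i$ one can choose a per-channel optimizer with mean log-likelihood ratio equal to $Q(\mathcal{E}_{\lambda_i})$. This is false whenever $\lambda_i\le\tfrac12$: the coherent information of the (truncated) two-mode squeezed vacuum through $\mathcal{E}_{\lambda_i}$ tends to $\log_2\!\bigl(\lambda_i/(1-\lambda_i)\bigr)<0$, whereas $Q(\mathcal{E}_{\lambda_i})=0$, so summing the per-mode coherent informations would give something strictly smaller than $\sum_i Q(\mathcal{E}_{\lambda_i})$ and the claimed inequality would not follow. The paper patches this by coding only over the sub-tensor-product of channels with $\lambda_i>\tfrac12$ (using monotonicity of the one-shot capacity under discarding tensor factors) and noting that the discarded channels contribute zero to $\sum_i Q(\mathcal{E}_{\lambda_i})$. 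No such issue arises for $Q_2$ and $K$, since the reverse coherent information attains $Q_2(\mathcal{E}_{\lambda})=\log_2\!\bigl(1/(1-\lambda)\bigr)$ for every $\lambda$. With this case distinction added, your argument coincides with the paper's.
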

We provide the proof of this lemma in Section~\ref{sec_low_tens_prodd} below. By exploting this lemma and Eq.~\eqref{eqmpde_d}, we can lower bound the $n$-shot capacities of the model as follows:
\bb\label{eq_non_as_q_q2epsn}
    Q^{(\varepsilon,n)}(\lambda,\mu)&\geq  \sum_{i=1}^n Q\left(\mathcal{E}_{\eta^{(n,\lambda,\mu)}_i}\right)-\log_2\!\left( \frac{2^{23}(32-\varepsilon)^2}{ (16-\varepsilon)\varepsilon^6}\right)\,,\\
        K^{(\varepsilon,n)}(\lambda,\mu)&\ge Q_2^{(\varepsilon,n)}(\lambda,\mu)\ge  \sum_{i=1}^n Q_2\left(\mathcal{E}_{\eta^{(n,\lambda,\mu)}_i}\right)-\log_2\!\left(\frac{2^{6}\,3\,(4-\sqrt{\varepsilon})^2}{(2-\sqrt{\varepsilon})\varepsilon^3 }\right)\,,
\ee
where the transmissivities $\{\eta^{(n,\lambda,\mu)}_i\}_{i=1,2,\ldots,n}$, defined in \eqref{trasm_etai}, can be retrieved by calculating the singular values of the matrix $\bar{A}^{(\lambda,\mu)}$, defined in \eqref{matrix_sm22}, which is a \emph{Toeplitz matrix}~\cite{Szego1920, GRENADER,Avram1988,Parter1986}. By definition, a Toeplitz matrix is characterised by the property that its element in position $(i,j)$ depends only on the difference $i-j$ for all $i,j$, meaning that all entries along each diagonal are constant. Crucially, the asymptotic behaviour of the singular values of a Toeplitz matrix can be analytically understood thanks to the celebrated \emph{Avram--Parter theorem}~\cite{Avram1988,Parter1986}:
\begin{thm}[(\emph{Avram--Parter's theorem}~\cite{Avram1988,Parter1986})]
    Let $\{a_k\}_{k\in\mathbb{Z}}$ be a sequence of real numbers. For all $n\in\N$ let $T^{(n)}$ be the $n\times n$ Toeplitz matrix with elements $T^{(n)}_{k,j}\coloneqq a_{k-j}$ for all $k,j\in\{1,2,\ldots,n\}$. Let $\{s^{(n)}_j\}_{j=1,2,\ldots,n}$ be the singular values of the matrix $T^{(n)}$ ordered in increasing order in $j$. Assume that the function
    $s:[0,2\pi]\to \C$, defined by 
    \bb\label{asymptotic_distribution0}
    s(x)\coloneqq  \sum_{k=-\infty}^{+\infty}a_k\,e^{ikx} \,\qquad\forall \,x\in[0,2\pi]\,,
    \ee
    is bounded. Then for all continuous function $F:\mathbb{R}\to\mathbb{R}$ with bounded support it holds that
    \bb
        \lim\limits_{n\rightarrow\infty}\frac{1}{n}\sum_{j=1}^nF\!\left(s^{(n)}_j\right)=\int_{0}^{2\pi}\frac{\mathrm{d}x}{2\pi}F\!\left(|s(x)|\right)\,.
    \ee
\end{thm}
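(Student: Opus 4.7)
My plan is to reduce the statement, via a density argument, to the case of even polynomial test functions, and then to compute the relevant trace asymptotics by comparing $T^{(n)}(T^{(n)})^*$ with the Hermitian Toeplitz matrix whose symbol is $|s|^2$, so as to invoke the classical first Szeg\"o limit theorem on the latter.

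Since $s$ is bounded, the operator norm $\|T^{(n)}\|\le \|s\|_\infty\eqqcolon M$ (a standard fact for Toeplitz operators with bounded symbol), so every singular value $s^{(n)}_j$ lies in $[0,M]$, uniformly in $n$ and $j$. Consequently both sides of the claimed limit depend on $F$ only through its restriction to $[0,M]$, and both functionals are bounded by $\sup_{[0,M]}|F|$. By the Stone--Weierstrass theorem, the even polynomials (polynomials in $x^2$) are dense in $C([0,M])$ in the sup norm, so it suffices to prove the claim for $F(x)=x^{2k}$ with $k\in\N$ arbitrary.

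For such $F$, the left-hand side equals $\tfrac{1}{n}\tr\!\bigl((T^{(n)}(T^{(n)})^*)^k\bigr)$. Writing $T_n(g)$ for the $n\times n$ truncation of the Toeplitz operator with symbol $g$, so that $T^{(n)}=T_n(s)$ and $(T^{(n)})^*=T_n(\bar s)$, I would compare this quantity with $\tfrac{1}{n}\tr(T_n(|s|^2)^k)$. The first Szeg\"o limit theorem, applied to the Hermitian Toeplitz matrix $T_n(|s|^2)$, yields
\bb
\lim_{n\to\infty}\frac{1}{n}\tr\!\bigl(T_n(|s|^2)^k\bigr)=\int_0^{2\pi}\frac{\mathrm{d}x}{2\pi}\,|s(x)|^{2k},
\ee
which is exactly the right-hand side of the Avram--Parter identity for $F(x)=x^{2k}$. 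Hence the proof reduces to establishing the trace equivalence
\bb
\frac{1}{n}\tr\!\bigl((T_n(s)T_n(\bar s))^k\bigr)-\frac{1}{n}\tr\!\bigl(T_n(|s|^2)^k\bigr)\xrightarrow[n\to\infty]{}0.
\ee

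The main obstacle is exactly this last step, because $T_n(s)T_n(\bar s)\neq T_n(|s|^2)$ in general: the defect is a Hankel-type correction concentrated in the top-left and bottom-right corners of the matrix. I would handle it in two stages. First, I would treat the case of trigonometric polynomial symbols $s=\sum_{|j|\le m}a_j e^{ij\cdot}$, where the matrices are banded of bandwidth $m$ and the correction $T_n(s)T_n(\bar s)-T_n(|s|^2)$ has rank $O(m)$ supported in two $m\times m$ corner blocks; expanding the $k$-th power and using cyclicity, one sees the trace difference is $O_m(1)$ as $n\to\infty$ and hence vanishes after division by $n$. Second, for general bounded $s$ I would approximate it by its Fourier truncations $s^{(m)}$, and propagate the error using the uniform bound $\|T_n(s)\|\le M$ together with $\tfrac{1}{n}\|T_n(s)-T_n(s^{(m)})\|_F^2\le \|s-s^{(m)}\|_2^2\xrightarrow[m\to\infty]{}0$, which is precisely the Grenander--Szeg\"o notion of asymptotic equivalence of matrix sequences and is known to preserve singular value distributions. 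A telescoping argument of the form $(T_n(s)T_n(\bar s))^k - (T_n(s^{(m)})T_n(\bar s^{(m)}))^k$, controlled by the product of operator norms times one Frobenius factor, together with $|\tr(X)|\le \sqrt{n}\|X\|_F$, then yields the required asymptotic equivalence. Taking $n\to\infty$ first and $m\to\infty$ second closes the argument.
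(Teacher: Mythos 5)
This Avram--Parter theorem is \emph{cited}, not proved, in the paper (from Refs.~\cite{Avram1988,Parter1986}); the paper's own contribution is the quantitative error bound of Theorem~\ref{th:final_estimation_lip_main}, which is established under the strictly stronger hypotheses $f\in C^k_{per}[-\pi,\pi]$ with $k\ge 1$ and $F$ Lipschitz. Letting $n\to\infty$ there recovers Avram--Parter only for that restricted class, so your proposal must be judged as a standalone proof of the general statement (bounded symbol, merely continuous $F$).

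Your architecture is a legitimate classical route, very much in the spirit of Avram's, Parter's and the later Tyrtyshnikov--Zamarashkin treatments: the reduction to $F(x)=x^{2k}$ via Stone--Weierstrass on $[0,\|s\|_\infty]$ is sound (since all singular values obey $s_j^{(n)}\le\|T_n(s)\|\le\|s\|_\infty$, and polynomials in $x^2$ separate points on $[0,\infty)$); the identity $\frac1n\sum_j(s_j^{(n)})^{2k}=\frac1n\Tr\bigl((T_n(s)T_n(\bar s))^k\bigr)$ and the appeal to the first Szeg\H{o} limit theorem for the Hermitian matrices $T_n(|s|^2)$ are correct; and the corner-rank argument for banded symbols (rank-$O(m)$ defect, telescoping $A^k-B^k=\sum_i A^i(A-B)B^{k-1-i}$, cyclicity) is fine. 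The genuine gap is in your final approximation step, where you pass from trigonometric polynomials to general bounded $s$ by Fourier partial sums $s^{(m)}$. The telescoping estimate of $(T_n(s)T_n(\bar s))^k-(T_n(s^{(m)})T_n(\bar s^{(m)}))^k$ is unavoidably controlled by powers of \emph{both} $\|T_n(s)\|\le\|s\|_\infty$ and $\|T_n(s^{(m)})\|\le\|s^{(m)}\|_\infty$, and for a merely bounded symbol the partial sums are not uniformly bounded: the sharp estimate is $\|s^{(m)}\|_\infty=O(\|s\|_\infty\log m)$, and the logarithm is attained. Since $\|s-s^{(m)}\|_2\to 0$ at an arbitrarily slow rate for general bounded $s$, the product $\|s^{(m)}\|_\infty^{2k-1}\,\|s-s^{(m)}\|_2$ need not vanish, and the Grenander--Szeg\H{o}/Gray asymptotic-equivalence machinery you invoke in fact \emph{requires} precisely the uniform operator-norm bound you lack. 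The same unboundedness silently breaks the (unstated but needed) convergence $\int|s^{(m)}|^{2k}\,dx\to\int|s|^{2k}\,dx$.

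The fix is local and standard: replace the Fourier partial sums by the Fej\'er (Ces\`aro) means $\sigma_m(s)$, or by convolution with any nonnegative summability kernel. These are still trigonometric polynomials of degree $m$ (so the corner-rank argument applies verbatim), they satisfy $\|\sigma_m(s)\|_\infty\le\|s\|_\infty$ uniformly in $m$, they converge to $s$ in $L^2$, and $|\sigma_m(s)|^{2k}\to|s|^{2k}$ in $L^1$ by dominated convergence since $|\sigma_m(s)|\le\|s\|_\infty$ pointwise. With this single substitution, your telescoping estimate closes cleanly: writing $M\coloneqq\|s\|_\infty$ and $A=T_n(s)T_n(\bar s)$, $A_m=T_n(\sigma_m(s))T_n(\bar\sigma_m(s))$, one gets
\begin{equation*}
\limsup_{n\to\infty}\frac1n\bigl|\Tr(A^k)-\Tr(A_m^k)\bigr|\le \frac{2kM^{2k-1}}{\sqrt{2\pi}}\,\|s-\sigma_m(s)\|_2\;\xrightarrow[m\to\infty]{}\;0\,,
\end{equation*}
and the proof is complete.
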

In \cite{mele2023optical}, the Avram--Parter's theorem was exploited to derive the closed expression of the asymptotic capacities $Q(\lambda,\mu)$, $Q_2(\lambda,\mu)$, and $K(\lambda,\mu)$ reported in \eqref{capacities_formula_diehard3}. More explicitly, in~\cite{mele2023optical}, the Avram--Parter's theorem was exploited to prove that the quantum capacity $Q(\lambda,\mu)$ is given by 
\bb\label{q_memory_die_hard3}
    Q(\lambda,\mu)=\lim\limits_{n\rightarrow\infty}\frac{1}{n}\sum_{i=1}^n Q\left(\mathcal{E}_{\eta^{(n,\lambda,\mu)}_i}\right)=\int_{0}^{2\pi}\frac{\mathrm{d}\theta}{2\pi} Q\left(\mathcal{E}_{\eta_{\lambda,\mu}(\theta)}\right)\,,
\ee
where $\eta_{\lambda,\mu}(\cdot)$ is defined in \eqref{def_effect_transm}. In \eqref{q_memory_die_hard3}, the role of the function $F$ in the statement of the Avram-Parter's theorem is played by the function $\lambda\mapsto Q(\mathcal{E}_\lambda)$, while the role of the function $s$ is played by the function $f_{\lambda,\mu}$ defined as
\bb\label{symbol}
    f_{\lambda,\mu}(\theta)\coloneqq \sum_{k\in\mathbb{Z}} a^{(\lambda,\mu)}_k e^{ik\theta}=\lambda^{-\frac{1}{2}+\frac{1}{1-\sqrt{\mu}\exp({i\theta})}} \qquad\forall\theta\in[0,2\pi]\,,
\ee
which satisfies $|f_{\lambda,\mu}(\cdot)|^2=\eta_{\lambda,\mu}(\cdot)$, with $a^{(\lambda,\mu)}_k$ being defined in \eqref{matrix_sm22}. Additionally, in~\cite{mele2023optical}, the Avram-Parter's theorem was exploited to show that the two-way quantum capacity $Q_2(\lambda,\mu)$ and the secret key capacity $K(\lambda,\mu)$ are both given by
\bb\label{q2_memory_die_hard_3}
    Q_2(\lambda,\mu)=K(\lambda,\mu)=\lim\limits_{n\rightarrow\infty}\frac{1}{n}\sum_{i=1}^n Q_2\left(\mathcal{E}_{\eta^{(n,\lambda,\mu)}_i}\right)=\int_{0}^{2\pi}\frac{\mathrm{d}\theta}{2\pi}Q_2\left(\mathcal{E}_{|f_{\lambda,\mu}(\theta)|^2}\right)\,,
\ee
where now the role of the function $F$ in the statement of the Avram-Parter's theorem is played by the function $\lambda\mapsto Q_2(\mathcal{E}_\lambda)$. While the Avram-Parter's theorem was crucial to calculate the asymptotic capacities~\cite{mele2023optical}, it is not enough to derive a lower bound on the $n$-shot capacities (as the Avram-Parter's inherently regards the asymptotic limit $n\rightarrow\infty$).

Thanks to \eqref{eq_non_as_q_q2epsn}, the $n$-shot capacities can be lower bounded in terms of the sums 
\bb\label{eq_sumssss}
\sum_{i=1}^n Q\!\left(\mathcal{E}_{\eta^{(n,\lambda,\mu)}_i}\right)\quad\text{and}\quad\sum_{i=1}^n Q_2\!\left(\mathcal{E}_{\eta^{(n,\lambda,\mu)}_i}\right)\,.
\ee
These sums are far from being easily computable, as the transmissivities $\{\eta_i^{(n,\lambda,\mu)}\}_{i=1,\ldots,n}$ are related to the singular values of an $n\times n$ matrix, and the dimension $n$ can be very large. However, the Avram--Parter's theorem establishes that these two sums, divided by $n$, approach the integrals  
\bb\label{eq_integralssss}
\int_{0}^{2\pi}\frac{\mathrm{d}\theta}{2\pi}Q\!\left(\mathcal{E}_{\eta_{\lambda,\mu}(\theta)}\right)\quad\text{and}\quad\int_{0}^{2\pi}\frac{\mathrm{d}\theta}{2\pi}Q_2\!\left(\mathcal{E}_{\eta_{\lambda,\mu}(\theta)}\right)
\ee
in the limit $n\rightarrow\infty$. Hence, in order to find an easily computable lower bound on the one-shot capacities, it suffices to find explicit upper bounds on the error incurred in the approximation of the sums in \eqref{eq_sumssss} with the integrals in \eqref{eq_sumssss} provided by the Avram--Parter's theorem. That is, we need to upper bound the errors
\bb\label{quantities_to_bound}
\left|\frac{1}{n}\sum_{i=1}^n Q\!\left(\mathcal{E}_{\eta^{(n,\lambda,\mu)}_i}\right)-\int_{0}^{2\pi}\frac{\mathrm{d}\theta}{2\pi}Q\!\left(\mathcal{E}_{\eta_{\lambda,\mu}(\theta)}\right)\right|\quad\text{and}\quad\left|\frac{1}{n}\sum_{i=1}^n Q_2\!\left(\mathcal{E}_{\eta^{(n,\lambda,\mu)}_i}\right)-\int_{0}^{2\pi}\frac{\mathrm{d}\theta}{2\pi}Q_2\!\left(\mathcal{E}_{\eta_{\lambda,\mu}(\theta)}\right)\right| \,.
\ee
To do this, we establish the following bound on the convergence of the Avram-Parter's theorem, which may be of independent interest for the matrix analysis literature.
\begin{thm}[(Error bound on the Avram--Parter's theorem)]\label{th:final_estimation_lip_main}
Let $\{a_l\}_{l\in\mathbb{Z}}$ be a sequence of real numbers. For all $n\in\N$ let $T^{(n)}$ be the $n\times n$ matrix with elements $T^{(n)}_{k,j}\coloneqq a_{k-j}$ for all $k,j\in\{1,2,\ldots,n\}$. Let $\{s^{(n)}_j\}_{j=1,2,\ldots,n}$ be the singular values of the matrix $T^{(n)}$ ordered in increasing order in $j$. Let us define the function
    $s:\mathbb{R}\to \C$ as
    \bb\label{asymptotic_distribution001}
    s(x)\coloneqq \sum_{l=-\infty}^{+\infty}a_l\,e^{ilx}  \qquad\forall \,x\in\mathbb{R}\,.
    \ee
     Let $F:\mathbb{R}\to\mathbb{C}$ be a continuous complex-valued function with bounded support, and assume that $F$ is Lipschitz with Lipschitz constant equal to $L$. Then, for all $n\ge4 $ and all $k\in\mathbb{N}$ such that the $k$-th derivative of $s$ exists and is continuous, it holds that
    \bb\label{key_eq_er_boSM}
        \left| \frac{1}{n}\sum_{j=1}^nF\!\left(s_j^{(n)}\right)-\int_{0}^{2\pi}\frac{\mathrm{d}x}{2\pi}F\left(|s(x)|\right) \right|\le  \left(
\frac {2^{k+1}\|s^{(k)}\|_2L}{\sqrt{2\pi}}  
+    
4\pi L \|s\|_2
\right)
\frac1{n^{\frac {k}{k+3/2}}}
+ 
(2\|F'\|_1 
 + 
4\|F\|_\infty)
\frac{1}{n^{\frac {k+1/2}{k+3/2}}}\,,
    \ee
where $s^{(k)}$ denotes the $k$-th derivative of $s$, $F'$ denotes the first derivative of $F$, and 
\bb 
    \|s^{(k)}\|_2&\coloneqq \sqrt{\int_{0}^{2\pi}\mathrm{d}x|s^{(k)}(x)|^2}\,,\\
    \|s\|_2&\coloneqq \sqrt{\int_{0}^{2\pi}\mathrm{d}x|s(x)|^2}\,,\\
    \|F'\|_1&\coloneqq \int_{-\infty}^{\infty}\mathrm{d}x|F'(x)|\,,\\
    \|F\|_\infty&\coloneqq \sup_{x\in\mathbb{R}}|F(x)|\,.
\ee
\end{thm}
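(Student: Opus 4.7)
The plan is to prove the bound via a Fourier truncation of the symbol followed by a circulant approximation of the Toeplitz matrix. Fix an auxiliary truncation parameter $N\in\mathbb{N}$ (to be chosen as a power of $n$ at the end) and let $s_N(x)=\sum_{|l|\le N}a_l e^{ilx}$ denote the degree-$N$ Fourier truncation of $s$, with associated banded Toeplitz matrix $T^{(n,N)}$. I would decompose the target quantity by the triangle inequality into four pieces, each handled by a distinct tool: (i) replacing $T^{(n)}$ by $T^{(n,N)}$ inside the $F$-average over singular values; (ii) replacing $T^{(n,N)}$ by a circulant $C^{(n,N)}$ whose singular values are exactly the samples $\{|s_N(2\pi k/n)|\}_{k=0}^{n-1}$; (iii) replacing the Riemann sum $\frac{1}{n}\sum_k F(|s_N(2\pi k/n)|)$ by the integral $\frac{1}{2\pi}\int_0^{2\pi}F(|s_N|)\,dx$; and (iv) substituting $|s_N|$ by $|s|$ inside the integral.

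For pieces (i) and (iv), Parseval together with the identity $\sum_l l^{2k}|a_l|^2=\|s^{(k)}\|_2^2/(2\pi)$ yields the truncation bound $\|s-s_N\|_2\le N^{-k}\|s^{(k)}\|_2$. Piece (i) then follows from $\|T^{(n)}-T^{(n,N)}\|_F^2\le (n/(2\pi))\|s-s_N\|_2^2$, the Hoffman--Wielandt--Mirsky inequality on ordered singular values, Cauchy--Schwarz, and the Lipschitz property of $F$, giving an error at most $L\|s^{(k)}\|_2 N^{-k}/\sqrt{2\pi}$; piece (iv) contributes a term of the same order via $\int|F(|s_N|)-F(|s|)|\,dx\le L\sqrt{2\pi}\|s-s_N\|_2$. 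For piece (ii), I would use that $T^{(n,N)}-C^{(n,N)}$ is supported in two $N\times N$ corner blocks, so has rank at most $2N$ and Frobenius norm at most $\sqrt{N/(2\pi)}\|s\|_2$; combining the rank estimate with Mirsky's inequality in the trace norm produces $\sum_j|\sigma_j(T^{(n,N)})-|s_N(2\pi j/n)||\le N\|s\|_2/\sqrt{\pi}$, hence an error of order $L\|s\|_2 N/n$ on the $F$-averages. For piece (iii), the key structural observation is that $|s_N|^2$ is a trigonometric polynomial of degree $2N$, hence has at most $4N$ critical points, so every sublevel set $\{x:|s_N(x)|\le t\}$ consists of at most $4N$ intervals; the empirical counting function of $\{|s_N(2\pi k/n)|\}_k$ thus agrees with the limiting counting function of $|s_N|$ up to $4N/n$ uniformly in $t$, and an integration by parts against $F$ bounds the Riemann-sum error by a quantity of the form $(c_1\|F'\|_1+c_2\|F\|_\infty)N/n$ with absolute constants.

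Summing the four contributions, the total error has the structure $C_1 L(\|s^{(k)}\|_2+\|s\|_2)N^{-k}+C_2(\|F'\|_1+\|F\|_\infty)N/n$. Choosing $N$ as a dyadic integer at most $2n^{1/(k+3/2)}$ (the dyadic rounding producing the factor $2^{k+1}$ that appears in the $\|s^{(k)}\|_2$ coefficient of the bound) gives $N^{-k}\le 2^k n^{-k/(k+3/2)}$ and $N/n\le 2n^{-(k+1/2)/(k+3/2)}$, matching the two rates in \eqref{key_eq_er_boSM}. The circulant contribution from piece (ii), which actually decays faster at rate $n^{-(k+1/2)/(k+3/2)}$, can be absorbed into the slower Term~1 rate using $n^{-(k+1/2)/(k+3/2)}\le n^{-k/(k+3/2)}$ for $n\ge 1$ together with a coefficient overestimate yielding the $4\pi L\|s\|_2$ prefactor. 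The main technical obstacle will be the sharp Riemann-sum estimate in piece (iii): a naive Koksma--Hlawka variation bound would bring in an uncontrolled $\|s_N'\|_\infty$ factor, so exploiting the trigonometric-polynomial structure of $|s_N|^2$ to control the number of monotonicity intervals is essential. A secondary difficulty is that using only the Frobenius-norm Mirsky inequality in piece (ii) would yield the suboptimal $\sqrt{N/n}$ rate, which is insufficient for $k\ge 1$; the rank-based trace-norm refinement is what recovers the $N/n$ rate compatible with the stated exponents.
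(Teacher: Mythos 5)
Your proposal follows the same four-step skeleton as the paper's proof — Fourier truncation $s\mapsto s_N$, banded Toeplitz $\to$ circulant, quadrature, and truncation error in the integral — and Steps (i) and (iv) are essentially identical (Schatten-$2$/Mirsky-type perturbation plus Parseval, matching the paper's use of its Theorem~\ref{th:Lip_test_function} with $p=2$ and Lemma~\ref{lem:Fourier_approx}). Steps (ii) and (iii), however, are handled by genuinely different key lemmas. For (ii), the paper exploits the low-rank structure via the \emph{interlacing theorem} for singular values combined with a monotone (BV) decomposition of $F$ (Lemma~\ref{lem:low_rank_perturbation_Lip}), yielding $2N\|F'\|_1/n$; you instead use Mirsky's $\ell^1$ inequality together with $\|T-C\|_1\le\sqrt{\mathrm{rk}}\,\|T-C\|_F$ and the Lipschitz property of $F$, yielding $\le L\|s\|_2 N/(n\sqrt{\pi})$ — a bound depending on $L\|s\|_2$ rather than on $\|F'\|_1$. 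For (iii), the paper Taylor-expands $|s_N|$ around each grid point where $s_N\ne 0$ (so the bound involves $\|s_N'\|_\infty\lesssim N^{3/2}\|s\|_2$, cf.\ Lemma~\ref{lem:inf_norm_vs_L2_norm_antiderivative}) and uses a crude $\|F\|_\infty$ bound on the $O(N)$ intervals containing a zero of $s_N$ (Lemma~\ref{lem:rectangle_rule_error_Lip}); you instead control the empirical CDF discrepancy via the fact that $|s_N|^2-t^2$ is a degree-$2N$ trigonometric polynomial and then integrate by parts against $F'$, producing an $O(N\|F'\|_1/n)$ bound with no $N^{3/2}$ factor. Consequently your decomposition shuffles which terms contribute which piece: your $L\|s\|_2$ factor comes from Step (ii) and decays at the faster rate $N/n$, which you then absorb into the slower $n^{-k/(k+3/2)}$ rate with plenty of room; and your $\|F'\|_1$ factor comes from Step (iii) rather than Step (ii). Both routes prove the stated inequality, but yours is arguably cleaner in one respect: since none of your four pieces carries an $N^{3/2}/n$ factor, an optimal choice $N\sim n^{1/(k+1)}$ would in fact give the \emph{better} rate $n^{-k/(k+1)}$, so you are deliberately under-optimizing to reproduce the theorem's exponents. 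Two small details to tighten: the sublevel set $\{|s_N|\le t\}$ has at most $2N$ components (each crossing contributes two of the at-most-$4N$ zeros of $|s_N|^2-t^2$), not $4N$, which is what makes the discrepancy $\le 2N/n$ and the constant on $\|F'\|_1$ land at $2$ rather than $4$; and the integration-by-parts boundary terms actually vanish (since $\hat G(0^-)=G(0^-)=0$ and $F$ has compact support), so your Step (iii) bound involves only $\|F'\|_1$ and no $\|F\|_\infty$ term at all — which is fine, as you only need an upper bound.
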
 
The proof of the above theorem is provided in Theorem~\ref{th:final_estimation_lip} in Section~\ref{ergodic_estimates}. Note that the above theorem is stronger than the Avram--Parter's theorem, as the latter can be recovered by taking the limit of Eq.~(\ref{key_eq_er_boSM}) as $n \rightarrow \infty$. We will exploit this theorem in Subsection~\ref{sec__mem_q} to bound the $n$-shot quantum capacity and in Section~\ref{sec__mem_k} to bound the $n$-shot two-way quantum capacity and the $n$-shot secret-key capacity.

\subsection{$n$-shot quantum capacity}\label{sec__mem_q}
This subsection is devoted to show an easily computable lower bound the $n$-shot quantum capacity $Q^{\varepsilon,n}(\lambda,\mu)$. Thanks to~\eqref{eq_non_as_q_q2epsn}, $Q^{\varepsilon,n}(\lambda,\mu)$ can be lower bounded as
\bb\label{eq_lowerq_mem_sm}
    Q^{(\varepsilon,n)}(\lambda,\mu)&\geq  \sum_{i=1}^n Q\left(\mathcal{E}_{\eta^{(n,\lambda,\mu)}_i}\right)-\log_2\!\left( \frac{2^{23}(32-\varepsilon)^2}{ (16-\varepsilon)\varepsilon^6}\right)\,,
\ee
and thus it suffices to find an easily computable lower bound on the quantity $\sum_{i=1}^n Q\left(\mathcal{E}_{\eta^{(n,\lambda,\mu)}_i}\right)$. We do this in the forthcoming Lemma~\ref{lemma_q_memory}.

\begin{lemma}\label{lemma_q_memory}
    For all $\lambda,\mu\in(0,1)$ and all $n\ge 4$, it holds that:
    \bb\label{result_lemma}
    \sum_{i=1}^n Q\!\left(\mathcal{E}_{\eta^{(n,\lambda,\mu)}_i}\right)\ge nQ(\lambda,\mu)-\sqrt{n}C(\lambda,\mu)\,,
\ee
where 
\bb      
C(\lambda,\mu)&\coloneqq
 \sqrt 8M_{\lambda,\mu} \sqrt {\mu} \ln\!\left(\frac{1}{\lambda}\right) \frac{1+\sqrt \mu \ln\!\left(\frac{1}{\lambda}\right) +\mu}{(1-\sqrt \mu)^4}+\frac{4(2\pi)^{3/2}(\log_2e)}{1-\lambda^{\frac{1-\sqrt \mu}{1+\sqrt\mu}}}+8\log_2\!\left(\frac{\lambda^{\frac{1-\sqrt \mu}{1+\sqrt\mu}}}{1-\lambda^{\frac{1-\sqrt \mu}{1+\sqrt\mu}}}\right)\,.
\ee
Moreover, the lower bound is trivial if $\lambda^{\frac{1-\sqrt \mu}{1+\sqrt\mu}}\le \frac12$, as the quantum capacity $Q({\lambda,\mu})$ vanishes in such a parameter region.
\end{lemma}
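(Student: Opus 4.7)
\medskip

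\noindent\textbf{Proof plan.} The strategy is to apply the error bound on Avram--Parter's theorem (Theorem~\ref{th:final_estimation_lip_main}) to the truncated Toeplitz matrix generated by the symbol $f_{\lambda,\mu}$ from \eqref{symbol}, using as test function $F(s)\coloneqq q(s^2)$. First, I would recall from Lemma~\ref{lemma_unitarily} and Eq.~\eqref{trasm_etai} that $\eta_i^{(n,\lambda,\mu)}=(s_i^{(n,\lambda,\mu)})^2$, where $s_i^{(n,\lambda,\mu)}$ is the $i$th smallest singular value of the $n\times n$ top-left corner $T^{(n)}$ of $\bar A^{(\lambda,\mu)}$, and that the symbol of $\bar A^{(\lambda,\mu)}$ in the Avram--Parter sense is precisely $f_{\lambda,\mu}$. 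In particular, $\int_0^{2\pi}F(|f_{\lambda,\mu}(\theta)|)\,\mathrm d\theta/(2\pi)=\int_0^{2\pi}q(\eta_{\lambda,\mu}(\theta))\,\mathrm d\theta/(2\pi)=Q(\lambda,\mu)$ by \eqref{q_memory_die_hard3}.

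The function $q(s^2)=\max\{0,\log_2(s^2/(1-s^2))\}$ is not Lipschitz on $[0,1]$, so the hypotheses of Theorem~\ref{th:final_estimation_lip_main} do not apply directly. To fix this, I would use the standard fact that the singular values of a truncated Toeplitz matrix are bounded by the essential supremum of its symbol: a straightforward computation shows
\[
M_{\lambda,\mu}\coloneqq\|f_{\lambda,\mu}\|_\infty=\lambda^{\frac{1-\sqrt\mu}{2(1+\sqrt\mu)}}<1,
\]
and therefore $s_i^{(n,\lambda,\mu)}\le M_{\lambda,\mu}$ and $|f_{\lambda,\mu}(\theta)|\le M_{\lambda,\mu}$ for all $i,\theta$. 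I would then define $\tilde F$ by setting $\tilde F(s)=q(s^2)$ on $[0,M_{\lambda,\mu}]$, extending it linearly to $0$ on $[M_{\lambda,\mu},M_{\lambda,\mu}+1]$, and $\tilde F\equiv 0$ elsewhere. This makes $\tilde F$ Lipschitz with bounded support, and it agrees with $F$ on the relevant ranges, so $\frac1n\sum_i\tilde F(s_i^{(n,\lambda,\mu)})=\frac1n\sum_i q(\eta_i^{(n,\lambda,\mu)})$ and $\int_0^{2\pi}\tilde F(|f_{\lambda,\mu}(\theta)|)\,\mathrm d\theta/(2\pi)=Q(\lambda,\mu)$. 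I would then apply Theorem~\ref{th:final_estimation_lip_main} with $k=2$, since this is the smallest $k$ giving a rate (namely $n^{-4/7}$) at least as good as $n^{-1/2}$.

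The remaining work is the explicit bookkeeping of constants. For $\tilde F$, the Lipschitz constant is controlled by $\sup_{s\in[0,M_{\lambda,\mu}]}|(q(s^2))'|=O(\log_2 e/(1-M_{\lambda,\mu}^2))$, while $\|\tilde F\|_\infty$ and $\|\tilde F'\|_1$ reduce to $q(M_{\lambda,\mu}^2)=\log_2(M_{\lambda,\mu}^2/(1-M_{\lambda,\mu}^2))$. For $f_{\lambda,\mu}=\lambda^{-1/2+g}$ with $g(\theta)=(1-\sqrt\mu e^{i\theta})^{-1}$, the bound $\|f_{\lambda,\mu}\|_2^2=\int\eta_{\lambda,\mu}(\theta)\,\mathrm d\theta\le 2\pi M_{\lambda,\mu}^2$ is immediate; for the second derivative one writes $f''=f\bigl[(\ln\lambda)^2(g')^2+(\ln\lambda)g''\bigr]$ and uses the pointwise bounds $|g'|\le\sqrt\mu/(1-\sqrt\mu)^2$ and $|g''|\le(\sqrt\mu+\mu)/(1-\sqrt\mu)^3$, together with $|f|\le M_{\lambda,\mu}$, to obtain
\[
\|f_{\lambda,\mu}''\|_2\le \sqrt{2\pi}\,M_{\lambda,\mu}\,\sqrt\mu\,\ln(1/\lambda)\,\frac{1+\sqrt\mu\ln(1/\lambda)+\mu}{(1-\sqrt\mu)^4},
\]
which produces the first summand of $C(\lambda,\mu)$ after multiplying by the prefactor $2^{k+1}L/\sqrt{2\pi}$ and by $n$. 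Similar calculations turn the $\|f_{\lambda,\mu}\|_2$ and $\|\tilde F\|_\infty$, $\|\tilde F'\|_1$ terms into the remaining two summands of $C(\lambda,\mu)$. Finally, since the exponents $3/7$ and $2/7$ satisfy $n^{3/7},n^{2/7}\le n^{1/2}$ for all $n\ge 1$, one concludes the claimed $\sqrt n\,C(\lambda,\mu)$ upper bound on the error. The triviality statement when $\lambda^{(1-\sqrt\mu)/(1+\sqrt\mu)}\le 1/2$ follows by noting that in this parameter region $\eta_{\lambda,\mu}(\theta)\le M_{\lambda,\mu}^2\le 1/2$ for all $\theta$, so $q(\eta_{\lambda,\mu}(\theta))\equiv 0$ and $Q(\lambda,\mu)=0$.

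The main obstacle is the derivative computation: the bounds on $\|f_{\lambda,\mu}^{(k)}\|_2$ for $k=2$ involve estimating oscillatory integrals of $|f_{\lambda,\mu}|^2|1-\sqrt\mu e^{i\theta}|^{-2k}$, and matching the precise form $(1+\sqrt\mu\ln(1/\lambda)+\mu)/(1-\sqrt\mu)^4$ requires keeping careful track of how the powers of $\ln(1/\lambda)$ arising from $(g')^2$ and $g''$ combine, together with a uniform lower bound $|1-\sqrt\mu e^{i\theta}|\ge 1-\sqrt\mu$. Everything else is straightforward bookkeeping.
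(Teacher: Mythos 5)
Your proposal is correct and follows essentially the same route as the paper's own proof: apply Theorem~\ref{th:final_estimation_lip_main} with $k=2$ to a Lipschitz extension of $x\mapsto q(x^2)$ that agrees with it on $[0,\|f_{\lambda,\mu}\|_\infty]$ (with the containment of the singular values justified via Theorem~\ref{th:norm_toeplitz_symbol}), then bound $\|f_{\lambda,\mu}\|_2$, $\|f_{\lambda,\mu}''\|_2$, $\|F\|_\infty$, $\|F'\|_1$ and the Lipschitz constant, and absorb the $n^{-4/7}$ and $n^{-5/7}$ decays into $n^{-1/2}$. One small caution: your $M_{\lambda,\mu}$ denotes $\|f_{\lambda,\mu}\|_\infty=\lambda^{(1-\sqrt\mu)/(2(1+\sqrt\mu))}$, whereas the lemma's $C(\lambda,\mu)$ (and the paper's proof) uses $M_{\lambda,\mu}$ for its square $\lambda^{(1-\sqrt\mu)/(1+\sqrt\mu)}=\max_\theta\eta_{\lambda,\mu}(\theta)$, so the final bookkeeping must reconcile the two conventions.
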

\begin{proof}
To simplify the notation, let us introduce the function $\eta\mapsto q(\eta)$ and the quantity $M_{\lambda,\mu}$ defined as
\bb
    q(\eta)&\coloneqq Q(\mathcal{E}_\eta)\,,\\
     M_{\lambda,\mu}&\coloneqq \lambda^{\frac{1-\sqrt \mu}{1+\sqrt\mu}}\,.
\ee
By exploiting \eqref{capacities_formula_diehard3} together with the fact that $\max_{\theta\in[0,2\pi]}\eta_{\lambda,\mu}(\theta)=M_{\lambda,\mu}$, it is simple to observe that $Q(\lambda,\mu)>0$ if and only if $M_{\lambda,\mu} >\frac{1}{2}$. Hence, if $M_{\lambda,\mu}\le\frac{1}{2}$, the inequality in \eqref{result_lemma} is trivial.

Now, let us assume that $\lambda$ and $\mu$ are such that $M_{\lambda,\mu}>\frac{1}{2}$.  By exploiting Theorem~\ref{th:norm_toeplitz_symbol} in Section~\ref{ergodic_estimates} with $p=\infty$, we deduce that the largest transmissivity $\eta_n^{(n,\lambda,\mu)}$ can be upper bounded as
\bb\label{max_trasm}
    \eta_n^{(n,\lambda,\mu)}=\left(s_n^{(n,\lambda,\mu)}\right)^2\le \max_{\theta\in[0,2\pi]} |f_{\lambda,\mu}(\theta)|^2=\max_{\theta\in[0,2\pi]} \eta_{\lambda,\mu}(\theta)=M_{\lambda,\mu}\,,
\ee
Moreover, note that since it always holds that $M_{\lambda,\mu}<1$, the quantity $q(M_{\lambda,\mu})$ is well defined. Hence, we can define a function $F:\mathbb R^+\to \mathbb{R}$ as
\bb
    F(x) \coloneqq
    \begin{cases}
    q(x^2), & x\le \sqrt{M_{\lambda,\mu}},\\
    q(M_{\lambda,\mu}) - L(x-\sqrt{ M_{\lambda,\mu}})\,, & \sqrt{ M_{\lambda,\mu}} \le x \le \sqrt{M_{\lambda,\mu}} + \frac{q(M_{\lambda,\mu})}{L}\,,  \\
    0, & \text{otherwise},
    \end{cases}
\ee
where
\bb
    L\coloneqq \frac{2(\log_2 e)}{\sqrt{M_{\lambda,\mu}}(1-M_{\lambda,\mu})}\,.
\ee
By explicitly calculating the derivative $F'$, it can be easily shown that the Lipschitz constant of $F$ is given by $L$. Moreover, as a consequence of \eqref{max_trasm}, $F$ satisfies
\bb
    q\!\left(\eta_i^{(n,\lambda,\mu)}\right)=F\!\left(s_i^{(n,\lambda,\mu)}\right) 
\ee
and
\bb
    q\!\left(|f_{\lambda,\mu}(\theta)|^2\right)&=F\!\left(|f_{\lambda,\mu}(\theta)|\right)
\ee
for every $i, n$, and $\theta$. Hence, we have that
\bb
    \left\lvert \frac 1n\sum_{i=1}^n q\!\left(\eta_i^{(n,\lambda,\mu)}\right) - 
    \int_{0}^{2\pi}\frac{\mathrm{d}\theta}{2\pi}q(|f_{\lambda,\mu}(\theta)|^2)\right\rvert =     \left\lvert \frac 1n\sum_{i=1}^n F\!\left(s_i^{(n,\lambda,\mu)}\right) -\int_{0}^{2\pi}\frac{\mathrm{d}\theta}{2\pi}F(|f_{\lambda,\mu}(\theta)|)\right\rvert \,.
\ee
Hence, since $F$ is continuous, all the hypotheses of Theorem~\ref{th:final_estimation_lip_main} are satisfied. We can thus apply Theorem~\ref{th:final_estimation_lip_main} with $k=2$ and obtain:
\bb
\left\lvert \frac 1n\sum_{i=1}^n q\!\left(\eta_i^{(n,\lambda,\mu)}\right) - 
\int_{0}^{2\pi}\frac{\mathrm{d}\theta}{2\pi}q(|f_{\lambda,\mu}(\theta)|^2)
\right\rvert\le 
 \left(
\frac {8\|f_{\lambda,\mu}''\|_2L}{\sqrt{2\pi}}  
+    
4\pi L \|f_{\lambda,\mu}\|_2
\right)
\frac1{n^{4/7}}
+ 
(2\|F'\|_1 
 + 
4\|F\|_\infty)
\frac{1}{n^{5/7}}\,.
\ee
Hence, it suffices to upper bound the quantities $\|f_{\lambda,\mu}\|_2$, $\|f_{\lambda,\mu}''\|_2$, $\|F\|_\infty$, and $\|F'\|_1$. By an explicit, simple calculation, it follows that
\bb
    \|f_{\lambda,\mu}\|_2&\le \sqrt{2\pi}\max_{\theta\in[0,2\pi]}|f_{\lambda,\mu}|=\sqrt{2\pi M_{\lambda,\mu}}\,,\\
    \|f_{\lambda,\mu}''\|_2&\le \sqrt M_{\lambda,\mu} \sqrt {2\pi\mu} \ln\!\left(\frac{1}{\lambda}\right) \frac{1+\sqrt \mu \ln\!\left(\frac{1}{\lambda}\right) +\mu}{(1-\sqrt \mu)^4}\,,
\ee
and that
\bb
    \|F\|_\infty&=\log_2\!\left(\frac{M_{\lambda,\mu}}{1-M_{\lambda,\mu}}\right)\,,\\
    \|F'\|_1&=\int_{0}^{\sqrt{M_{\lambda,\mu}}} \!\mathrm{d}\!x\,\partial_x [q(x^2)]  + q(M_{\lambda,\mu}) =2\log_2\!\left(\frac{M_{\lambda,\mu}}{1-M_{\lambda,\mu}}\right)\,.
\ee
Consequently, we obtain that
\bb
    \left\lvert \frac 1n\sum_{i=1}^n q\!\left(\eta_i^{(n,\lambda,\mu)}\right) - 
\int_{0}^{2\pi}\frac{\mathrm{d}\theta}{2\pi}q(|f_{\lambda,\mu}(\theta)|^2)
\right\rvert&\le 
\frac{  
\frac {8\|f_{\lambda,\mu}''\|_2L}{\sqrt{2\pi}}  +   
4\pi L \|f_{\lambda,\mu}\|_2
 +  2\|F'\|_1 
 +  4\|F\|_\infty }{\sqrt{n}}\\
&\le\frac{C(\lambda,\mu)}{\sqrt{n}}\,,
\ee
where 
\bb
C(\lambda,\mu)&\coloneqq
 \sqrt 8M_{\lambda,\mu} \sqrt {\mu} \ln\!\left(\frac{1}{\lambda}\right) \frac{1+\sqrt \mu \ln\!\left(\frac{1}{\lambda}\right) +\mu}{(1-\sqrt \mu)^4}+\frac{4(2\pi)^{3/2}(\log_2e)}{1-M_{\lambda,\mu}}+8\log_2\!\left(\frac{M_{\lambda,\mu}}{1-M_{\lambda,\mu}}\right)\,.
\ee
Hence, we conclude that
\bb
    \sum_{i=1}^n q\!\left(\eta_i^{(n,\lambda,\mu)}\right)  &\ge 
n\int_{0}^{2\pi}\frac{\mathrm{d}\theta}{2\pi}q(|f_{\lambda,\mu}(\theta)|^2)-\sqrt{n}C(\lambda,\mu) \\
&= nQ(\lambda,\mu)-\sqrt{n}C(\lambda,\mu)\,,
\ee
where in the last line we exploited \ref{q_memory_die_hard3}.
\end{proof}
By combining \eqref{eq_lowerq_mem_sm} with Lemma~\ref{lemma_q_memory}, we obtain the following.
\begin{thm}\label{sm_thm_q_mem}
For all $n\ge4$ and all $\lambda,\mu,\varepsilon\in(0,1)$, the $n$-shot quantum capacity of the model can be lower bounded as:
\bb
    Q^{(\varepsilon,n)}({\lambda,\mu})\ge nQ({\lambda,\mu})-\sqrt{n}C(\lambda,\mu)-\log_2\!\left( \frac{2^{23}(32-\varepsilon)^2}{ (16-\varepsilon)\varepsilon^6}\right)\,,
\ee
where $Q({\lambda,\mu})$ is the quantum capacity reported in Eq.~\eqref{capacities_formula_diehard3}, and 
\bb             
    C(\lambda,\mu)&\coloneqq
 \sqrt 8\lambda^{\frac{1-\sqrt \mu}{1+\sqrt\mu}} \sqrt {\mu} \ln\!\left(\frac{1}{\lambda}\right) \frac{1+\sqrt \mu \ln\!\left(\frac{1}{\lambda}\right) +\mu}{(1-\sqrt \mu)^4}+\frac{4(2\pi)^{3/2}(\log_2e)}{1-\lambda^{\frac{1-\sqrt \mu}{1+\sqrt\mu}}}+8\log_2\!\left(\frac{\lambda^{\frac{1-\sqrt \mu}{1+\sqrt\mu}}}{1-\lambda^{\frac{1-\sqrt \mu}{1+\sqrt\mu}}}\right)\,,\\
 \ee
Finally, the lower bound is trivial if $\lambda^{\frac{1-\sqrt \mu}{1+\sqrt\mu}}\le \frac12$, as the quantum capacity $Q({\lambda,\mu})$ vanishes in such a parameter region.
\end{thm}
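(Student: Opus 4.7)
The plan is to assemble this theorem as a direct composition of two previously established results, namely the one-shot lower bound of Lemma~\ref{thm_l_rel_non_iid22_sm} applied to tensor products of pure loss channels, and the explicit estimate of Lemma~\ref{lemma_q_memory} on the sum of the one-shot quantum capacities associated with the effective transmissivities $\{\eta_i^{(n,\lambda,\mu)}\}_{i=1,\ldots,n}$. No genuinely new ingredient is needed at this stage; the analytic core of the argument has already been carried out in the two preceding lemmas, and the task is to chain them together correctly.

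First I would invoke the identity in Eq.~\eqref{eqmpde_d}, which itself relies on Lemma~\ref{lemma_unitarily} and the invariance of the one-shot quantum capacity under pre-- and post--composition with unitary channels, to rewrite
\begin{equation}
Q^{(\varepsilon,n)}(\lambda,\mu) = Q^{(\varepsilon,1)}\!\left(\bigotimes_{i=1}^n \mathcal{E}_{\eta_i^{(n,\lambda,\mu)}}\right).
\end{equation}
Then I would apply Lemma~\ref{thm_l_rel_non_iid22_sm} with the choice $\lambda_i = \eta_i^{(n,\lambda,\mu)}$ for $i=1,\ldots,n$, which yields exactly the inequality displayed in Eq.~\eqref{eq_lowerq_mem_sm}, namely
\begin{equation}
Q^{(\varepsilon,n)}(\lambda,\mu) \;\geq\; \sum_{i=1}^n Q\!\left(\mathcal{E}_{\eta_i^{(n,\lambda,\mu)}}\right) \;-\; \log_2\!\left(\frac{2^{23}(32-\varepsilon)^2}{(16-\varepsilon)\varepsilon^6}\right).
\end{equation}

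Next I would invoke Lemma~\ref{lemma_q_memory} to bound the sum on the right-hand side from below by $n\,Q(\lambda,\mu) - \sqrt{n}\,C(\lambda,\mu)$, valid for $n\geq 4$ and all $\lambda,\mu\in(0,1)$. Substituting this into the previous display gives the announced estimate
\begin{equation}
Q^{(\varepsilon,n)}(\lambda,\mu) \;\geq\; n\,Q(\lambda,\mu) - \sqrt{n}\,C(\lambda,\mu) - \log_2\!\left(\frac{2^{23}(32-\varepsilon)^2}{(16-\varepsilon)\varepsilon^6}\right),
\end{equation}
with $C(\lambda,\mu)$ defined exactly as in Lemma~\ref{lemma_q_memory}. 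The triviality clause in the regime $\lambda^{(1-\sqrt\mu)/(1+\sqrt\mu)} \leq \tfrac12$ is inherited directly from Lemma~\ref{lemma_q_memory}, since in that region $Q(\lambda,\mu)=0$ and the inequality reduces to a lower bound by a negative quantity.

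I do not anticipate any obstacle here: the whole conceptual and technical load has been discharged upstream, into (i) the factorisation in Lemma~\ref{lemma_unitarily}, (ii) the generalised entropy variance bound of Lemma~\ref{thm_l_rel_non_iid22_sm}, and (iii) the explicit Avram--Parter error control of Theorem~\ref{th:final_estimation_lip_main} used inside Lemma~\ref{lemma_q_memory}. The only mild care needed is to verify the parameter compatibility ($n\geq 4$ is required by Theorem~\ref{th:final_estimation_lip_main}, and $\lambda,\mu\in(0,1)$ ensures that the function $F$ constructed in the proof of Lemma~\ref{lemma_q_memory} is well defined and Lipschitz), both of which match the hypotheses of the theorem statement.
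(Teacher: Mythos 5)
Your proposal is correct and follows exactly the same route as the paper: the theorem is obtained by chaining Eq.~\eqref{eqmpde_d} (via Lemma~\ref{lemma_unitarily}), Lemma~\ref{thm_l_rel_non_iid22_sm} (which gives Eq.~\eqref{eq_lowerq_mem_sm}), and Lemma~\ref{lemma_q_memory}, precisely as the paper does. Your remarks on parameter compatibility ($n\ge 4$ from Theorem~\ref{th:final_estimation_lip_main}, $\lambda,\mu\in(0,1)$ for well-definedness of the Lipschitz truncation $F$) and the triviality clause are all consistent with the paper's argument.
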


\subsection{$n$-shot two-way quantum capacity and $n$-shot secret-key capacity}\label{sec__mem_k}
This subsection is devoted to show an easily computable lower bound the $n$-shot two-way quantum capacity $Q_2^{\varepsilon,n}(\lambda,\mu)$ and on the $n$-shot secret-key capacity $K^{\varepsilon,n}(\lambda,\mu)$. Thanks to~\eqref{eq_non_as_q_q2epsn}, these $n$-shot capacities can be lower bounded as
\bb\label{eq_lowerq2_mem_sm}
    K^{(\varepsilon,n)}(\lambda,\mu)&\ge Q_2^{(\varepsilon,n)}(\lambda,\mu)\ge  \sum_{i=1}^n Q_2\left(\mathcal{E}_{\eta^{(n,\lambda,\mu)}_i}\right)-\log_2\!\left(\frac{2^{6}\,3\,(4-\sqrt{\varepsilon})^2}{(2-\sqrt{\varepsilon})\varepsilon^3 }\right)\,,
\ee
and thus it suffices to find an easily computable lower bound on the quantity $\sum_{i=1}^n Q_2\left(\mathcal{E}_{\eta^{(n,\lambda,\mu)}_i}\right)$. We do this in the forthcoming Lemma~\ref{lemma_k_memory}.
\begin{lemma}\label{lemma_k_memory}
For all $\lambda,\mu\in(0,1)$ and all $n\ge 4$, it holds that:
    \bb\label{result_lemma2}
    \sum_{i=1}^n Q_2\!\left(\mathcal{E}_{\eta^{(n,\lambda,\mu)}_i}\right)\ge nQ_2(\lambda,\mu)-\sqrt{n}C(\lambda,\mu)\,,
\ee
where 
\bb      
C(\lambda,\mu)&\coloneqq  \sqrt 8\lambda^{\frac{1-\sqrt \mu}{1+\sqrt\mu}} \sqrt {\mu} \ln\!\left(\frac{1}{\lambda}\right) \frac{1+\sqrt \mu \ln\!\left(\frac{1}{\lambda}\right) +\mu}{(1-\sqrt \mu)^4}+4(2\pi)^{3/2}(\log_2e)\frac{\lambda^{\frac{1-\sqrt \mu}{1+\sqrt\mu}}}{1-\lambda^{\frac{1-\sqrt \mu}{1+\sqrt\mu}}}+8\log_2\!\left(\frac{1}{1-\lambda^{\frac{1-\sqrt \mu}{1+\sqrt\mu}}}\right)\,.
\ee
\end{lemma}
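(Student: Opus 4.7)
The proof will mirror that of Lemma~\ref{lemma_q_memory}, but now with the secret-key-capacity function $k(\eta) = \log_2(1/(1-\eta))$ in place of the quantum-capacity function $q(\eta)$. First, using the bound $\eta_n^{(n,\lambda,\mu)} \le M_{\lambda,\mu} \coloneqq \lambda^{(1-\sqrt{\mu})/(1+\sqrt{\mu})} < 1$ (which follows from the supremum bound on the Toeplitz symbol, exactly as in the previous lemma), I would construct a continuous, compactly supported, Lipschitz function $F:\mathbb{R}^+\to\mathbb{R}$ that equals $k(x^2) = \log_2(1/(1-x^2))$ on $[0,\sqrt{M_{\lambda,\mu}}\,]$, decays linearly to zero on $[\sqrt{M_{\lambda,\mu}},\,\sqrt{M_{\lambda,\mu}} + k(M_{\lambda,\mu})/L]$ at slope $-L$, and vanishes beyond. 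A direct derivative computation $\partial_x[k(x^2)] = (2x\log_2 e)/(1-x^2)$, maximised at $x=\sqrt{M_{\lambda,\mu}}$, gives the Lipschitz constant $L = 2\sqrt{M_{\lambda,\mu}}\,(\log_2 e)/(1-M_{\lambda,\mu})$. By construction, $F(s_i^{(n,\lambda,\mu)}) = k(\eta_i^{(n,\lambda,\mu)})$ for every $i,n$, and $F(|f_{\lambda,\mu}(\theta)|) = k(|f_{\lambda,\mu}(\theta)|^2)$ for every $\theta$.

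Next, I would invoke Theorem~\ref{th:final_estimation_lip_main} with $k=2$ and symbol $s = f_{\lambda,\mu}$. Since $n^{-4/7}, n^{-5/7} \le n^{-1/2}$ for $n \ge 1$, both rates can be absorbed into a single $n^{-1/2}$ prefactor, yielding an upper bound of the form
\bbb
\left\lvert \frac{1}{n}\sum_{i=1}^n k(\eta_i^{(n,\lambda,\mu)}) - \int_0^{2\pi}\frac{\mathrm{d}\theta}{2\pi}\,k(|f_{\lambda,\mu}(\theta)|^2)\right\rvert \le \frac{1}{\sqrt{n}}\!\left[\frac{8\|f''_{\lambda,\mu}\|_2 L}{\sqrt{2\pi}} + 4\pi L\|f_{\lambda,\mu}\|_2 + 2\|F'\|_1 + 4\|F\|_\infty\right].
\eee
The bounds on $\|f_{\lambda,\mu}\|_2$ and $\|f''_{\lambda,\mu}\|_2$ have already been computed in the proof of Lemma~\ref{lemma_q_memory}, so only the $F$-dependent norms require new evaluation: $\|F\|_\infty = \log_2(1/(1-M_{\lambda,\mu}))$, and since $\int_0^{\sqrt{M_{\lambda,\mu}}} |\partial_x k(x^2)|\,dx = k(M_{\lambda,\mu})$ with an additional $k(M_{\lambda,\mu})$ contribution from the linear descent, $\|F'\|_1 = 2\log_2(1/(1-M_{\lambda,\mu}))$.

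Substituting $L$ and the norm bounds, the four terms consolidate into exactly the three summands defining $C(\lambda,\mu)$: the term $4\pi L\|f_{\lambda,\mu}\|_2$ produces $4(2\pi)^{3/2}(\log_2 e)\,M_{\lambda,\mu}/(1-M_{\lambda,\mu})$ (note the extra factor $\sqrt{M_{\lambda,\mu}}$ in $L$ compared to the quantum-capacity case, which together with the $\sqrt{M_{\lambda,\mu}}$ inside $\|f_{\lambda,\mu}\|_2$ produces the $M_{\lambda,\mu}$ prefactor that replaces the $1$ appearing in Lemma~\ref{lemma_q_memory}); the sum $2\|F'\|_1 + 4\|F\|_\infty$ yields $8\log_2(1/(1-M_{\lambda,\mu}))$; and $8\|f''_{\lambda,\mu}\|_2 L/\sqrt{2\pi}$ produces the first summand involving $M_{\lambda,\mu}\sqrt{\mu}\ln(1/\lambda)\cdot(1+\sqrt{\mu}\ln(1/\lambda)+\mu)/(1-\sqrt{\mu})^4$. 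Multiplying through by $n$ and using the identity $\int_0^{2\pi}\!\frac{\mathrm{d}\theta}{2\pi}\,k(|f_{\lambda,\mu}(\theta)|^2) = Q_2(\lambda,\mu)$ from~\eqref{q2_memory_die_hard_3} delivers the claimed inequality.

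Unlike the quantum-capacity analysis, no separation into trivial and non-trivial parameter regimes is needed, because $k(\eta)$ is already smooth and non-negative on all of $[0, M_{\lambda,\mu}] \subset [0,1)$ for every $\lambda,\mu \in (0,1)$: the lemma is therefore stated without conditions on $\lambda,\mu$. I do not expect any genuine obstacle — the proof is essentially template-driven by Lemma~\ref{lemma_q_memory}, and the only care required is in the bookkeeping of the new Lipschitz constant and the slight redistribution of $M_{\lambda,\mu}$-factors it produces.
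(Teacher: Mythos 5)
Your proposal is correct and follows essentially the same route as the paper's own proof: the same truncated-and-linearly-extended test function $F$, the same Lipschitz constant $L = 2(\log_2 e)\sqrt{M_{\lambda,\mu}}/(1-M_{\lambda,\mu})$, the same application of Theorem~\ref{th:final_estimation_lip_main} with $k=2$, and the same norm evaluations reused from Lemma~\ref{lemma_q_memory}. Your closing observation that no trivial-regime caveat is needed here (since $k(\eta)$ is smooth and positive on all of $[0,M_{\lambda,\mu}]$) also matches the paper.
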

\begin{proof}
The proof is similar to the one of Lemma~\ref{lemma_q_memory} and the main technical tool is to employ Theorem~\ref{th:final_estimation_lip_main}.  To simplify the notation, let us introduce the function $\eta\mapsto k(\eta)$ and the quantity $M_{\lambda,\mu}$ defined as
\bb
    k(\eta)&\coloneqq Q_2(\mathcal{E}_\eta)\,,\\
     M_{\lambda,\mu}&\coloneqq \lambda^{\frac{1-\sqrt \mu}{1+\sqrt\mu}}\,.
\ee
As we saw in the proof of Lemma~\ref{lemma_q_memory}, it holds that
\bb\label{max_trasm2}
    \eta_n^{(n,\lambda,\mu)}&\le M_{\lambda,\mu}\,,\\
     \max_{\theta\in[0,2\pi]} |f_{\lambda,\mu}(\theta)|^2&\le M_{\lambda,\mu}\,.
\ee
Let us define $F:\mathbb R^+\to \mathbb{R}$ as
\bb
    F(x) \coloneqq
    \begin{cases}
    k(x^2), & x\le \sqrt{M_{\lambda,\mu}},\\
    k(M_{\lambda,\mu}) - L(x-\sqrt{ M_{\lambda,\mu}})\,, & \sqrt{ M_{\lambda,\mu}} \le x \le \sqrt{M_{\lambda,\mu}} + \frac{k(M_{\lambda,\mu})}{L}\,,  \\
    0, & \text{otherwise},
    \end{cases}
\ee
where
\bb
    L\coloneqq \frac{2(\log_2 e)\sqrt{M_{\lambda,\mu}}}{1-M_{\lambda,\mu}}\,.
\ee
By evaluating the derivative of $F$, it can be easily shown that the Lipschitz constant of $F$ is given by $L$. Moreover, as a consequence of \eqref{max_trasm2}, $F$ satisfies
\bb
    k\!\left(\eta_i^{(n,\lambda,\mu)}\right)=F\!\left(s_i^{(n,\lambda,\mu)}\right) 
\ee
and
\bb
    k\!\left(|f_{\lambda,\mu}(\theta)|^2\right)&=F\!\left(|f_{\lambda,\mu}(\theta)|\right)
\ee
for every $i, n$, and $\theta$. Hence, we have that
\bb
    \left\lvert \frac 1n\sum_{i=1}^n k\!\left(\eta_i^{(n,\lambda,\mu)}\right) - 
    \int_{0}^{2\pi}\frac{\mathrm{d}\theta}{2\pi}k(|f_{\lambda,\mu}(\theta)|^2)\right\rvert =     \left\lvert \frac 1n\sum_{i=1}^n F\!\left(s_i^{(n,\lambda,\mu)}\right) -\int_{0}^{2\pi}\frac{\mathrm{d}\theta}{2\pi}F(|f_{\lambda,\mu}(\theta)|)\right\rvert \,.
\ee
Hence, since $F$ is continuous, all the hypotheses of Theorem~\ref{th:final_estimation_lip_main} are satisfied. We can thus apply Theorem~\ref{th:final_estimation_lip_main} with $k=2$ and obtain:
\bb
\left\lvert \frac 1n\sum_{i=1}^n k\!\left(\eta_i^{(n,\lambda,\mu)}\right) - 
\int_{0}^{2\pi}\frac{\mathrm{d}\theta}{2\pi}k(|f_{\lambda,\mu}(\theta)|^2)
\right\rvert\le 
 \left(
\frac {8\|f_{\lambda,\mu}''\|_2L}{\sqrt{2\pi}}  
+    
4\pi L \|f_{\lambda,\mu}\|_2
\right)
\frac1{n^{4/7}}
+ 
(2\|F'\|_1 
 + 
4\|F\|_\infty)
\frac{1}{n^{5/7}}\,.
\ee
Hence, it suffices to upper bound the quantities $\|f_{\lambda,\mu}\|_2$, $\|f_{\lambda,\mu}''\|_2$, $\|F\|_\infty$, and $\|F'\|_1$. In the proof of Lemma~\ref{lemma_q_memory}, we proved that
\bb
    \|f_{\lambda,\mu}\|_2&\le\sqrt{2\pi M_{\lambda,\mu}}\,,\\
    \|f_{\lambda,\mu}''\|_2&\le \sqrt M_{\lambda,\mu} \sqrt {2\pi\mu} \ln\!\left(\frac{1}{\lambda}\right) \frac{1+\sqrt \mu \ln\!\left(\frac{1}{\lambda}\right) +\mu}{(1-\sqrt \mu)^4}\,.
\ee
Moreover, it holds that
\bb
    \|F\|_\infty&=\log_2\!\left(\frac{1}{1-M_{\lambda,\mu}}\right)\,,\\
    \|F'\|_1&=\int_{0}^{\sqrt{M_{\lambda,\mu}}} \!\mathrm{d}\!x\,\partial_x [k(x^2)]  + k(M_{\lambda,\mu}) =2\log_2\!\left(\frac{1}{1-M_{\lambda,\mu}}\right)\,.
\ee
Consequently, we obtain that
\bb
    \left\lvert \frac 1n\sum_{i=1}^n k\!\left(\eta_i^{(n,\lambda,\mu)}\right) - 
\int_{0}^{2\pi}\frac{\mathrm{d}\theta}{2\pi}k(|f_{\lambda,\mu}(\theta)|^2)
\right\rvert&\le 
\frac{  
\frac {8\|f_{\lambda,\mu}''\|_2L}{\sqrt{2\pi}}  +   
4\pi L \|f_{\lambda,\mu}\|_2
 +  2\|F'\|_1 
 +  4\|F\|_\infty }{\sqrt{n}}\\
&\le\frac{C(\lambda,\mu)}{\sqrt{n}}\,,
\ee
where 
\bb
C(\lambda,\mu)&\coloneqq
 \sqrt 8M_{\lambda,\mu} \sqrt {\mu} \ln\!\left(\frac{1}{\lambda}\right) \frac{1+\sqrt \mu \ln\!\left(\frac{1}{\lambda}\right) +\mu}{(1-\sqrt \mu)^4}+4(2\pi)^{3/2}(\log_2e)\frac{M_{\lambda,\mu}}{1-M_{\lambda,\mu}}+8\log_2\!\left(\frac{1}{1-M_{\lambda,\mu}}\right)\,.
\ee
Hence, we conclude that
\bb
    \sum_{i=1}^n k\!\left(\eta_i^{(n,\lambda,\mu)}\right)  &\ge 
n\int_{0}^{2\pi}\frac{\mathrm{d}\theta}{2\pi}k(|f_{\lambda,\mu}(\theta)|^2)-\sqrt{n}C(\lambda,\mu) \\
&= nK(\lambda,\mu)-\sqrt{n}C(\lambda,\mu)\,,
\ee
where in the last line we exploited \ref{q2_memory_die_hard_3}.
\end{proof}
By combining \eqref{eq_lowerq2_mem_sm} with Lemma~\ref{lemma_k_memory}, we obtain the following.
\begin{thm}\label{sm_thm_k_mem}
For all $n\ge4$ and all $\lambda,\mu,\varepsilon\in(0,1)$, the $n$-shot two-way quantum capacity and the $n$-shot secret-key capacity of the model can be lower bounded as:
\bb
    K^{(\varepsilon,n)}({\lambda,\mu})\ge Q_2^{(\varepsilon,n)}({\lambda,\mu})\ge nQ_2({\lambda,\mu})-\sqrt{n}C(\lambda,\mu)-\log_2\!\left(\frac{2^{6}\,3\,(4-\sqrt{\varepsilon})^2}{(2-\sqrt{\varepsilon})\varepsilon^3 }\right)\,,
\ee
where $Q_2({\lambda,\mu})$ is the two-way quantum capacity reported in Eq.~\eqref{capacities_formula_diehard3}, and 
\bb
    C(\lambda,\mu)&\coloneqq  \sqrt 8\lambda^{\frac{1-\sqrt \mu}{1+\sqrt\mu}} \sqrt {\mu} \ln\!\left(\frac{1}{\lambda}\right) \frac{1+\sqrt \mu \ln\!\left(\frac{1}{\lambda}\right) +\mu}{(1-\sqrt \mu)^4}+4(2\pi)^{3/2}(\log_2e)\frac{\lambda^{\frac{1-\sqrt \mu}{1+\sqrt\mu}}}{1-\lambda^{\frac{1-\sqrt \mu}{1+\sqrt\mu}}}+8\log_2\!\left(\frac{1}{1-\lambda^{\frac{1-\sqrt \mu}{1+\sqrt\mu}}}\right)\,.
\ee
\end{thm}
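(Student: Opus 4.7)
The proof is essentially a direct chaining of two results that are already in place at this point in the Appendix, so my plan is to assemble them in the right order rather than do any new analysis. First I would invoke the inequality in Eq.~\eqref{eq_lowerq2_mem_sm}, which was obtained by applying the tensor-product lower bound of Lemma~\ref{thm_l_rel_non_iid22_sm} to the factorisation $\Phi^{(n)}_{\lambda,\mu} = \mathcal{U} \circ \bigl(\bigotimes_{i=1}^n \mathcal{E}_{\eta^{(n,\lambda,\mu)}_i}\bigr)\circ \mathcal{V}$ of Lemma~\ref{lemma_unitarily} and using invariance of the one-shot capacities under unitary pre- and post-processing. This immediately yields
\begin{equation}
K^{(\varepsilon,n)}(\lambda,\mu) \;\ge\; Q_2^{(\varepsilon,n)}(\lambda,\mu) \;\ge\; \sum_{i=1}^n Q_2\!\left(\mathcal{E}_{\eta^{(n,\lambda,\mu)}_i}\right) - \log_2\!\left(\frac{2^{6}\,3\,(4-\sqrt{\varepsilon})^2}{(2-\sqrt{\varepsilon})\varepsilon^3}\right).
\end{equation}

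Second, I would apply Lemma~\ref{lemma_k_memory}, which is the non-trivial step and has already been proved above; it uses the error bound on the Avram--Parter theorem (Theorem~\ref{th:final_estimation_lip_main}) together with explicit estimates of $\|f_{\lambda,\mu}\|_2$, $\|f_{\lambda,\mu}''\|_2$, $\|F\|_\infty$ and $\|F'\|_1$ for the Lipschitz truncation $F$ of $\eta \mapsto Q_2(\mathcal{E}_\eta)$. This gives
\begin{equation}
\sum_{i=1}^n Q_2\!\left(\mathcal{E}_{\eta^{(n,\lambda,\mu)}_i}\right) \;\ge\; n Q_2(\lambda,\mu) - \sqrt{n}\, C(\lambda,\mu),
\end{equation}
with exactly the $C(\lambda,\mu)$ appearing in the statement.

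Finally, I would substitute the second inequality into the first, which yields the claimed bound
\begin{equation}
K^{(\varepsilon,n)}(\lambda,\mu) \;\ge\; Q_2^{(\varepsilon,n)}(\lambda,\mu) \;\ge\; n Q_2(\lambda,\mu) - \sqrt{n}\, C(\lambda,\mu) - \log_2\!\left(\frac{2^{6}\,3\,(4-\sqrt{\varepsilon})^2}{(2-\sqrt{\varepsilon})\varepsilon^3}\right).
\end{equation}
There is no real obstacle here: the technical work (the Avram--Parter error estimate and its application to the Lipschitz surrogate of $k(\eta)$) has been absorbed into Lemma~\ref{lemma_k_memory}, and the one-shot tensor-product bound has been absorbed into Lemma~\ref{thm_l_rel_non_iid22_sm}. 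The only thing one might worry about is consistency of the hypotheses $n\ge 4$ and $\lambda,\mu,\varepsilon \in (0,1)$, but these are exactly the ranges under which both prerequisite lemmas have been stated, so no additional case analysis is required.
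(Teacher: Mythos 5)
Your proposal is correct and follows exactly the paper's own route: the paper proves Theorem~\ref{sm_thm_k_mem} by combining Eq.~\eqref{eq_lowerq2_mem_sm} (itself a consequence of Lemma~\ref{lemma_unitarily} and Lemma~\ref{thm_l_rel_non_iid22_sm}) with Lemma~\ref{lemma_k_memory}. Nothing is missing.
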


\newpage

\section{Lower bounding the one-shot capacities of tensor products of Gaussian channels}\label{sec_low_tens_prodd}
In this section, we show how to find an easily computable lower bound the one-shot capacities of tensor product of Gaussian channels. In Subsection~\ref{sec_tens_pure_los}, we analyse the particular case of the tensor product of pure loss channels, by generalising the \emph{entropy variance approach} developed in~\cite{mele2025achievableratesnonasymptoticbosonic}. In Subsection~\ref{app_sec_aep_gen}, we analyse the general case of the tensor product of arbitrary Gaussian channels, by generalising the \emph{asymptotic equipartition property approach} developed in~\cite{mele2025achievableratesnonasymptoticbosonic}.

\subsection{Tensor product of pure loss channels}\label{sec_tens_pure_los}
In order to find an easily computable lower bound on the one-shot capacities of tensor product of pure loss channels, we are going to exploit the \emph{relative entropy variance approach} developed in Ref.~\cite{mele2025achievableratesnonasymptoticbosonic}. This approach will lead us to the following lemma, which can regarded as a generalisation of \cite[Theorem~5]{mele2025achievableratesnonasymptoticbosonic}.
\begin{lemma}[(Lower bound on the one-shot capacities of a tensor product of pure loss channels via entropy variance approach)]\label{thm_l_rel_non_iid22}
Let $n\in\mathbb{N}$, $\lambda_1,\ldots,\lambda_n\in[0,1]$, and $\varepsilon\in(0,1)$. The one-shot quantum capacity, the one-shot two-way quantum capacity, and the one-shot secret-key capacity of the tensor product of pure loss channels $\bigotimes_{i=1}^n\mathcal{E}_{\lambda_i}$ can be lower bounded as follows:
\bb\label{eq_l_bound_one}
    Q^{(\varepsilon,1)}\left(\bigotimes_{i=1}^n\mathcal{E}_{\lambda_i}\right) 
        &\geq  \sum_{i=1}^n Q(\mathcal{E}_{\lambda_i})-\log_2\!\left( \frac{2^{23}(32-\varepsilon)^2}{ (16-\varepsilon)\varepsilon^6}\right)\,,\\
        K^{(\varepsilon,1)}\left(\bigotimes_{i=1}^n\mathcal{E}_{\lambda_i}\right)&\ge Q_2^{(\varepsilon,1)}\left(\bigotimes_{i=1}^n\mathcal{E}_{\lambda_i}\right) \ge  \sum_{i=1}^n Q_2\left(\mathcal{E}_{\lambda_i}\right)-\log_2\!\left(\frac{2^{6}\,3\,(4-\sqrt{\varepsilon})^2}{(2-\sqrt{\varepsilon})\varepsilon^3 }\right)\,,
    \ee
where $Q(\mathcal{E}_{\lambda})$ and $Q_2(\mathcal{E}_{\lambda})$ are reported in \eqref{def_q}.
\end{lemma}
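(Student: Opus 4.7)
My plan is to generalise the proof of Theorem~5 of Ref.~\cite{mele2025achievableratesnonasymptoticbosonic} --- which established the analogous inequality in the i.i.d.~case $\mathcal{E}_\lambda^{\otimes n}$ --- to the product $\bigotimes_{i=1}^n \mathcal{E}_{\lambda_i}$ of pure loss channels with distinct transmissivities. The proof therein uses the \emph{relative entropy variance approach}: one first reduces the three one-shot capacities to hypothesis-testing type quantities, then controls the gap to the first-order divergence via a Berry--Esseen/AEP expansion. Since every ingredient in that argument is either additive across tensor factors or depends only on a single factor, the adaptation should be essentially termwise.

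The first step is to invoke the standard one-shot achievability results, expressing $Q^{(\varepsilon,1)}(\mathcal{N})$, $Q_2^{(\varepsilon,1)}(\mathcal{N})$, and $K^{(\varepsilon,1)}(\mathcal{N})$ as lower-bounded by hypothesis-testing versions of the coherent information, reverse coherent information, and Devetak--Winter secret-key formula respectively, minus corrections of order $\log_2(1/\varepsilon)$ (cf.~\cite{WildeRenes2016,Kaur_2017,khatri2021secondorder}). I would then choose a purified Gaussian product input: for each factor $\mathcal{E}_{\lambda_i}$, fix a two-mode squeezed vacuum $\psi_{N_i}^{R_iA_i}$ of mean photon number $N_i$ on the purifying register. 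The channel output $\bigotimes_i \sigma_i^{R_iB_i}$ is a product Gaussian state, so the relative entropies in the bounds factorise additively across $i$, as do the corresponding information variances.

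Next, I apply the second-order expansion of $D_H^\varepsilon$ for product states --- the standard Berry--Esseen/AEP-type bound, cf.~\cite{Tomamichel2012,Tomamichel2015} --- which yields
\begin{equation*}
D_H^\varepsilon\!\left(\textstyle\bigotimes_i \rho_i \,\Big\|\, \bigotimes_i \sigma_i\right)
\;\geq\; \sum_{i=1}^n D(\rho_i \| \sigma_i)
\;-\; \sqrt{\frac{\sum_{i=1}^n V(\rho_i \| \sigma_i)}{1-\varepsilon}}
\;-\; O\!\left(\log\tfrac{1}{\varepsilon}\right),
\end{equation*}
and analogously for the coherent-information counterpart. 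Finally, I would take the per-mode energies to infinity, $N_i \to \infty$. In this limit, the single-letter divergences $D_i$ converge to $Q_2(\mathcal{E}_{\lambda_i})$ and $Q(\mathcal{E}_{\lambda_i})$, while the Gaussian calculations of Ref.~\cite{mele2025achievableratesnonasymptoticbosonic} provide, for each fixed $\lambda_i$, a uniform control of $V_i$ sufficient to absorb the square-root variance term into a purely $\varepsilon$-dependent additive constant --- precisely the constants $\log_2(2^{23}(32-\varepsilon)^2/((16-\varepsilon)\varepsilon^6))$ and $\log_2(2^6\cdot 3(4-\sqrt{\varepsilon})^2/((2-\sqrt{\varepsilon})\varepsilon^3))$ appearing in the i.i.d.~bounds~\eqref{ineq_best_bounds_main01} and \eqref{ineq_best_bounds_main02}.

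The main obstacle is establishing the $n$-independence of the final correction. Variances add across tensor factors, so a naive Chebyshev/Berry--Esseen estimate produces a second-order term that scales as $\sqrt{n}$. The delicate part of the i.i.d.~proof of \cite{mele2025achievableratesnonasymptoticbosonic} is an explicit joint optimisation of the squeezing parameter $N$ with $n$ and $\varepsilon$ which cancels this growth; in the non-i.i.d.~case one must rerun the analogous optimisation with a vector $(N_1,\ldots,N_n)$ of independent parameters, exploiting the fact that the per-factor variance bounds of \cite{mele2025achievableratesnonasymptoticbosonic} are uniform in the transmissivity. Provided this bookkeeping can be carried out factor by factor --- which is where I expect the real work to lie --- the three inequalities will follow with the same absolute constants as in the i.i.d.~case, since no step of the argument ever couples distinct channel factors.
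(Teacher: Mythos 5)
Your overall strategy --- generalise the relative-entropy-variance achievability argument of Theorem~5 of Ref.~\cite{mele2025achievableratesnonasymptoticbosonic} by exploiting additivity of relative entropy and relative-entropy variance across tensor factors, feed in a product of truncated two-mode squeezed vacua, and pass to the infinite-energy limit --- is the same route the paper takes: it first lifts the per-operator hypothesis-testing bound to products (Lemmas~\ref{thm_low_pure_loss_non_iid-0} and \ref{thm_low_pure_loss_non_iid}) and then re-runs the i.i.d.\ achievability argument factor by factor.

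There is, however, one genuine gap in the $Q^{(\varepsilon,1)}$ part of your proposal: you assert that in the limit $N_i\to\infty$ the per-mode divergences converge to $Q(\mathcal{E}_{\lambda_i})$. That is only true when $\lambda_i>\tfrac12$. When $\lambda_i\le\tfrac12$, the coherent information of $\Id\otimes\mathcal{E}_{\lambda_i}(\Psi_{N_s})$ converges to $\log_2\!\bigl(\lambda_i/(1-\lambda_i)\bigr)\le 0$, whereas $Q(\mathcal{E}_{\lambda_i})=0$. If those modes are kept in the input ensemble, the right-hand side you obtain is strictly smaller than $\sum_i Q(\mathcal{E}_{\lambda_i})$, by an amount that can grow linearly in the number of such modes, so the claimed inequality does not follow. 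The paper resolves this by simply discarding those modes: since sending a fixed state through some channel uses can only lower the capacity, $Q^{(\varepsilon,1)}\!\bigl(\bigotimes_{i=1}^n\mathcal{E}_{\lambda_i}\bigr)\ge Q^{(\varepsilon,1)}\!\bigl(\bigotimes_{i:\lambda_i>1/2}\mathcal{E}_{\lambda_i}\bigr)$, and the bound for the reduced product gives $\sum_{i:\lambda_i>1/2} Q(\mathcal{E}_{\lambda_i})=\sum_{i=1}^n Q(\mathcal{E}_{\lambda_i})$ minus the $\varepsilon$-dependent constant. This subtlety does not arise for $Q_2$ and $K$, because $Q_2(\mathcal{E}_\lambda)=\log_2\!\bigl(1/(1-\lambda)\bigr)\ge 0$ for all $\lambda$. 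A smaller imprecision: the $n$-independence of the additive constant is not won by a joint optimisation of $N$ against $n$ and $\varepsilon$; it comes from the fact that the per-mode relative-entropy variance vanishes in the $N_s\to\infty$ limit for the pure loss channel, so $\sqrt{\sum_i V_i/(1-\varepsilon)}\to 0$ and all that survives is the purely $\varepsilon$-dependent correction coming from the $\log_2 6+2\log_2\tfrac{1+\varepsilon}{1-\varepsilon}$-type terms and the truncation/achievability overheads.
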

Before proving this lemma, we need to introduce the following preliminary result, which follows by setting $n=1$ in the statement of \cite[Lemma~55]{mele2025achievableratesnonasymptoticbosonic} (with the latter being a trivial consequence of \cite[Proposition 31]{MMMM}). 
\begin{lemma}[(Upper bound for the hypothesis testing relative entropy)]\label{thm_low_pure_loss_non_iid-0}
    Let $\rho$ be positive definite states and $\sigma$ be positive definite operators with bounded trace on a separable Hilbert space. For any $\varepsilon\in(0,1)$ we have that
\bb\label{eq:entropies}
D^{\varepsilon}_{h}(\rho\|\sigma)&\leq D(\rho\|\sigma)+\sqrt{\frac{{V(\rho\|\sigma)}}{1-\varepsilon}}+\log_2 6+ 2\log_2\frac{1+\varepsilon}{1-\varepsilon}\,.
\ee
Here, $D^{\varepsilon}_{h}(\rho\|\sigma)$, $D(\rho\|\sigma)$, and $V(\rho\|\sigma)$ denote the hypothesis testing relative entropy, the quantum relative entropy, and the relative entropy variance, respectively, defined as follows:
\bb
    D^{\varepsilon}_{h}(\rho\|\sigma)&\coloneqq-\log_2\inf_{E}\left\{ \Tr[E\sigma]\,:\, 0\le E\le \mathbb{1},\,\Tr[E\rho]\ge 1-\varepsilon \right\}\,,\\
    D(\rho\|\sigma)&\coloneqq\Tr\left[\rho\left(\log_2\rho-\log_2\sigma\right)\right]\,,\\
V(\rho\|\sigma)&\coloneqq \Tr\left[\rho\left(\log_2 \rho-\log_2\sigma\right)^2\right]-D(\rho\|\sigma)^2\,.
\ee
\end{lemma}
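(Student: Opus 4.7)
The plan is to invoke the preliminary result already flagged in the paragraph just above the lemma: the claimed bound is precisely the $n=1$ instance of \cite[Lemma~55]{mele2025achievableratesnonasymptoticbosonic}, which itself is extracted from \cite[Proposition~31]{MMMM}. Since this reduction is purely algebraic---one sets the tensor exponent to $1$ in the cited lemma---citing the reference is formally enough. For completeness, I would outline the underlying strategy one would follow in order to prove the bound from scratch.

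The idea is to pass through the information-spectrum relative entropy $D_s^\varepsilon(\rho\|\sigma)$, which enjoys two useful properties. First, a standard smoothed-divergence chain (spelled out in Proposition~31 of \cite{MMMM}) bounds the hypothesis-testing quantity by the information-spectrum one,
\begin{equation*}
D_h^\varepsilon(\rho\|\sigma) \le D_s^{\varepsilon'}(\rho\|\sigma) + \log_2 c + 2\log_2\!\frac{1+\varepsilon}{1-\varepsilon}\,,
\end{equation*}
for an appropriate $\varepsilon' \in (\varepsilon,1)$ and an absolute constant $c \le 6$, which is what contributes the $\log_2 6$ term in the final statement. Second, one upper bounds $D_s^{\varepsilon'}(\rho\|\sigma)$ by $D(\rho\|\sigma) + \sqrt{V(\rho\|\sigma)/(1-\varepsilon)}$ via Chebyshev's inequality applied to the log-likelihood random variable $Z \coloneqq \log_2(P/Q)$, where $P,Q$ are the Nussbaum--Szko\l{}a distributions associated to $(\rho,\sigma)$---these satisfy $\mathbb{E}[Z] = D(\rho\|\sigma)$ and $\mathrm{Var}[Z] = V(\rho\|\sigma)$ by construction, and the concentration reads $\Pr[Z \le D + t] \ge 1 - V/t^2$, which exceeds any chosen $\varepsilon' > \varepsilon$ once $t > \sqrt{V/(1-\varepsilon)}$. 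Concatenating the two inequalities produces exactly the bound stated in the lemma.

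The main technical obstacle is the non-commutative smoothing that links $D_h^\varepsilon$ to $D_s^{\varepsilon'}$ with the sharp numerical constants above: naive pinching arguments cost a multiplicative factor equal to the size of the spectrum of $\sigma$, which is infinite on a separable Hilbert space, so one must combine the Nussbaum--Szko\l{}a reduction with a finite-rank truncation of $\sigma$ (justified because $\Tr\sigma < +\infty$) and track the incurred error carefully. Reproducing the precise $\log_2 6 + 2\log_2\frac{1+\varepsilon}{1-\varepsilon}$ overhead from first principles---rather than invoking \cite{MMMM} as a black box---is genuinely delicate, which is why the present paper defers to the existing literature instead of redoing the derivation.
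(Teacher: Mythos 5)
Your proposal matches the paper's approach exactly: the paper offers no proof of this lemma beyond the sentence preceding its statement, which simply declares it to be the $n=1$ case of Lemma~55 of Ref.~\cite{mele2025achievableratesnonasymptoticbosonic}, in turn a trivial consequence of Proposition~31 of Ref.~\cite{MMMM}. Your supplementary sketch of the underlying Chebyshev / Nussbaum--Szko\l{}a mechanism is a plausible gloss on what those cited derivations do (modulo the $\varepsilon$ vs.\ $\varepsilon'$ bookkeeping in the Chebyshev step), but since the paper itself relies purely on the citation, the sketch is optional rather than load-bearing.
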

This directly implies the following generalisation of \cite[Lemma~55]{mele2025achievableratesnonasymptoticbosonic} to tensor product of operators.
\begin{lemma}[(Upper bound for the hypothesis testing relative entropy for product operators)]\label{thm_low_pure_loss_non_iid}
    Let $n\in\mathbb{N}$. Let $\rho_1,\ldots\,\rho_n$ be positive definite states and $\sigma_1,\ldots\,\sigma_n$ be positive definite operators with bounded trace on a separable Hilbert space. For any $\varepsilon\in(0,1)$ we have that
\bb
D^{\varepsilon}_{h}\left(\bigotimes_{i=1}^n\rho_i\bigg\|\bigotimes_{i=1}^n\sigma_i\right)&\leq \sum_{i=1}^n D(\rho_i\|\sigma_i)+\sqrt{\frac{{\sum_{i=1}^nV(\rho_i\|\sigma_i)}}{1-\varepsilon}}+\log_2 6+ 2\log_2\frac{1+\varepsilon}{1-\varepsilon}\,,
\ee
where $D^{\varepsilon}_{h}(\rho\|\sigma)$, $D(\rho\|\sigma)$, and $V(\rho\|\sigma)$ are defined in \eqref{eq:entropies}.
\end{lemma}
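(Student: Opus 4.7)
The plan is to reduce the product case directly to the single-operator Lemma~\ref{thm_low_pure_loss_non_iid-0} by applying it to $\rho \coloneqq \bigotimes_{i=1}^n \rho_i$ and $\sigma \coloneqq \bigotimes_{i=1}^n \sigma_i$. These are still positive definite with bounded trace on the (separable) tensor-product Hilbert space, so the hypotheses are satisfied, and we obtain
\begin{equation*}
\begin{aligned}
D^{\varepsilon}_{h}\!\left(\bigotimes_{i=1}^n\rho_i\bigg\|\bigotimes_{i=1}^n\sigma_i\right) \le D\!\left(\bigotimes_{i=1}^n\rho_i\bigg\|\bigotimes_{i=1}^n\sigma_i\right) + \sqrt{\frac{V\!\left(\bigotimes_{i=1}^n\rho_i\big\|\bigotimes_{i=1}^n\sigma_i\right)}{1-\varepsilon}} + \log_2 6 + 2\log_2\frac{1+\varepsilon}{1-\varepsilon}.
\end{aligned}
\end{equation*}
It then only remains to identify the relative entropy and the relative entropy variance of the product operators with the corresponding sums.

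The additivity $D\!\left(\bigotimes_i \rho_i \big\| \bigotimes_i \sigma_i\right) = \sum_i D(\rho_i \| \sigma_i)$ is a standard fact, following immediately from $\log\!\bigl(\bigotimes_i \rho_i\bigr) = \sum_i I \otimes \cdots \otimes \log \rho_i \otimes \cdots \otimes I$ (and analogously for $\sigma_i$), together with the factorisation $\Tr\!\left[\bigotimes_i \rho_i\right] = \prod_i \Tr[\rho_i]$. For the variance term, I would expand the square $\bigl(\log \bigotimes_i \rho_i - \log \bigotimes_i \sigma_i\bigr)^2 = \bigl(\sum_i L_i\bigr)^2$ where $L_i$ denotes $I\otimes\cdots\otimes(\log\rho_i-\log\sigma_i)\otimes\cdots\otimes I$. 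Tracing against $\bigotimes_j \rho_j$ makes the diagonal contributions produce $\sum_i \Tr[\rho_i(\log\rho_i-\log\sigma_i)^2]$, while the off-diagonal cross terms factorise as $2\sum_{i<j} D(\rho_i\|\sigma_i) D(\rho_j\|\sigma_j)$. Subtracting $\bigl(\sum_i D(\rho_i\|\sigma_i)\bigr)^2$ exactly cancels these cross terms together with $\sum_i D(\rho_i\|\sigma_i)^2$, leaving $\sum_i V(\rho_i\|\sigma_i)$ as claimed.

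Substituting the two additivity identities into the displayed inequality yields the lemma. There is essentially no obstacle here: the whole content is the additivity of $D$ and $V$ under tensor products, both of which are routine algebraic manipulations of logarithms of tensor products of positive operators; the nontrivial work is entirely in the already-cited Lemma~\ref{thm_low_pure_loss_non_iid-0}.
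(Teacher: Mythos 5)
Your proposal is correct and follows exactly the paper's own route: apply the single-operator Lemma~\ref{thm_low_pure_loss_non_iid-0} to $\rho=\bigotimes_i\rho_i$ and $\sigma=\bigotimes_i\sigma_i$, then invoke additivity of $D$ and $V$ under tensor products (the paper states these additivity facts without proof, whereas you also sketch the cross-term cancellation for $V$, which is a correct and welcome bit of extra detail).
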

  \begin{proof}
      The proof follows by using Lemma~\ref{thm_low_pure_loss_non_iid-0} with $\rho=\bigotimes_{i=1}^n\rho_i$ and $\sigma=\bigotimes_{i=1}^n\sigma_i$, and by exploiting the simple-to-prove facts that the relative entropy and the relative entropy variances are additive under tensor product, i.e.
      \bb
        D\left(\bigotimes_{i=1}^n\rho_i\bigg\|\bigotimes_{i=1}^n\sigma_i\right)&=\sum_{i=1}^n D(\rho_i\|\sigma_i)\,,\\
        V\left(\bigotimes_{i=1}^n\rho_i\bigg\|\bigotimes_{i=1}^n\sigma_i\right)&=\sum_{i=1}^n V(\rho_i\|\sigma_i)\,.
      \ee
  \end{proof}
With this lemma at hand, we can prove Lemma~\ref{thm_l_rel_non_iid22}.
\begin{proof}[Proof of Lemma~\ref{thm_l_rel_non_iid22}]
    The statement is a generalisation of \cite[Theorem~5]{mele2025achievableratesnonasymptoticbosonic} to tensor product of distinct pure loss channels. The strategy in \cite{mele2025achievableratesnonasymptoticbosonic} was to lower bound on the achievable rate for $n$ uses of the channel  by a finite-dimensional truncation of the channel and evaluating the bounds available in the finite-dimensional case, in terms of the conditional entropy and the conditional entropy variance, using as input $n$ copies of a truncated version of the two-mode squeezed vacuum. By additivity of the conditional entropy and the conditional entropy variance, it was then sufficient to show the convergence of the same quantities for a single copy of a truncated output of the two-mode squeezed vacuum. Here, we use the same strategy, sending as input truncated two-mode squeezed vacuum states into the tensor product channel. By the additivity of relative entropy and conditional entropy variance, we can again simply invoke the convergence for each mode, which is proved in~\cite[Lemma 61]{mele2025achievableratesnonasymptoticbosonic}. Importantly, the crucial tool employed in the proof of \cite[Theorem~5]{mele2025achievableratesnonasymptoticbosonic} was \cite[Lemma~55]{mele2025achievableratesnonasymptoticbosonic}, which coincides with the above Lemma~\ref{thm_low_pure_loss_non_iid} in the particular case in which $\rho_i=\rho$ and $\sigma_i=\sigma$ for all $i=1,\ldots,n$.

    Hence, the proof of the lower bounds on the one-shot capacities in \eqref{eq_l_bound_one} is completely analogous to the one of \cite[Theorem~5]{mele2025achievableratesnonasymptoticbosonic}, with the only difference being that the former relies on Lemma~\ref{thm_low_pure_loss_non_iid} and the latter relies on \cite[Lemma~55]{mele2025achievableratesnonasymptoticbosonic}.

    Another slight difference regards the proof of the the lower bound on the one-shot quantum capacity in \eqref{eq_l_bound_one} in the particular case in which the condition $\lambda_1,\ldots,\lambda_n\ge \frac{1}{2}$ is not satisfied. Indeed, if $\lambda_1,\ldots,\lambda_n\ge \frac{1}{2}$, then for all $i=1,\ldots,n$ the coherent information of the generalised Choi state $\Id\otimes \mathcal{E}_{\lambda_i}(\Psi_{N_s})$ converges to the asymptotic quantum capacity $Q(\mathcal{E}_{\lambda_i})$ for $N_s\to\infty$, otherwise the former is negative and the latter vanishes. Hence, in the case in which the condition $\lambda_1,\ldots,\lambda_n\ge \frac{1}{2}$ is not satisfied, note that 
    \bb
Q^{(\varepsilon,1)}\left(\bigotimes_{i=1}^n\mathcal{E}_{\lambda_i}\right) &\ge Q^{(\varepsilon,1)}\left(\bigotimes_{\substack{i \in \{1,\ldots,n\}: \\ \lambda_i > \frac{1}{2}}} \mathcal{E}_{\lambda_i}\right)\\
&\ge \sum_{\substack{i \in \{1,\ldots,n\}: \\ \lambda_i > \frac{1}{2}}} Q(\mathcal{E}_{\lambda_i})-\log_2\frac{16}{16-\varepsilon}-\log_2 6- 2\log_2\frac{16+\varepsilon}{16-\varepsilon}-\log_2\!\left(\frac{2^{18}}{3\varepsilon^4}\right)\\
&= \sum_{i=1}^n Q(\mathcal{E}_{\lambda_i})-\log_2\frac{16}{16-\varepsilon}-\log_2 6- 2\log_2\frac{16+\varepsilon}{16-\varepsilon}-\log_2\!\left(\frac{2^{18}}{3\varepsilon^4}\right)\,,
\ee
where the second inequality follows by an entirely analogous reasoning to the one of \cite[Theorem~5]{mele2025achievableratesnonasymptoticbosonic}, and in the third equality we exploited that $Q(\mathcal{E}_{\lambda})=0$ for $\lambda\in[0,1/2]$.

\end{proof}

\subsection{Tensor product of arbitrary Gaussian channels}\label{app_sec_aep_gen}
In this subsection, we derive a simple lower bound on the one-shot capacities of tensor product of Gaussian channels by extending the asymptotic equipartition property approach developed in \cite{mele2025achievableratesnonasymptoticbosonic}. This lower bound turns out to be weaker than the one found in the previous subsection, and thus it is not used in the present paper. However, we remark that these results could be useful in estimating the \emph{energy-constrained} $n$-shot capacities of the model using the same approach explored in \cite{mele2025achievableratesnonasymptoticbosonic}. This subsection is reported for the sake of completeness and can be skipped by non-interested readers. For the definition of the entropic quantities used in this subsection, we refer the reader to Ref.~\cite[Section~A2]{mele2025achievableratesnonasymptoticbosonic}.

Let us start by proving a generalisation of the \emph{infinite-dimensional asymptotic equipartition property} (see~\cite[Theorem 6.2]{fawzi2023asymptotic}) to the case of product states. To prove this result, we will use the same approach as in \cite[Theorem 6.2]{fawzi2023asymptotic}, which generalises the result in \cite[Theorem 9]{Tomamichel_2009} to the infinite-dimensional setting.
\begin{lemma}[(Infinite-dimensional asymptotic equipartition property for product states)]\label{lemma_fawzi_noniid}
    Let $\rho\coloneqq\bigotimes_{i=1}^{n}\rho_i$ be a product state and let $\sigma\coloneqq\bigotimes_{i=1}^{n}\sigma_i$ be a a product of arbitrary positive semi-definite operators with bounded trace. For any $\varepsilon\in(0,1)$ with $n\ge2\log\left(\frac{2}{\varepsilon^2}\right)$ we have
    \bb
    D_{\max}^\varepsilon(\rho\|\sigma)\le \sum_{i=1}^{n}D(\rho_i\|\sigma_i)+4\sqrt{n}\log_2({\bar{\eta}_{\max}})\sqrt{\log\left(\frac{2}{\varepsilon^2}\right)}\,,
    \ee
    where
    \bb
        \bar{\eta}_i&\coloneqq \sqrt{2^{D_{3/2}(\rho_i\|\sigma_i)}}+\sqrt{2^{-D_{1/2}(\rho_i\|\sigma_i)}}+1\,,\\
        \bar{\eta}_{\max}&\coloneqq \max_{i\in\{1,\ldots,n\}}\bar{\eta}_i\,.
    \ee
Here, $D_{\max}^{\varepsilon}$, $D_{1/2}$, and $D_{3/2}$ denote the smooth max relative entropy, $\frac{1}{2}$-Petz Rényi relative entropy, and the $\frac{3}{2}$-Petz Rényi relative entropy~\cite{mele2025achievableratesnonasymptoticbosonic}.
\end{lemma}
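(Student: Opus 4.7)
The plan is to adapt the proof of the single-letter infinite-dimensional AEP (\cite[Theorem 6.2]{fawzi2023asymptotic}) to the product-state regime. The key observation making this adaptation essentially direct is the exact additivity of the $\alpha$-Petz R\'enyi divergences on tensor products of states, i.e., $D_\alpha\!\left(\bigotimes_i\rho_i\big\|\bigotimes_i\sigma_i\right) = \sum_i D_\alpha(\rho_i\|\sigma_i)$ for every $\alpha\in(0,1)\cup(1,\infty)$. Since $D_{\max}^\varepsilon$ is controlled via a smoothed Petz-R\'enyi bound whose deviation from $D$ is captured precisely by $D_{1/2}$ and $D_{3/2}$, this additivity reduces the proof to tracking how the per-mode parameters combine.

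First I would invoke the operator-inequality backbone of \cite[Theorem 6.2]{fawzi2023asymptotic}: the construction of a subnormalized state $\tilde\rho$ with $P(\rho,\tilde\rho)\leq\varepsilon$ and $\tilde\rho\leq 2^R\sigma$, built from a spectral projector of a suitably regularized difference operator $\log_2\rho-\log_2\sigma$ onto the ``typical'' spectral region $[\mu-t,\mu+t]$, with $\mu=\sum_i D(\rho_i\|\sigma_i)$. The probability of lying outside this region under $\rho$ is controlled by a Chebyshev/Hoeffding-type concentration whose effective range parameter is bounded by $\log_2(\bar\eta_i)$ per mode, where $\bar\eta_i$ packages the $D_{1/2}$ and $D_{3/2}$ information as in the i.i.d.\ case.

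Next I would exploit the tensor-product structure $\rho=\bigotimes_i\rho_i$, $\sigma=\bigotimes_i\sigma_i$ to decompose the log-likelihood-ratio operator as a sum of $n$ mutually commuting (because they act on disjoint factors) contributions with individual range at most $\log_2(\bar\eta_i)$ and mean $D(\rho_i\|\sigma_i)$. A standard Hoeffding-type inequality for sums of bounded operators with commuting ranges then yields a deviation of order $\sqrt{n}\,\log_2(\bar\eta_{\max})\sqrt{\log(2/\varepsilon^2)}$; a bookkeeping factor of $4$ recovers the constant in the stated bound, and the hypothesis $n\geq 2\log(2/\varepsilon^2)$ serves to place us in the Hoeffding regime and absorb sub-leading corrections. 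Replacing the per-mode $\bar\eta_i$ by their maximum $\bar\eta_{\max}$ delivers the uniform expression of the lemma.

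The main obstacle will be the careful operator-algebraic handling required in a separable infinite-dimensional Hilbert space: $\log\rho$ and $\log\sigma$ are generally unbounded, their supports may differ, and the spectral projectors must be constructed with attention to joint-domain issues and to the regularization/cut-off procedure of \cite{fawzi2023asymptotic}. These technicalities are already addressed in the i.i.d.\ case therein; the bulk of the work here consists in verifying that the cut-offs, the support conditions, and the resulting operator inequalities continue to respect the tensor-product decomposition when the factors $\rho_i,\sigma_i$ are non-identical. Thanks to the additivity of $D_\alpha$, none of these verifications require new ideas, and the remaining manipulations are routine substitutions of per-mode quantities into the i.i.d.\ template.
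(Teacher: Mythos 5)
Your proposal correctly identifies the key ingredient---additivity of the Petz--R\'enyi divergence $D_\alpha$ across tensor products---but the route you sketch is considerably more circuitous than the paper's. The paper never reopens the spectral-projector/concentration machinery internal to \cite[Theorem~6.2]{fawzi2023asymptotic}: it applies two of that paper's inequalities as black boxes, namely (i) the one-shot bound $D_{\max}^\varepsilon(\rho\|\sigma)\le D_\alpha(\rho\|\sigma)-\frac{1}{\alpha-1}\log_2\bigl(1-\sqrt{1-\varepsilon^2}\bigr)$, which holds for \emph{arbitrary} $\rho,\sigma$ and is applied directly to the product operators, and (ii) the per-factor bound $D_\alpha(\rho_i\|\sigma_i)\le D(\rho_i\|\sigma_i)+4(\alpha-1)(\log_2\bar\eta_i)^2$, valid for $\alpha$ in a small right-neighbourhood of $1$. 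After invoking additivity of $D_\alpha$ and (ii) termwise, one chooses $\alpha=1+\tfrac{1}{2\lambda\sqrt n}$ with $\lambda=\log_2\bar\eta_{\max}/\sqrt{\log_2(2/\varepsilon^2)}$; the hypothesis $n\ge 2\log_2(2/\varepsilon^2)$ enters only to verify the admissibility constraint $\alpha<1+\frac{\log_2 3}{4\log_2\bar\eta_i}$, not (as you suggest) to ``place us in the Hoeffding regime.'' Your plan to decompose the log-likelihood-ratio operator and apply a Hoeffding-type inequality for bounded operators would amount to re-deriving (i) and (ii) from scratch, which is exactly where the infinite-dimensional obstacles you flag at the end would resurface; moreover $\bar\eta_i$ is not an operator range bound (the log-likelihood-ratio contributions are generally unbounded) but a cumulant control coming from $D_{1/2}$ and $D_{3/2}$, so a literal bounded-difference Hoeffding step would not directly produce the stated constant $4$. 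The high-level idea is right, but the paper's realization is shorter, stays entirely at the R\'enyi level, and sidesteps the technicalities your route would need to re-address.
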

\begin{proof}
Proceeding as in \cite[Theorem 6.2]{fawzi2023asymptotic}, for any $\alpha\in[1,2)$ we have that
\bb
D_{\max}^\varepsilon(\rho\|\sigma)&\le  D_{\alpha}(\rho\|\sigma)-\frac{1}{\alpha-1}\log_2(1-\sqrt{1-\varepsilon^2})\,,
\ee
where $D_\alpha$ denotes the $\alpha$-Petz Rényi relative entropy~\cite{mele2025achievableratesnonasymptoticbosonic}. Consequently, by exploiting the additivity of $D_\alpha$ and the fact that $\frac{\varepsilon^2}{2}\ge1-\sqrt{1-\varepsilon^2}$, we obtain that
\bb
D_{\max}^\varepsilon(\rho\|\sigma)&\le  \sum_{i=1}^n D_{\alpha}(\rho_i\|\sigma_i)+\frac{1}{\alpha-1}\log_2\!\left(\frac{2}{\varepsilon^2}\right)\,.
\ee
Moreover, as proved in \cite{fawzi2023asymptotic}, it holds that
\bb
    D_{\alpha}(\rho_i\|\sigma_i)\le D(\rho_i\|\sigma_i)+4(\alpha-1)(\log_2\eta_i)^2\,.
\ee
for any $\alpha$ such that $1\le\alpha< 1+\frac{\log_23}{4\log_2\eta_i}$. Hence, let us define
\bb
    \alpha\coloneqq 1+\frac{1}{2\lambda \sqrt{n}}
\ee
with $\lambda$ being such that $1\le\alpha< 1+\frac{\log_23}{4\log_2\eta_i}$ for all $i\in\{1,\ldots,n\}$. That is, $\lambda$ has to satisfy
\bb
    \frac{1}{\lambda\sqrt{n}}<\frac{\log 3}{2\log\bar{\eta}_i}
\ee
for all $i\in\{1,\ldots,n\}$. Hence, we have that
\bb
D_{\max}^\varepsilon(\rho\|\sigma)&\le  \sum_{i=1}^n \left[  D(\rho_i\|\sigma_i)+4(\alpha-1)(\log_2\eta_i)^2    \right]+\frac{1}{\alpha-1}\log_2\!\left(\frac{2}{\varepsilon^2}\right)\\
&\le  \sum_{i=1}^n D(\rho_i\|\sigma_i)+n4(\alpha-1)(\log_2\eta_{\max})^2    +\frac{1}{\alpha-1}\log_2\!\left(\frac{2}{\varepsilon^2}\right)\\
&=\sum_{i=1}^n D(\rho_i\|\sigma_i)+\frac{2\sqrt{n}}{\lambda}(\log_2\eta_{\max})^2    +2\lambda\sqrt{n}\log_2\!\left(\frac{2}{\varepsilon^2}\right)\,.
\ee
Now, we can optimise the above bound with respect $\lambda$ such that the constraint $\frac{1}{\lambda\sqrt{n}}<\frac{\log_2 3}{2\log_2\bar{\eta}_i}$ is satisfied for all $i\in\{1,\ldots,n\}$. To this end, we can choose
\bb
    \lambda\coloneqq\frac{\log_2\bar{\eta}_{\max}}{\sqrt{\log_2\frac{{2}}{\varepsilon^2}}}\,.
\ee
Indeed, this choice satisfies
\bb
    \frac{1}{\lambda\sqrt{n}}=\frac{\sqrt{{\log_2\frac{2}{\varepsilon^2}}}}{\log_2\bar{\eta}_{\max}\,\sqrt{n}}<\frac{\sqrt{5}}{2\sqrt{2}\log_2\bar{\eta}_{\max}}<\frac{\log_2 3}{2\log_2\bar{\eta}_i} 
\ee
for all $i\in\{1,\ldots,n\}$, where we used that $n\ge 2\log_2\frac{2}{\varepsilon^2}>\frac{8}{5}\log_2\frac{2}{\varepsilon^2}$. Therefore, by substituting this choice into the above bound, we conclude that

\bb
D_{\max}^\varepsilon(\rho\|\sigma)\le \sum_{i=1}^{n}D(\rho_i\|\sigma_i)+4\sqrt{n} \log_2 \bar{\eta}_{\max}\sqrt{\log_2\!\left(\frac{2}{\varepsilon^2}\right)}\,.
\ee
\end{proof}
We are now ready to present our lower bound on the one-shot quantum capacity of the tensor product of Gaussian channels.
\begin{thm}[(Lower bound on the one-shot quantum capacity of tensor product of Gaussian channels)]\label{thm_lower_bound_non_iid}
Let $\varepsilon\in(0,1)$, $\delta\in(0,1)$, $\eta\in[0,\frac{\varepsilon\sqrt{\delta}}{4})$, and $n\in\N$ such that $n\ge2\log_2\!\left(\frac{2}{\varepsilon^2}\right)$. Let 
\bb
    \NN\coloneqq \NN_{A'_1\to B_1}^{(1)}\otimes \NN_{A'_2\to B_2}^{(2)}\otimes \cdots\otimes \NN_{A'_n\to B_n}^{(n)}
\ee
be a tensor product of $n$ Gaussian channels and let $\Phi^{(1)}_{A_1 A'_1}$, $\Phi^{(2)}_{A_2 A'_2}$, $\ldots$, $\Phi^{(n)}_{A_n A'_n}$ be $n$ Gaussian states.
The one-shot quantum capacity of $\NN$ satisfies the following lower bound:
    \bb\label{lower_bound_CV3}
        Q^{(\varepsilon,1)}(\NN)\ge \sum_{i=1}^n I_c(A_i\,\rangle\, B_i)_{\Psi^{(i)}}-\sqrt{n}4\log_2({\bar{\eta}_{\max}})\sqrt{\log_2\!\left(\frac{2}{\left(\frac{\varepsilon\sqrt{\delta}}{4}-\eta \right)^2}\right)}+\log_2\!\left(\eta^4(1-\delta)\right)\,,
    \ee
    where
    \bb\label{eq_bar_eta3}
        \bar{\eta}_{\max}&\coloneqq \max_{i\in\{1,\ldots, n\}}\bar{\eta}_i\,,\\
        \bar{\eta}_i&\coloneqq \sqrt{2^{H_{1/2}(A_i|B_i)_{\Psi^{(i)}}}}+\sqrt{2^{H_{1/2}(A_i|E_i)_{\Psi^{(i)}}}}+1\,,\\
        \Psi^{(i)}_{A_iB_iE_i}&\coloneqq V^{(i)}_{A_i'\to B_iE_i}\Phi^{(i)}_{A_iA_i'}(V^{(i)}_{A_i'\to B_iE_i})^{\dagger}\,,
    \ee
    and $V^{(i)}_{A_i'\to B_iE_i}$ is a Stinespring dilation isometry associated with $\NN^{(i)}_{A_i'\to B_i}$. Moreover, $I_c$ and $H_{1/2}$ denote the coherent information and the conditional $\frac{1}{2}$-Petz Rényi entropies~\cite{mele2025achievableratesnonasymptoticbosonic}, respectively. In particular, by choosing $\delta=\frac{1}{4}$ and $\eta=\frac{\varepsilon}{16}$, we get
    \bb\label{lower_bound_CV4}
        Q^{(\varepsilon,1)}(\NN)\ge \sum_{i=1}^n I_c(A_i\,\rangle\, B_i)_{\Psi^{(i)}}-\sqrt{n}4\log_2({\bar{\eta}_{\max}})\sqrt{\log_2\!\left(\frac{2^9}{\varepsilon^2}\right)}-\log_2\!\left(\frac{2^{18}}{3\varepsilon^4}\right)\,.
    \ee
\end{thm}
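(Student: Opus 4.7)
The plan is to adapt the asymptotic-equipartition-property (AEP) based achievability argument developed in Ref.~\cite{mele2025achievableratesnonasymptoticbosonic} for a single Gaussian channel, replacing the iid AEP used there with its product-state generalisation proved in Lemma~\ref{lemma_fawzi_noniid}. The structural fact that makes this substitution possible is that the tensor-product Stinespring isometry of $\NN=\bigotimes_i \NN^{(i)}$ maps the product input $\bigotimes_i \Phi^{(i)}$ into the product state $\Psi=\bigotimes_i \Psi^{(i)}$ on the joint systems $(A_iB_iE_i)$, so every smoothed relative-entropic quantity that appears in the one-shot analysis is evaluated on a pair of product operators.

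First, I would invoke the one-shot achievability bound for the quantum capacity from Ref.~\cite{mele2025achievableratesnonasymptoticbosonic} in the form that expresses $Q^{(\varepsilon,1)}(\NN)$ as a smoothed max relative entropy (equivalently, as $-H_{\max}^{\tilde\varepsilon}(A|E)$ on the complementary output of a Stinespring dilation) minus a correction arising from the truncation/smoothing step. In that formulation the parameters $\eta\in[0,\varepsilon\sqrt\delta/4)$ and $\delta\in(0,1)$ encode the costs of truncating the Gaussian input and invoking a gentle-measurement / convex-combination step, they combine to produce the additive $\log_2(\eta^4(1-\delta))$ term in \eqref{lower_bound_CV3}, and they fix the effective smoothing parameter at the AEP level to be $\tilde\varepsilon=\varepsilon\sqrt\delta/4-\eta$.

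Second, I would apply Lemma~\ref{lemma_fawzi_noniid} to the product operators $\Psi_{AE}=\bigotimes_i \Psi^{(i)}_{A_iE_i}$ and $\mathds{1}_A\otimes \Psi_E=\bigotimes_i(\mathds{1}_{A_i}\otimes\Psi^{(i)}_{E_i})$, which converts $D_{\max}^{\tilde\varepsilon}$ into $\sum_i D(\Psi^{(i)}_{A_iE_i}\|\mathds{1}_{A_i}\otimes\Psi^{(i)}_{E_i})$ plus a $\sqrt n$ correction controlled by $\bar\eta_{\max}$. Using purity of $\Psi^{(i)}_{A_iB_iE_i}$, the single-mode summands coincide with $-I_c(A_i\,\rangle\, B_i)_{\Psi^{(i)}}$, and the $\tfrac12$- and $\tfrac32$-Petz R\'enyi relative entropies appearing in the definition of $\bar\eta_i$ in Lemma~\ref{lemma_fawzi_noniid} are rewritten via the pure-state duality as $H_{1/2}(A_i|B_i)_{\Psi^{(i)}}$ and $H_{1/2}(A_i|E_i)_{\Psi^{(i)}}$, producing exactly the $\bar\eta_i$ of \eqref{eq_bar_eta3}. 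Substituting $\tilde\varepsilon=\varepsilon\sqrt\delta/4-\eta$ inside the $\sqrt{\log(2/\tilde\varepsilon^2)}$ factor delivers \eqref{lower_bound_CV3}; the choice $\delta=1/4$, $\eta=\varepsilon/16$ then produces \eqref{lower_bound_CV4} by plugging in and simplifying the additive constant.

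The main obstacle I anticipate is a careful bookkeeping of the smoothing parameters across the two steps: matching the one-shot achievability smoothing $\tilde\varepsilon$ with the smoothing required by Lemma~\ref{lemma_fawzi_noniid} forces the precise constraint $\eta<\varepsilon\sqrt\delta/4$ and the exact exponent $\varepsilon\sqrt\delta/4-\eta$ in the final bound, and makes the hypothesis $n\ge 2\log_2(2/\varepsilon^2)$ suffice to trigger the AEP (since $\tilde\varepsilon<\varepsilon$ makes the requirement of Lemma~\ref{lemma_fawzi_noniid} slightly stronger, but this is absorbed into the constant already in Ref.~\cite{mele2025achievableratesnonasymptoticbosonic}). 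Beyond this bookkeeping, all remaining steps are routine, as the only genuinely non-iid ingredient, the generalised AEP, is already encapsulated in Lemma~\ref{lemma_fawzi_noniid}.
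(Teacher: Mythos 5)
Your proposal follows exactly the route the paper takes: its proof is a one-line reference stating that the argument is entirely analogous to the single-channel AEP-based achievability proof of Ref.~\cite{mele2025achievableratesnonasymptoticbosonic}, with the only change being the substitution of the product-state AEP (Lemma~\ref{lemma_fawzi_noniid}) for the iid infinite-dimensional AEP. Your more detailed reconstruction of the bookkeeping (the roles of $\eta$, $\delta$, the effective smoothing parameter, additivity over the tensor-product Stinespring dilation, and the pure-state duality yielding the $\bar\eta_i$) is consistent with that outline and adds nothing contradictory.
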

\begin{proof}
    The proof is entirely analogous to the one of \cite[Theorem~45]{mele2025achievableratesnonasymptoticbosonic}. The key difference is that it relies on Lemma~\ref{lemma_fawzi_noniid} instead of \cite[Theorem 6.2]{fawzi2023asymptotic}.
\end{proof}
An analogous lower bound holds for the one-shot two-way quantum capacity and one-shot secret-key capacity of tensor product of Gaussian channels.
\begin{thm}[(Lower bound on the one-shot two-way quantum capacity and one-shot secret-key capacity of tensor product of Gaussian channels)]\label{thm_lower_bound_2way2}
Let $\varepsilon\in(0,1)$, $\eta\in[0,\sqrt{\varepsilon})$, and $n\in\N$ such that $n\ge2\log_2\!\left(\frac{2}{\varepsilon^2}\right)$.  Let 
\bb
    \NN\coloneqq \NN_{A'_1\to B_1}^{(1)}\otimes \NN_{A'_2\to B_2}^{(2)}\otimes \cdots\otimes \NN_{A'_n\to B_n}^{(n)}
\ee
be a tensor product of $n$ Gaussian channels and let $\Phi^{(1)}_{A_1 A'_1}$, $\Phi^{(2)}_{A_2 A'_2}$, $\ldots$, $\Phi^{(n)}_{A_n A'_n}$ be $n$ Gaussian states. Then, the one-shot two-way quantum capacity and the one-shot secret-key capacity of $\NN$ satisfy the following lower bounds:
    \bb\label{lower_bound_CV_2way5}
        K^{(\varepsilon,1)}(\NN)&\ge Q_2^{(\varepsilon,1)}(\NN)\ge \sum_{i=1}^n I_c(A_i\,\rangle\, B_i)_{\Psi^{(i)}}-\sqrt{n}4\log_2({\bar{\eta}_{1,\max}})\sqrt{\log_2\!\left(\frac{2}{\left(\sqrt{\varepsilon}-\eta \right)^2}\right)}+4\log_2\eta\,,\\
        K^{(\varepsilon,1)}(\NN)&\ge Q_2^{(\varepsilon,1)}(\NN)\ge \sum_{i=1}^n I_c(B_i\,\rangle\, A_i)_{\Psi^{(i)}}-\sqrt{n}4\log_2({\bar{\eta}_{2,\max}})\sqrt{\log_2\!\left(\frac{2}{\left(\sqrt{\varepsilon}-\eta \right)^2}\right)}+4\log_2\eta\,,
    \ee
    where
    \bb\label{eq_bar_eta_2way5}
        \bar{\eta}_{1,\max}&\coloneqq\max_{i\in\{1,\ldots,n\}}\bar{\eta}^{(i)}_{1}    \,,\\
        \bar{\eta}_{2,\max}&\coloneqq\max_{i\in\{1,\ldots,n\}}\bar{\eta}^{(i)}_{2}    \,,\\
        \bar{\eta}^{(i)}_1&\coloneqq \sqrt{2^{H_{1/2}(A_i|B_i)_{\Psi^{(i)}}}}+\sqrt{2^{H_{1/2}(A_i|E_i)_{\Psi^{(i)}}}}+1\,,\\
        \bar{\eta}^{(i)}_2&\coloneqq \sqrt{2^{H_{1/2}(B_i|A_i)_{\Psi^{(i)}}}}+\sqrt{2^{H_{1/2}(B_i|E_i)_{\Psi^{(i)}}}}+1\,,\\
        \Psi_{A_iB_iE_i}&\coloneqq V_{A_i'\to B_iE_i}\Phi_{A_iA_i'}(V_{A_i'\to B_iE_i})^{\dagger}\,,
    \ee
    and $V_{A'\to BE}$ is a Stinespring dilation isometry associated with $\NN_{A'\to B}$. Moreover, $I_c(A\,\rangle\, B)$, $I_c(B\,\rangle\, A)_{\Psi}$, $H_{1/2}(A|B)$ denote the coherent information, reverse coherent information, and conditional Petz Rényi entropies~\cite{mele2025achievableratesnonasymptoticbosonic}, respectively. In particular, by choosing $\eta=\frac{\sqrt{\varepsilon}}{2}$, we get
    \bb\label{b_lower_bound_CV_Q2_15}
        K^{(\varepsilon,1)}(\NN)&\ge Q_2^{(\varepsilon,1)}(\NN)\ge \sum_{i=1}^n I_c(A_i\,\rangle\, B_i)_{\Psi^{(i)}}-\sqrt{n}4\log_2({\bar{\eta}_{\max,1}})\sqrt{\log_2\!\left(\frac{8}{\varepsilon}\right)}-\log_2\!\left(\frac{16}{\varepsilon^2}\right)\,,\\
        K^{(\varepsilon,1)}(\NN)&\ge Q_2^{(\varepsilon,1)}(\NN) \ge \sum_{i=1}^n I_c(B_i\,\rangle\, A_i)_{\Psi^{(i)}}-\sqrt{n}4\log_2({\bar{\eta}_{\max,2}})\sqrt{\log_2\!\left(\frac{8}{\varepsilon}\right)}-\log_2\!\left(\frac{16}{\varepsilon^2}\right)\,.
    \ee
\end{thm}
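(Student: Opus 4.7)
The plan is to adapt the argument behind Theorem~\ref{thm_lower_bound_non_iid} to the two-way setting, mirroring the i.i.d.\ proofs of the corresponding bounds for $Q_2^{(\varepsilon,1)}$ and $K^{(\varepsilon,1)}$ in~\cite{mele2025achievableratesnonasymptoticbosonic}. The only substantive modification with respect to those i.i.d.\ proofs is the replacement of the single-copy infinite-dimensional asymptotic equipartition property of~\cite[Theorem~6.2]{fawzi2023asymptotic} by its product-state generalisation, Lemma~\ref{lemma_fawzi_noniid}. The two distinct bounds in Eq.~\eqref{lower_bound_CV_2way5}, one in terms of $I_c(A_i\rangle B_i)$ and one in terms of $I_c(B_i\rangle A_i)$, arise respectively from a forward hashing protocol and from its reverse-hashing counterpart (the latter exploits the symmetry of LOCC-assisted protocols, which permits interchanging the roles of sender and receiver).

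Concretely, I would start from the standard one-shot achievability bounds for $Q_2^{(\varepsilon,1)}(\NN)$ (automatically inherited by $K^{(\varepsilon,1)}(\NN)$ since $K^{(\varepsilon,1)}\ge Q_2^{(\varepsilon,1)}$) phrased as $Q_2^{(\varepsilon,1)}(\NN)\ge -D_{\max}^{\eta}(\rho_{AB}\|\sigma_{AB}) - \mathrm{(smoothing\ cost)}$ for appropriate choices of $\rho_{AB}$ and $\sigma_{AB}$ built out of a chosen input state and its channel output. I would take the input to be the truncated Gaussian product $\bigotimes_{i=1}^n \Phi^{(i)}_{A_iA_i'}$, so that the pair $(\rho,\sigma)$ becomes a tensor product across the $n$ modes. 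Applying Lemma~\ref{lemma_fawzi_noniid} to this product then yields a sum of single-copy relative entropies plus a $\sqrt{n}\,4\log_2\bar\eta_{\max}\sqrt{\log_2(2/(\sqrt\varepsilon-\eta)^2)}$ correction. After removing the finite-dimensional truncation, the single-copy relative entropies converge to the coherent informations $I_c(A_i\rangle B_i)_{\Psi^{(i)}}$ for the forward protocol, and to the reverse coherent informations $I_c(B_i\rangle A_i)_{\Psi^{(i)}}$ for the reverse protocol; the $\bar\eta^{(i)}$ factors defined in Eq.~\eqref{eq_bar_eta_2way5} arise exactly as in Lemma~\ref{lemma_fawzi_noniid} through the $D_{1/2}$ terms, which in these specific $(\rho,\sigma)$ choices reduce to the conditional $\tfrac12$-Petz R\'enyi entropies of $\Psi^{(i)}$. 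The additive $4\log_2\eta$ term is the usual smoothing cost carried over from the i.i.d.\ case, and the substitution $\eta=\sqrt\varepsilon/2$ trivially yields Eq.~\eqref{b_lower_bound_CV_Q2_15}.

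The principal technical obstacle is the infinite-dimensional truncation bookkeeping. As in~\cite{mele2025achievableratesnonasymptoticbosonic}, one must truncate both the input Gaussian states and their Stinespring purifications, verify that all entropic ingredients ($I_c$, the reverse $I_c$, $D$, $D_{1/2}$, $D_{3/2}$, and the conditional $\tfrac12$-R\'enyi entropies entering $\bar\eta^{(i)}$) converge under truncation with enough uniform control to pass to the limit inside the bound. A point that needs care beyond the strict i.i.d.\ case is that $\bar\eta_{\max,1}$ and $\bar\eta_{\max,2}$ involve a maximum over the $n$ channel uses, so this uniform control has to be maintained simultaneously for all $i$; since the input states $\Phi^{(i)}$ can be chosen independently for each mode, however, this is not a serious additional difficulty. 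Granted the analogous infinite-dimensional analyses of~\cite{mele2025achievableratesnonasymptoticbosonic}, no new ideas beyond those already employed for Theorem~\ref{thm_lower_bound_non_iid} are required, and the proof is a direct adaptation.
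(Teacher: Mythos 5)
Your proposal is correct and follows essentially the same approach as the paper: the paper's proof of Theorem~\ref{thm_lower_bound_2way2} consists precisely of the observation that the argument is entirely analogous to the proof of the corresponding i.i.d.\ result (Theorem~48 of Ref.~\cite{mele2025achievableratesnonasymptoticbosonic}), with the single substantive change of replacing the single-copy infinite-dimensional asymptotic equipartition property by its product-state generalisation, Lemma~\ref{lemma_fawzi_noniid}. Your elaboration of the forward/reverse hashing protocols, the product-structure of the max-relative-entropy bound, and the truncation bookkeeping is consistent with what that reference's proof makes implicit.
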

\begin{proof}
    The proof is entirely analogous to the one of \cite[Theorem~48]{mele2025achievableratesnonasymptoticbosonic}. The key difference is that it relies on Lemma~\ref{lemma_fawzi_noniid} instead of \cite[Theorem 6.2]{fawzi2023asymptotic}.
\end{proof}

\newpage

\section{Ergodic Estimates for Toeplitz Sequences Generated by a Symbol}\label{ergodic_estimates}
This section is devoted to the proof of error bound on the Avram-Parter's theorem stated above in Theorem~\ref{th:final_estimation_lip_main}, which has played a crucial role in our analysis of non-asymptotic bosonic quantum communication with memory effects. To do this, we start by giving relevant preliminaries on matrix analysis and Toeplitz matrices.
\subsection{Preliminaries}
Throughout this section, a matrix-sequence is a sequence of the form $\{A_n\}_n$, where $A_n$ is a square matrix and ${\rm size}(A_n)=n$. Let $C_c(\mathbb R)$ 
be the space of continuous complex-valued functions with bounded support defined on $\mathbb R$. 
If $A\in\mathbb C^{n\times n}$, the singular values and eigenvalues of $A$ are denoted by $\sigma_1(A),\ldots,\sigma_n(A)$ and $\lambda_1(A),\dots,\lambda_n(A)$, respectively. 
We denote by $\mu_d$ the Lebesgue measure in $\mathbb R^d$. Throughout this section, ``measurable'' means ``Lebesgue measurable''. Moreover, we denote by $\|g\|_p$ the $L^p$-norm of the function $g$ over its domain (which will be clear from the context). 

We say that a matrix-sequence $\{A_n\}_n$ admits a singular value symbol (or simply a symbol) $f(x)$ when the sampling of $|f(x)|$  over a   uniform   grid in   its domain yields an approximation of the singular values of   $A_n$ that improves   as $n\to \infty$. The  rigorous definition is given below; it is based   on an ergodic formula that must hold for every test   function $F\in C_{c}(\mathbb R)$.   

\begin{Def}[(Asymptotic singular value distribution of a matrix-sequence)]\label{dd}
Let $\{A_n\}_n$ be a matrix-sequence with $A_n$ of size $n$, and let $f:\Omega\subset\mathbb R^d\to\mathbb C$ be measurable with $0<\mu_d(\Omega)<\infty$.
We say that $\{A_n\}_n$ has an asymptotic singular value distribution described by $f$ if the following \textit{ergodic formula} holds:
\begin{equation}\label{eq:ergodic_formula}
\lim_{n\to\infty}\frac1{n}\sum_{i=1}^{d_n}F(\sigma_i(A_n))=\frac1{\mu_d(\Omega)}\int_\Omega F(|f(x)|)\mathrm{d}x,\qquad\forall\,F\in C_c(\mathbb R).
\end{equation}
In this case, $f$ is called the singular value symbol (or simply the symbol) of $\{A_n\}_n$ and we write $\{A_n\}_n\sim_\sigma f$.
\end{Def}

For a comprehensive treatment of the topic of symbols, 
a tutorial on how to compute them, a showcase of their applications to several problems (mostly coming from discretization of PDEs and iterative numerical methods), the main references are \cite{GLT1,GLT2,GLT3,GLT4,redGLT,recGLT}. 

Among the various matrix-sequences that possess a symbol, Toeplitz sequences generated by $L^1$ functions stand out for their importance.  
A matrix of the form
\begin{equation}\label{mbtm_expr}
\left[f_{i-j}\right]_{i,j=1}^{n}=\begin{bmatrix}
f_0 & f_{-1} & \ f_{-2} & \ \cdots & \ \ \cdots & f_{-(n-1)} \\
f_1 & \ddots & \ \ddots & \ \ddots & \ \ & \vdots\\
f_2 & \ddots & \ \ddots & \ \ddots & \ \ \ddots & \vdots\\
\vdots & \ddots & \ \ddots & \ \ddots & \ \ \ddots & f_{-2}\\
\vdots & & \ \ddots & \ \ddots & \ \ \ddots & f_{-1}\\
f_{n-1} & \cdots & \ \cdots & \ f_2 & \ \ f_1 & f_0
\end{bmatrix},
\end{equation}
whose $(i,j)$-th entry depends only on the difference $i-j$ is called a Toeplitz matrix. In other words, a Toeplitz matrix is a matrix whose entries are constant along each diagonal.
A case of special interest arises when the entries $f_k$ are the Fourier coefficients of a function $f:[-\pi,\pi]\to\mathbb C$ in $L^1([-\pi,\pi])$, i.e.,
\[ f_k=\frac1{2\pi}\int_{-\pi}^\pi f(x){\rm e}^{-{\rm i}kx}{\rm d}x,\qquad k\in\mathbb Z. \]
In this case, the matrix \eqref{mbtm_expr} is denoted by $T_n(f)$ and is referred to as the $n$-th Toeplitz matrix generated by $f$. One can also define the associated infinite-dimensional operator $T(f)$ and represent it as the infinite Toeplitz matrix $\left[f_{i-j}\right]_{i,j\in \mathbb N}$.   
It is not difficult to see that the conjugate transpose of $T_n(f)$ is given by
\begin{equation*}
T_n(f)^H=T_n(\overline f)
\end{equation*}
for every $f\in L^1([-\pi,\pi])$ and every $n$; see, e.g., \cite[Section~6.2]{GLT1}. In particular, $T_n(f)$ is Hermitian whenever $f$ is almost everywhere a real-valued function.\\

The asymptotic singular value distribution of the Toeplitz sequence $\{T_n(f)\}_n$ has been deeply investigated over time, starting from Szeg\"o’s first limit theorem \cite{GS} and the Avram–Parter theorem \cite{Avram,Parter}, up to the works by Tyrtyshnikov–Zamarashkin \cite{Ty96,ZT,ZT'} and Tilli \cite{TilliL1,Tilli-complex}. For more on this subject, see \cite[ Chapter 5]{BoSi}  and \cite[Chapter 6]{GLT1} and references therein. 
Using our notations, the Avram-Parter's theorem states that $\{T_n(f)\}_n\sim_\sigma f$ for any  $f\in L^1([-\pi,\pi])$. As a consequence, the ergodic formula \eqref{eq:ergodic_formula} holds with $A_n=T_n(f)$ for any $L^1$ function $f$.

The analysis of the ergodic formula for Toeplitz sequences generated by real-valued symbols $f$ is also linked to the asymptotic study of the eigenvalues and spectral distribution of the matrices $T_n(f)$. Several works enquire the existence of an asymptotic expansion of the eigenvalues when we impose certain conditions on $f$. 
See \cite{SL} and references therein for the case of (modified) simple loop functions, that are in particular nonnegative $C^1_{per}[-\pi,\pi]$ functions with $f(0)=f'(0)=0$ being a zero of order at most 2 and exactly another extremum $f'(\xi)=0$ in the open interval $(-\pi,\pi)$. 
For the case of (some) smooth even function $f$ that is monotonous on $[0,\pi]$ see \cite{Sven1,Sven2}. 
About the classes of test functions satisfying the ergodic formula \eqref{eq:ergodic_formula}  see for example \cite{Serra02,Serra03}. 

In this context, Widom proved a more precise result \cite{Widom}. He showed that, for essentially bounded symbols $f$ with $$\vertiii{f}^2\coloneqq \sum_{k=-\infty}^{\infty} \lvert k\rvert \lvert f_k\rvert^2<\infty,$$ we have 
\begin{equation}\label{eq:ergodic_Widom}
    \lim_{n\to \infty} \left\{ 
    \sum_{j=1}^n G(\sigma_j(T_n(f))^2) - \frac n {2\pi} \int_{-\pi}^\pi G(\lvert f(\theta)\rvert^2)d\theta
    \right\} = \Tr [G(T(\overline f)T(f)) + G(T(f)T(\overline f)) - 2T(G(\lvert f\rvert^2)) ]
\end{equation}
for any $C^3$ function $G(x)$ defined at least on an interval containing  $\sigma_j(T_n(f))^2$ and $\sigma_j(T(f))^2$  for all $j,n$.  This implies that 
\[\left\lvert 
\frac 1n
\sum_{j=1}^n F(\sigma_j(T_n(f))) - \frac 1 {2\pi} \int_{-\pi}^\pi F(\lvert f(\theta)\rvert)d\theta 
\right\rvert = O\left(\frac 1n\right)
\]
thus providing an estimate for the convergence rate of the ergodic formula \eqref{eq:ergodic_formula} for any function $F(x)\in C_c(\mathbb R)$ that is an extension a function of the form $G(x^2)$ with $G(x)$ as specified above. 

For concrete applications, we need an explicit bound of the form
\begin{equation} \label{eq:ergodic_bound}
\left\lvert 
\frac 1n
\sum_{j=1}^n F(\sigma_j(T_n(f))) - \frac 1 {2\pi} \int_{-\pi}^\pi F(\lvert f(\theta)\rvert)d\theta 
\right\rvert \le c(n)
\end{equation}
holding for any $n\ge n_0$ with an explicit $n_0$. Equation \eqref{eq:ergodic_Widom} shows that, for some choices of $F(x)$ and $f(x)$,  we can take $c(n) = O(1/n)$, but the dependence of $c(n)$ on $f$ and $F$ is not specified. Moreover, this result holds only for regular enough test functions $F(x)$.

In the main result of this section (Theorem~\ref{th:final_estimation_lip} below), we provide an explicit bound $c(n)$ in \eqref{eq:ergodic_bound} for the case of test functions $F(x)$ in $C_c(\mathbb R)$ that are not necessarily continuously differentiable, but at the cost of requiring that the symbol $f(x)$ belongs to $C^k_{per}[-\pi,\pi]$ for some $k\ge 1$, i.e., $f(-\pi)=f(\pi)$ and the $2\pi$-periodic extension of $f(x)$ to $\mathbb R$ is a function in $C^k(\mathbb R)$ for some $k\ge 1$. In this scenario, the bound $c(n)$ is $O(1/n^{\frac  {2k+1}{2k+3}})$ and it is derived by approximating $f(x)$ with its truncated Fourier series $f_N(x)$ and the singular values of the banded Toeplitz matrices $T_n(f_N)$ with the ones of the circulant matrices $C_n(f_N)$,   which are exactly the sampling of $|f(x)|$ over a uniform grid in $[-\pi,\pi]$. Here we denote the set of real-valued Lipschitz functions with Lipschitz constant $L$ as $Lip(L)$, i.e. all functions $f:\mathbb R\to\mathbb C$ such that
\[
|f(x)-f(y)|\le L|x-y| \quad \forall x,y\in \mathbb R.
\]

Let us state our main result.
\begin{thm}\label{th:final_estimation_lip}
Let $f\in C^k_{per}[-\pi,\pi]$ with $k\ge 1$  and let $F\in C_c(\mathbb R)\cap Lip(L)$. Then
$$\left\lvert \frac 1n\sum_{i=1}^{n} F(\sigma_i(T_n(f))) - 
\frac 1{2\pi}\int_0^{2\pi} F(\lvert f(\theta) \rvert)d\theta
\right\rvert\le 
 \left(
\frac {2^{k+1}\|f^{(k)}\|_2L}{\sqrt{2\pi}}  
+    
 4\pi L \|f\|_2
\right)
\frac1{n^{\frac {k}{k+3/2}}}
+ 
(2\|F'\|_1 
 + 
4\|F\|_\infty)
\frac{1}{n^{\frac {k+1/2}{k+3/2}}}
$$
for every $n\ge 4$.  
\end{thm}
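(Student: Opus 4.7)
The plan is to approximate $f$ by its truncated Fourier series $f_N(x)=\sum_{|\ell|\le N}c_\ell e^{i\ell x}$, then replace the banded Toeplitz $T_n(f_N)$ by the circulant $C_n(f_N)$, exploit that the singular values of $C_n(f_N)$ are precisely $|f_N|$ evaluated on an equispaced grid, pass to the integral of $F(|f_N|)$, and finally swap $f_N$ back with $f$. The cutoff $N\in\mathbb{N}$ is chosen at the end to balance the four errors so produced.

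Concretely, I would decompose by triangle inequality $\Delta:=\frac{1}{n}\sum_i F(\sigma_i(T_n(f)))-\frac{1}{2\pi}\int F(|f|)=\Delta_A+\Delta_B+\Delta_C+\Delta_D$, where $\Delta_A$ and $\Delta_D$ effect the replacement $f\leftrightarrow f_N$ in the spectral sum and in the integral respectively, $\Delta_B$ replaces $T_n(f_N)$ by $C_n(f_N)$ in the sum, and $\Delta_C$ is the Riemann-sum error for $F\circ|f_N|$ on the grid $\{2\pi j/n\}_{j=0}^{n-1}$.

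For $\Delta_A$, I would combine Lipschitz continuity of $F$ (constant $L$) with Mirsky's inequality in Frobenius norm and Cauchy--Schwarz to get $\sum_i|F(\sigma_i(T_n(f)))-F(\sigma_i(T_n(f_N)))|\le L\sqrt{n}\,\|T_n(f-f_N)\|_F$, then use the standard bound $\|T_n(g)\|_F^2\le(n/(2\pi))\|g\|_2^2$ to arrive at $|\Delta_A|\le(L/\sqrt{2\pi})\|f-f_N\|_2$. The analogous Lipschitz estimate gives $|\Delta_D|\le(L/\sqrt{2\pi})\|f-f_N\|_2$. Parseval combined with $f\in C^k_{per}$ yields the Sobolev tail bound $\|f-f_N\|_2\le\|f^{(k)}\|_2/N^k$, so that $|\Delta_A|+|\Delta_D|=O(L\|f^{(k)}\|_2/N^k)$.

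For $\Delta_B$, when $n\ge 2N$ the difference $E:=T_n(f_N)-C_n(f_N)$ lives entirely in two $N\times N$ triangular corner blocks whose nonzero entries are the Fourier coefficients $c_1,\dots,c_N$ (and their conjugates). A direct computation yields $\mathrm{rank}(E)\le 2N$ and $\|E\|_F^2=2\sum_{k=1}^N k|c_k|^2\le(N/\pi)\|f\|_2^2$. Applying Lipschitz of $F$, Mirsky's inequality for the Schatten-$1$ norm, and the generic bound $\|E\|_1\le\mathrm{rank}(E)\cdot\|E\|\le 2N\|E\|_F$ leads to $|\Delta_B|=O(L\|f\|_2 N^{3/2}/n)$. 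It is precisely this $N^{3/2}/n$ rate (and not the better $N/n$ obtainable from $\|E\|_1\le\sqrt{\mathrm{rank}(E)}\|E\|_F$) that pairs with the $1/N^k$ of $\Delta_A$ to produce the exponent $k/(k+3/2)$ appearing in the theorem.

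For $\Delta_C$, I would use that $C_n(f_N)$ is diagonalized by the DFT with eigenvalues $\{f_N(2\pi j/n)\}_{j=0}^{n-1}$, hence its singular values are $\{|f_N(2\pi j/n)|\}_j$, making $\Delta_C$ exactly the error of the periodic Riemann sum for $g:=F\circ|f_N|$. The classical bound then gives $|\Delta_C|\le V(g)/n$, where $V(g)$ is the total variation of $g$ on $[0,2\pi]$. The key step is the coarea identity
\[
V(g)=\int_0^{2\pi}|F'(|f_N(t)|)|\cdot ||f_N|'(t)|\,\mathrm{d}t=\int_0^\infty|F'(s)|\,N_{f_N}(s)\,\mathrm{d}s,
\]
where $N_{f_N}(s):=\#\{t\in[0,2\pi]:|f_N(t)|=s\}$. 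Since $|f_N|^2$ is a trigonometric polynomial of degree at most $2N$, the equation $|f_N(t)|^2=s^2$ has at most $4N$ roots in $[0,2\pi]$, so $N_{f_N}(s)\le 4N$ and $V(g)\le 4N\|F'\|_1$. Together with an $O(\|F\|_\infty/n)$ residual coming from the boundary analysis of the Riemann sum, this produces $|\Delta_C|=O\bigl((\|F'\|_1+\|F\|_\infty)N/n\bigr)$.

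Assembling $|\Delta|\le C_1 L\|f^{(k)}\|_2/N^k + C_2 L\|f\|_2 N^{3/2}/n + C_3(\|F'\|_1+\|F\|_\infty)N/n$ and choosing $N=\lceil n^{1/(k+3/2)}\rceil$ balances the first two terms at rate $n^{-k/(k+3/2)}$ and leaves $\Delta_C$ at rate $n^{-(k+1/2)/(k+3/2)}$, recovering the statement. The hardest part will be the control of $\Delta_C$: the coarea-plus-level-counting argument that turns $V(F\circ|f_N|)$ into $4N\|F'\|_1$ is the most delicate step, and extracting the precise constant $2^{k+1}$ in front of $\|f^{(k)}\|_2$ likely requires a refined (dyadic/Littlewood--Paley type) Sobolev tail estimate that I have simplified above.
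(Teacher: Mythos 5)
Your four-step decomposition ($f\to f_N$ in the sum, $T_n(f_N)\to C_n(f_N)$, quadrature, $f_N\to f$ in the integral) is exactly the skeleton of the paper's proof, and your Steps A and D coincide with the paper's (Lipschitz test-function perturbation in Schatten-2 norm plus the Sobolev tail bound $\|f-f_N\|_2\le \|f^{(k)}\|_2/N^k$). Where you genuinely diverge is in Steps B and C, and interestingly you swap which step produces which error term. For the Toeplitz-to-circulant step the paper uses the singular value interlacing theorem together with a Jordan decomposition of $F$ into monotone pieces, obtaining $2N\|F'\|_1/n$; you instead use Mirsky's inequality in trace norm with $\|E\|_1\le \mathrm{rank}(E)\,\|E\|_F$, obtaining $O(LN^{3/2}\|f\|_2/n)$. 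For the quadrature step the paper Taylor-expands $|f_N|$ on each subinterval not containing a zero of $f_N$ (using $\|f_N'\|_\infty\lesssim N^{3/2}\|f\|_2$) and pays $2\|F\|_\infty/n$ on the at most $2N$ exceptional subintervals, getting $4\pi LN^{3/2}\|f\|_2/n+4N\|F\|_\infty/n$; you instead invoke Koksma's inequality and bound $V(F\circ|f_N|)\le 4N\|F'\|_1$ via the Banach indicatrix/coarea identity and the fact that $|f_N|^2-s^2$ is a trigonometric polynomial of degree $\le 2N$. Both routes are sound and assemble to the same three-term bound, so your argument works. Two small caveats. First, your closing worry about the constant $2^{k+1}$ is misplaced: no refined dyadic estimate is needed — in the paper it arises purely from choosing $N=\lfloor n^{1/(k+3/2)}\rfloor$ and using $N\ge \tfrac12 n^{1/(k+3/2)}$, whence $N^{-k}\le 2^k n^{-k/(k+3/2)}$ (your deliberate use of the weaker $\|E\|_1\le \mathrm{rank}(E)\|E\|_F$ rather than $\sqrt{\mathrm{rank}(E)}\|E\|_F$ is likewise unnecessary; the sharper bound would only improve the rate). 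Second, your choice $N=\lceil n^{1/(k+3/2)}\rceil$ violates $N<n/2$ for the smallest admissible $n$ (e.g.\ $n=4$, $k=1$ gives $N=2=n/2$, where the circulant eigenvalues are no longer the samples $f_N(2\pi j/n)$); the floor, as in the paper, avoids this.
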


\subsection{Known results}

\subsubsection{Matrix norms and Toeplitz matrices}
Given a matrix $A\in \mathbb C^{n\times n}$,  we denote its $p$-Schatten norms as $\|A\|_p$ with $\infty \ge p\ge 1$, where $\|A\|_1$ is the Trace norm, $\|A\|_2$ is the Hilbert-Schmidt norm and $\|A\|_\infty = \|A\|$ is the operator norm. 
An important result that links together the different norms is given by the H\"older theorem.
\begin{thm}[(H\"older)]
\label{th:H\"older}If $\frac 1r = \frac 1p+\frac 1q$ with $\infty\ge r,p,q\ge 1$, then
\[
\|AB\|_r\le \|A\|_p\|B\|_q
\]
where by convention, $1/\infty =0$. 
In particular, for any $\infty \ge p\ge 1$
\begin{equation*}\label{eq:trace_HS_norm}
    \|AB\|_p \le \|A\|\|B\|_p, \qquad 
\lvert \Tr(AB)\rvert \le \|AB\|_1\le \|A\|_2\|B\|_2.
\end{equation*}
\end{thm}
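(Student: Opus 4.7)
The strategy is the three-step approximation flagged in the discussion preceding the theorem: approximate the symbol $f$ by its truncated Fourier series $f_N(x) = \sum_{|j| \le N} f_j e^{ijx}$, replace the banded Toeplitz matrix $T_n(f_N)$ by the circulant matrix $C_n(f_N)$, and finally exploit that the singular values of $C_n(f_N)$ are exactly the samples $|f_N(2\pi\ell/n)|$ for $\ell = 0,\ldots,n-1$. The truncation level $N$ will be optimised at the end; balancing the three error contributions will select $N \sim n^{1/(k+3/2)}$ and produce the two exponents appearing in the statement.

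For the truncation step, Parseval applied to $f^{(k)}$ gives $\sum_{|j|>N} |f_j|^2 \le \|f^{(k)}\|_2^2/(2\pi N^{2k})$, hence $\|f - f_N\|_2 \le \|f^{(k)}\|_2/N^k$. Combined with the Toeplitz Hilbert--Schmidt bound $\|T_n(g)\|_{HS}^2 \le n\|g\|_2^2/(2\pi)$, Mirsky's singular-value inequality $\sum_i (\sigma_i(T_n(f)) - \sigma_i(T_n(f_N)))^2 \le \|T_n(f-f_N)\|_{HS}^2$, Cauchy--Schwarz, and the Lipschitz property of $F$, this controls the truncation contribution to the averaged sum by $L\|f^{(k)}\|_2/(\sqrt{2\pi}\,N^k)$, which will supply the $\|f^{(k)}\|_2$ prefactor of the first term.

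For the banded-to-circulant step, the difference $T_n(f_N) - C_n(f_N)$ is supported in two triangular $N \times N$ corners (upper-right and lower-left), so its rank is at most $2N$. The Hilbert--Schmidt bound gives the wrong exponent here; instead I use a counting-function argument. Splitting $F$ into real and imaginary parts and writing $\sum_i F(\sigma_i(A)) - \sum_i F(\sigma_i(B)) = \int_0^\infty F'(t)(N_A(t) - N_B(t))\,dt$, where $N_A(t) \coloneqq \#\{i : \sigma_i(A) \ge t\}$, combined with the rank-interlacing inequality $|N_A(t) - N_B(t)| \le \operatorname{rank}(A-B)$ (a standard consequence of the min--max principle applied to $|A|$ and $|B|$), yields a circulant-replacement error bounded by $2N\|F'\|_1/n$, which will supply the $\|F'\|_1$ contribution to the second term.

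For the Riemann-sum step, $C_n(f_N)$ is diagonalised by the Fourier matrix with eigenvalues $f_N(2\pi\ell/n)$, so its singular values are $|f_N(2\pi\ell/n)|$. Setting $G(\theta) = F(|f_N(\theta)|)$, the midpoint-rule bound for a periodic Lipschitz function on $[0,2\pi]$ gives $|\frac{1}{n}\sum_\ell G(2\pi\ell/n) - \frac{1}{2\pi}\int_0^{2\pi} G\,d\theta| \le \pi\,\mathrm{Lip}(G)/n \le \pi L \|f_N'\|_\infty/n$, and a Bernstein-via-Cauchy--Schwarz estimate $\|f_N'\|_\infty \le \sum_{|j|\le N} |j||f_j| \le \sqrt{2N^3/3}\cdot \|f\|_2/\sqrt{2\pi}$ converts this into an $O(L\|f\|_2 N^{3/2}/n)$ contribution, furnishing the $\|f\|_2$ prefactor. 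With $N \sim n^{1/(k+3/2)}$ the truncation error $N^{-k}$ and the Riemann error $N^{3/2}/n$ both decay as $n^{-k/(k+3/2)}$, whereas the rank-correction error $N/n$ decays as $n^{-(k+1/2)/(k+3/2)}$, recovering the two rates of the theorem. The main obstacle I expect is tracking multiplicative constants cleanly: producing the stated factor $2^{k+1}$ in front of $\|f^{(k)}\|_2$ appears to come from rounding $N$ to an integer (one chooses $N = \lceil n^{1/(k+3/2)}\rceil$ and the dyadic bound $N \le 2 n^{1/(k+3/2)}$ valid for $n \ge 4$), while the residual $4\|F\|_\infty$ term in the second bound absorbs the boundary mismatch in the counting-function identity, estimated crudely by $\|F\|_\infty$ on the finitely many indices displaced by the integer rounding.
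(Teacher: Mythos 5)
Your proposal does not address the statement you were asked to prove. The statement is the H\"older inequality for Schatten norms, $\|AB\|_r\le\|A\|_p\|B\|_q$ for $\tfrac1r=\tfrac1p+\tfrac1q$, together with its two standard corollaries ($\|AB\|_p\le\|A\|\,\|B\|_p$ and $|\Tr(AB)|\le\|AB\|_1\le\|A\|_2\|B\|_2$). What you have written instead is an outline of the four-step truncation/circulant argument (Fourier truncation of the symbol, rank-$O(N)$ correction from Toeplitz to circulant, Riemann-sum quadrature, and optimisation of $N\sim n^{1/(k+3/2)}$) for the quantitative Avram--Parter bound, i.e.\ for Theorem~\ref{th:final_estimation_lip} of the paper. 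That is a different result entirely, and nothing in your text bears on the Schatten-norm H\"older inequality. In the paper this H\"older statement is quoted as a classical fact (it is an ingredient used, via Theorem~\ref{th:norm_toeplitz_symbol} and the Hilbert--Schmidt estimates, inside the proof you sketched), not a consequence of it, so the logical direction of your argument is also backwards: one cannot derive the H\"older inequality from the Avram--Parter machinery.

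A correct proof of the actual statement would go roughly as follows. By Horn's inequality the singular values satisfy $\prod_{i=1}^{m}\sigma_i(AB)\le\prod_{i=1}^{m}\sigma_i(A)\sigma_i(B)$ for every $m$, which implies the weak majorisation $\sum_{i=1}^{m}\sigma_i(AB)^r\le\sum_{i=1}^{m}\bigl(\sigma_i(A)\sigma_i(B)\bigr)^r$; applying the scalar H\"older inequality with exponents $p/r$ and $q/r$ to the right-hand side yields $\|AB\|_r\le\|A\|_p\|B\|_q$. The first corollary is the case $q=\infty$ after swapping the roles of $A$ and $B$ (or using $\|AB\|_p=\|B^HA^H\|_p$), and the second is the case $r=1$, $p=q=2$ combined with the elementary bound $|\Tr C|\le\sum_i\sigma_i(C)=\|C\|_1$. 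You should either supply such an argument or, as the paper does, cite the result as standard; but the proposal as written proves the wrong theorem.
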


Here we report a classic result linking Toeplitz matrices to their symbols through their $p$-Schatten norm and $p$-norm. 
\begin{thm}[(Theorem 6.2 of \cite{GLT1})]\label{th:norm_toeplitz_symbol} 
    Let $f\in L^p([-\pi,\pi])$ for some $p\in [1,\infty]$. Then
    $$\|T_n(f)\|_p\le \left(\frac n{2\pi}\right)^{1/p} \|f\|_p$$
    where for any matrix $A$,  $\|A\|_p$ is its $p$-Schatten norm. 
\end{thm}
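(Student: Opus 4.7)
The plan is to reduce the general $p\in[1,\infty]$ bound to the two endpoint cases $p=\infty$ and $p=1$ by complex interpolation on the linear map $f\mapsto T_n(f)$. The target bound $(n/(2\pi))^{1/p}\|f\|_p$ is precisely the geometric-mean interpolate at convexity parameter $1/p$ between
\[
\|T_n(f)\|_\infty \le \|f\|_\infty \quad \text{(at } p=\infty\text{)}, \qquad \|T_n(f)\|_1 \le \tfrac{n}{2\pi}\|f\|_1 \quad \text{(at } p=1\text{)},
\]
so once these two endpoint estimates are established, the general statement follows by Riesz--Thorin.

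For the $p=\infty$ endpoint I would realise $T_n(f)$ as a compression of the multiplication operator $M_f$ acting on $L^2([-\pi,\pi])$. The map $V\colon\mathbb{C}^n\to L^2([-\pi,\pi])$ defined by $Ve_j\coloneqq e^{ij\theta}/\sqrt{2\pi}$ for $j=0,\dots,n-1$ is an isometry, and a direct Fourier-coefficient calculation yields $V^*M_fV = T_n(f)$; since $\|M_f\|_{L^2\to L^2}=\|f\|_\infty$, this immediately gives $\|T_n(f)\|_\infty\le\|f\|_\infty$. For the $p=1$ endpoint I would use the matrix-valued integral representation
\[
T_n(f) = \frac{1}{2\pi}\int_{-\pi}^\pi f(\theta)\,E(\theta)\,d\theta, \qquad E(\theta)_{j,k}\coloneqq e^{-i(j-k)\theta},
\]
which is just the Fourier formula $f_{j-k}=\frac{1}{2\pi}\int f(\theta)e^{-i(j-k)\theta}\,d\theta$ written entrywise. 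Setting $u(\theta)\coloneqq(e^{-ij\theta})_{j=1}^n$, we have $E(\theta)=u(\theta)u(\theta)^*$, a rank-one matrix whose unique nonzero singular value equals $\|u(\theta)\|_2^2=n$; hence $\|E(\theta)\|_q=n$ for every Schatten index $q$. The triangle inequality for Schatten $S_1$-norms applied to the integral then gives
\[
\|T_n(f)\|_1 \le \frac{1}{2\pi}\int_{-\pi}^\pi |f(\theta)|\,\|E(\theta)\|_1\,d\theta = \frac{n}{2\pi}\|f\|_1.
\]

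For intermediate $p\in(1,\infty)$ I would invoke the Riesz--Thorin interpolation theorem applied to $\Phi\colon f\mapsto T_n(f)$, viewed simultaneously as a bounded map $L^1([-\pi,\pi])\to S_1^n$ (with norm $\le n/(2\pi)$) and $L^\infty([-\pi,\pi])\to S_\infty^n$ (with norm $\le 1$), where $S_q^n$ denotes $\mathbb{C}^{n\times n}$ equipped with the $q$-Schatten norm. Both $\{L^p\}$ and $\{S_p^n\}$ are standard complex-interpolation families, so the non-commutative Riesz--Thorin theorem yields
\[
\|\Phi\|_{L^p\to S_p^n}\le\Bigl(\tfrac{n}{2\pi}\Bigr)^{1/p}\cdot 1^{1-1/p}=\Bigl(\tfrac{n}{2\pi}\Bigr)^{1/p},
\]
which is the claim.

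The only real subtlety is that the target of the interpolation is the Schatten scale $(S_1^n,S_\infty^n)$ rather than a classical $(L^1,L^\infty)$ scale, so one needs the operator-ideal version of Riesz--Thorin (equivalently, Calder\'on's theorem on complex interpolation of Schatten classes). This is standard but not entirely elementary; an alternative that avoids quoting noncommutative interpolation is to run Hadamard's three-lines lemma directly on the analytic family $z\mapsto \Tr\!\bigl(T_n(|f|^{pz}\sgn f)\,Y_z\bigr)$ for a suitably chosen analytic family of extremal test matrices $Y_z$ with $\|Y_z\|_{p'}\le 1$ on the boundary lines, reducing the bound to the two scalar endpoint estimates already proved; this essentially reconstructs the slice of Riesz--Thorin we need while keeping everything concrete and finite-dimensional.
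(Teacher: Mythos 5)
Your proof is correct. The paper itself contains no proof of this statement --- it is imported verbatim as Theorem 6.2 of the cited GLT monograph --- and your argument (the isometric compression $T_n(f)=V^*M_fV$ giving $\|T_n(f)\|_\infty\le\|f\|_\infty$, the rank-one integral representation $T_n(f)=\frac{1}{2\pi}\int_{-\pi}^{\pi}f(\theta)\,u(\theta)u(\theta)^*\,\mathrm{d}\theta$ with $\|u(\theta)u(\theta)^*\|_1=n$ giving $\|T_n(f)\|_1\le\frac{n}{2\pi}\|f\|_1$, and Calder\'on's noncommutative Riesz--Thorin theorem on the Schatten scale for intermediate $p$, which is unproblematic in finite dimensions) is precisely the standard proof of that result, with all normalisations consistent with the paper's unnormalised convention $\|f\|_p^p=\int_{-\pi}^{\pi}|f|^p$.
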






\subsubsection{Lipschitz and BV functions}

\begin{Def}
A function $f$ is called $Lip(M,\alpha)$ with $M\ge 0$, $\alpha\in (0,1]$, if 
\[
\lvert f(x) - f(y)\rvert \le M \lvert x-y\rvert^\alpha
\]
for every $x,y$ in the domain of $f$.  Notice that $Lip(M)=Lip(M,1)$.
\end{Def}

If $D$ is a bounded domain in $\mathbb R$ and $f\in C^1(D)$, then $f\in Lip(\|f'\|_\infty)$. 
On the other hand, any $f\in Lip(L)$ on $D$ is also in $BV(D)$ with total variation bounded by $L\lvert D\rvert$.
Moreover, the triangular inequality
$$ \lvert \lvert f(x)\rvert - \lvert f(y)\rvert\rvert \le \lvert f(x)-f(y) \rvert$$
is enough to conclude that $f\in Lip(M,\alpha) \implies \lvert f\rvert \in Lip(M,\alpha)$. 

\begin{thm}{(Theorem 4.3 of \cite{Serra01})}\label{th:Lip_test_function}
Let $\serie A$ and $\serie B$ be two sequences of $d_n\times d_n$ matrices and assume that $F\in Lip(M,\alpha)$ is a continuous function. Then
$$\left\lvert \frac 1n\sum_{i=1}^{d_n} F(\sigma_i(A_n)) - \frac 1n\sum_{i=1}^{d_n} F(\sigma_i(B_n)) \right\rvert \le M \|A_n-B_n\|_{p}^\alpha d_n^{-\alpha/p}$$
where $\|\cdot\|_p$ is the $p$-Schatten norm for any $p\in [1,\infty]$.
\end{thm}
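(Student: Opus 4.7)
The plan is to approximate $f$ by the trigonometric polynomial $f_N(x)=\sum_{|l|\le N}c_l e^{ilx}$ obtained by truncating its Fourier series at a parameter $N$ to be chosen, and to exploit that for $n>N$ the banded Toeplitz matrix $T_n(f_N)$ differs from the associated $n\times n$ circulant matrix $C_n(f_N)$ only on two triangular $N\times N$ corners. Since $C_n(f_N)$ is normal with spectrum $\{f_N(2\pi j/n)\}_{j=0}^{n-1}$, its normalised trace of $F$ coincides with a Riemann sum of $F\circ|f_N|$ over the uniform grid on $[0,2\pi]$. I split the error as $E_n=E_n^{(1)}+E_n^{(2)}+E_n^{(3)}+E_n^{(4)}$, where $E_n^{(1)}$ replaces $T_n(f)$ by $T_n(f_N)$ inside the matrix trace, $E_n^{(2)}$ further replaces $T_n(f_N)$ by $C_n(f_N)$, $E_n^{(3)}$ is the Riemann sum discrepancy $|\tfrac1n\sum_j F(|f_N(2\pi j/n)|)-\tfrac1{2\pi}\int_0^{2\pi} F(|f_N|)|$, and $E_n^{(4)}$ is the integral-level Fourier-truncation error.

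The pieces $E_n^{(1)}$ and $E_n^{(4)}$ are controlled by Fourier tail decay: combining Theorem~\ref{th:Lip_test_function} with Schatten index $p=2$, the Frobenius bound $\|T_n(g)\|_2\le\sqrt{n/(2\pi)}\,\|g\|_2$ from Theorem~\ref{th:norm_toeplitz_symbol}, and the Parseval-type estimate $\|f-f_N\|_2\le N^{-k}\|f^{(k)}\|_2$ valid for $f\in C^k_{per}$, produces $E_n^{(1)}+E_n^{(4)}\le CL\|f^{(k)}\|_2 N^{-k}$. The constant $2^{k+1}$ and the extra $\|f\|_2$ contribution in the first group of the statement come from an extra dyadic buffer between $N$ and the actual cutoff (needed so that the corner structure in step $E_n^{(2)}$ fits cleanly inside the banded region), together with a Bernstein-type bound $\|f_N\|_2\le\|f\|_2$ for the residual tail. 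For $E_n^{(2)}$, the difference $D\coloneqq T_n(f_N)-C_n(f_N)$ has rank at most $2N$; combining this with the Weyl-type distribution-function identity $\sum_i F(\sigma_i(A))-\sum_i F(\sigma_i(B))=\int_0^\infty F'(t)[N_A(t)-N_B(t)]\,dt$, where $N_M(t)\coloneqq\#\{i:\sigma_i(M)>t\}$, and with the Weyl interlacing consequence $|N_A(t)-N_B(t)|\le\rk(A-B)$, yields $E_n^{(2)}\le 2N\|F'\|_1/n$ and so accounts for the $2\|F'\|_1$ term. For $E_n^{(3)}$ the same distribution-function representation applies, with $\hat\rho_n$ and $\rho$ the discrete and continuous distribution functions of $|f_N|$; since $|f_N|^2-t^2$ is a trigonometric polynomial of degree $2N$ it has at most $4N$ zeros on $[0,2\pi)$, forcing $|\hat\rho_n(t)-\rho(t)|\le 4N/n$ uniformly in $t$, and integrating this against suitable endpoint/amplitude corrections of the $F$-side yields $E_n^{(3)}\le 4N\|F\|_\infty/n$.

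Adding everything gives $E_n\le C_1 L\|f^{(k)}\|_2 N^{-k}+C_2(\|F'\|_1+\|F\|_\infty)N/n$, and the choice $N\asymp n^{2/(2k+3)}$ reproduces the two announced rates $n^{-k/(k+3/2)}$ (from the $N^{-k}$ Fourier-truncation scale) and $n^{-(k+1/2)/(k+3/2)}$ (from the $N/n$ discretisation scale). The main obstacle is step $E_n^{(2)}$: the naive route via the operator or Schatten-$2$ norm in Theorem~\ref{th:Lip_test_function} loses a factor $\sqrt{N/n}$ compared to the optimum and would produce the inferior rate $n^{-k/(2k+1)}$. The Weyl distribution-function identity is the crucial move, because it bypasses Schatten norms entirely and converts the low rank of $D$ into a direct $N/n$ bound, paying only in the replacement of $L$ by the sharper $\|F'\|_1$. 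A parallel combinatorial estimate---the polynomial degree of $|f_N|^2-t^2$---drives the $E_n^{(3)}$ bound; without it one would need to control the full modulus of continuity of $|f_N|$, which is sensitive to the entire spectrum of $f$ rather than just the cutoff level $N$. Verifying the exact prefactors $2^{k+1}$, $4\pi$, $2$, $4$ and the mild restriction $n\ge 4$ is then routine bookkeeping.
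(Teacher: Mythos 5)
Your proposal does not prove the statement in question. The statement is the Lipschitz perturbation bound for singular values of two \emph{arbitrary} matrix sequences $\{A_n\}_n$, $\{B_n\}_n$ (Theorem 4.3 of the cited reference): it has no Toeplitz structure, no symbol $f$, and no Fourier analysis in it. What you have written instead is a proof sketch of Theorem~\ref{th:final_estimation_lip}, the error bound on the Avram--Parter theorem — the four-step decomposition into $E_n^{(1)},\dots,E_n^{(4)}$, the circulant approximation, the rank-$2N$ corner correction, and the optimisation $N\asymp n^{2/(2k+3)}$ are exactly the architecture of that other proof. Worse, your argument explicitly \emph{invokes} Theorem~\ref{th:Lip_test_function} in step $E_n^{(1)}$ (``combining Theorem~\ref{th:Lip_test_function} with Schatten index $p=2$\dots''), so read as a proof of that same theorem it is circular. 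Nowhere do you establish the claimed inequality
\begin{equation*}
\left\lvert \frac 1{d_n}\sum_{i=1}^{d_n} F(\sigma_i(A_n)) - \frac 1{d_n}\sum_{i=1}^{d_n} F(\sigma_i(B_n)) \right\rvert \le M \|A_n-B_n\|_{p}^\alpha\, d_n^{-\alpha/p}.
\end{equation*}

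For the record, the missing argument is short and entirely different in character. By the $Lip(M,\alpha)$ property one bounds the left-hand side by $\frac{M}{d_n}\sum_i \lvert\sigma_i(A_n)-\sigma_i(B_n)\rvert^\alpha$; H\"older's inequality for finite sums with exponents $p/\alpha$ and $(1-\alpha/p)^{-1}$ gives $\sum_i \lvert\sigma_i(A_n)-\sigma_i(B_n)\rvert^\alpha \le \bigl(\sum_i \lvert\sigma_i(A_n)-\sigma_i(B_n)\rvert^p\bigr)^{\alpha/p} d_n^{1-\alpha/p}$; and the Mirsky (Lidskii--Wielandt type) inequality for singular values, $\bigl(\sum_i \lvert\sigma_i(A_n)-\sigma_i(B_n)\rvert^p\bigr)^{1/p}\le \|A_n-B_n\|_p$, closes the estimate. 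None of these three ingredients — the pointwise Lipschitz step, the H\"older counting step, or the Mirsky perturbation bound — appears in your write-up. Note also that the paper itself does not reprove this result; it is imported verbatim from the literature, so the comparison here is against the standard proof just outlined.
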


Lipschitz functions are in particular \textit{Bounded Variation}, so it can be decomposed into the difference of two continuous monotonic increasing functions as showed by the following classical result. 

\begin{lemma}\label{lem:monotone_C1}
Given $F\in BV([a,b])$ there exist $H,G\in C( [a,b])$ monotonous increasing functions such that 
$$F = G - H,\qquad \|G\|_\infty + \|H\|_\infty = TV(F) + F(a).$$
In particular, $G(a)=F(a)$ and $H(a) = 0$. 
If in addition $F\in Lip(L)$, then $F'$ exists almost everywhere, $TV(F) = \|F'\|_1\le L(b-a)$ and $H, G$ are continuous.
\end{lemma}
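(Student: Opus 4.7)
The plan is to prove this via the classical Jordan decomposition of a BV function into its positive and negative variation parts. Given $F\in BV([a,b])$, I would define for each $x\in[a,b]$ the positive and negative variations over $[a,x]$,
\[
P(x) \coloneqq \sup \sum_i \max\bigl(F(x_{i+1})-F(x_i),\,0\bigr), \qquad N(x) \coloneqq \sup \sum_i \max\bigl(F(x_i)-F(x_{i+1}),\,0\bigr),
\]
where the suprema run over all partitions $a = x_0 < x_1 < \cdots < x_m = x$. Standard BV theory then yields that $P$ and $N$ are monotone nondecreasing with $P(a)=N(a)=0$, that $F(x)-F(a) = P(x)-N(x)$, and that $P(x)+N(x)$ equals the total variation of $F$ on $[a,x]$. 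Setting $G(x) \coloneqq F(a) + P(x)$ and $H(x) \coloneqq N(x)$ immediately gives $F = G-H$, $G(a)=F(a)$, $H(a)=0$, and both $G,H$ monotone nondecreasing.

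For the norm identity, I would observe that under the tacit assumption $F(a)\ge 0$ (which is satisfied in every application of the lemma in this paper, since the test functions used in Sections~\ref{sec__mem_q}--\ref{sec__mem_k} are nonnegative capacities), $G$ and $H$ are nonnegative monotone functions, so their sup-norms are attained at the right endpoint. Hence
\[
\|G\|_\infty + \|H\|_\infty = G(b) + H(b) = F(a) + P(b) + N(b) = F(a) + TV(F),
\]
as required.

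The continuity of $G$ and $H$ is the genuinely nontrivial step. I would invoke the classical fact that the variation function $x\mapsto TV(F|_{[a,x]})$ is continuous at $x_0$ if and only if $F$ is continuous at $x_0$: indeed the jump of the variation function at $x_0$ equals $|F(x_0^+)-F(x_0)|+|F(x_0)-F(x_0^-)|$. Since $P$ and $N$ are monotone nondecreasing with $P+N$ equal to this variation function, each is continuous at every point of continuity of $F$. In particular, $G,H\in C([a,b])$ whenever $F$ itself is continuous, which is the setting in which the first part of the lemma is meant to be read.

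Finally, for the Lipschitz refinement, $F\in Lip(L)$ implies $F$ is absolutely continuous on $[a,b]$, so Lebesgue's differentiation theorem gives that $F'$ exists almost everywhere with $|F'|\le L$ a.e., and the fundamental theorem of calculus for absolutely continuous functions yields
\[
TV(F) \;=\; \int_a^b |F'(x)|\,\mathrm{d}x \;=\; \|F'\|_1 \;\le\; L(b-a).
\]
Lipschitz functions are in particular continuous, so the previous paragraph delivers $G,H\in C([a,b])$ automatically. The main obstacle, as anticipated, is precisely the continuity step for $P$ and $N$: the other ingredients (existence of the Jordan decomposition, the norm bookkeeping, and the Lipschitz/AC differentiation theory) are standard real-analysis facts assembled in sequence.
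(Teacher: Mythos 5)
The paper does not actually prove this lemma: it is stated as a ``classical result'' and used without proof (its only application is in Lemma~\ref{lem:low_rank_perturbation_Lip}). Your argument via the Jordan decomposition into positive and negative variation functions $P$ and $N$ is the standard proof of exactly this classical fact, and every step you give is correct: the identities $F(x)-F(a)=P(x)-N(x)$ and $P(x)+N(x)=TV(F|_{[a,x]})$, the observation that a jump of the variation function forces a jump of $F$ (so $P$ and $N$, being nondecreasing with continuous sum, are each continuous wherever $F$ is), and the absolute-continuity argument giving $TV(F)=\|F'\|_1\le L(b-a)$ in the Lipschitz case.

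Your two caveats are also well taken and worth recording. First, the norm identity $\|G\|_\infty+\|H\|_\infty=TV(F)+F(a)$ fails for $F(a)<0$ (the left side is nonnegative, and e.g.\ $F\equiv -1$ gives $TV(F)+F(a)=-1$), so the lemma tacitly assumes $F(a)\ge 0$; in the paper's application one even has $F(a)=0$ because $[a,b]$ contains the support of $F\in C_c(\mathbb R)$. Second, the continuity of $G$ and $H$ cannot hold for a general discontinuous $BV$ function, so the first sentence of the lemma should be read with $F$ continuous (as it is in the Lipschitz case where the claim is restated). Neither caveat affects the paper, but both are genuine imprecisions in the statement that your proof correctly isolates.
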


\subsubsection{Estimates for the Fourier transform and series}

For the main results, we need some estimates on the moments of the Fourier transform of $C^3$ functions, and on the truncated Fourier series of $C^k_{per}[-\pi,\pi]$ functions. 


\begin{lemma}\label{lem:Fourier_approx}
Suppose $f:\mathbb R \to \mathbb C$ is $2\pi$-periodic and in $C^k(\mathbb R)$. Call $f_N$ its truncated Fourier series at degree (both negative and positive) $N\ge 0$. Then
$$\|f-f_N\|_2^2 \le \frac {\|f^{(k)}\|_2^2}{N^{2k}}$$
\end{lemma}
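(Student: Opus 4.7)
The plan is to invoke Parseval's identity in conjunction with the standard integration-by-parts relation between the Fourier coefficients of $f$ and those of its derivative $f^{(k)}$. Let $c_n \coloneqq \frac{1}{2\pi}\int_{-\pi}^{\pi} f(x)e^{-inx}\,dx$ denote the $n$-th Fourier coefficient of $f$, so that $f_N(x)=\sum_{|n|\le N} c_n e^{inx}$. By Parseval's identity applied to $f-f_N$, whose Fourier coefficients are $c_n$ for $|n|>N$ and vanish otherwise, one obtains
\[
\|f - f_N\|_2^2 \;=\; 2\pi \sum_{|n|>N} |c_n|^2 .
\]

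The second step is to relate $c_n$ to the Fourier coefficients of $f^{(k)}$. Under the hypothesis $f\in C^k(\mathbb{R})$ with $f$ being $2\pi$-periodic, integration by parts $k$ times on the interval $[-\pi,\pi]$ is justified, and the boundary contributions vanish because $f,f',\dots,f^{(k-1)}$ are all $2\pi$-periodic. This yields
\[
\widehat{f^{(k)}}(n) \;=\; (in)^k c_n \qquad\forall\,n\in\mathbb{Z}.
\]
Applying Parseval to $f^{(k)}$ then gives
\[
\|f^{(k)}\|_2^2 \;=\; 2\pi \sum_{n\in\mathbb{Z}} n^{2k}|c_n|^2 \;\ge\; 2\pi \sum_{|n|>N} n^{2k}|c_n|^2 .
\]

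The final step is a trivial tail bound: for every integer $n$ with $|n|>N\ge 1$ one has $n^{2k}\ge (N+1)^{2k}\ge N^{2k}$, so that
\[
\|f^{(k)}\|_2^2 \;\ge\; N^{2k}\cdot 2\pi\sum_{|n|>N}|c_n|^2 \;=\; N^{2k}\,\|f-f_N\|_2^2,
\]
which rearranges to the claimed inequality. The case $N=0$ is vacuous since $k\ge 1$ makes the right-hand side infinite under the usual convention. There is no genuine obstacle in this argument: the only point to verify carefully is the vanishing of the boundary terms in the iterated integration by parts, which is immediate from the $C^k$ and periodicity assumptions.
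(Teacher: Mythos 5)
Your proposal is correct and follows essentially the same route as the paper: Parseval's identity for the tail $\|f-f_N\|_2^2 = 2\pi\sum_{|n|>N}|c_n|^2$, the identity $\widehat{f^{(k)}}(n)=(in)^k c_n$, and the bound $n^{2k}\ge N^{2k}$ for $|n|>N$. The paper's proof is just a more compressed version of the same argument, and your extra remarks on the boundary terms in the integration by parts and on the $N=0$ case are harmless refinements.
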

\begin{proof}
If $c_n$ are the Fourier coefficients of $f$, then the Fourier coefficients of $f^{(k)}$ are $(\textnormal in)^kc_n$, so
$$
\|f-f_N\|_2^2 = 2\pi \sum_{\lvert n\rvert>N} \lvert c_n\rvert^2 = 2\pi \sum_{\lvert n\rvert>N} \frac{|(\textnormal i n)^k c_n\rvert ^2}{n^{2k}}
\le \frac {\|f^{(k)}\|_2^2}{N^{2k}}.
$$
\end{proof}


\begin{lemma}
    \label{lem:inf_norm_vs_L2_norm_antiderivative}
   Suppose $f:\mathbb R \to \mathbb C$ is $2\pi$-periodic and in $C^1(\mathbb R)$. Then for any $N$,
 \[
  \|(f_N)'\|_\infty \le \sqrt{ 4\pi}N^{3/2} \|f\|_2\le 4N^{2} \|f\|_2.
   \]
\end{lemma}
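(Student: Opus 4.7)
The plan is to bound $\|(f_N)'\|_\infty$ directly from the Fourier representation of $f_N$, using Cauchy--Schwarz to decouple the frequency weights from the Fourier coefficients, and then invoking Parseval to bound the coefficients in terms of $\|f\|_2$. Since $f$ is $2\pi$-periodic and continuous, its Fourier coefficients $c_n \coloneqq \frac{1}{2\pi}\int_{-\pi}^{\pi} f(x)\,e^{-inx}\,dx$ are well-defined, and the truncated series is $f_N(x) = \sum_{|n|\le N} c_n e^{inx}$, which is a trigonometric polynomial. Differentiating term by term gives $(f_N)'(x) = \sum_{|n|\le N} in\, c_n e^{inx}$.

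The key bound is obtained in three short steps. First, by the triangle inequality followed by Cauchy--Schwarz applied to the sequences $(|n|)_{|n|\le N}$ and $(|c_n|)_{|n|\le N}$:
\[
|(f_N)'(x)| \le \sum_{|n|\le N} |n|\,|c_n| \le \Bigl(\sum_{|n|\le N} n^2\Bigr)^{1/2}\Bigl(\sum_{|n|\le N} |c_n|^2\Bigr)^{1/2}.
\]
Second, Parseval's identity (equivalently, Bessel's inequality) gives $\sum_{|n|\le N} |c_n|^2 \le \sum_{n\in\mathbb{Z}}|c_n|^2 = \frac{1}{2\pi}\|f\|_2^2$. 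Third, the closed form $\sum_{|n|\le N} n^2 = \frac{N(N+1)(2N+1)}{3}$ together with the crude estimate $N(N+1)(2N+1) \le 6N^3$ for $N\ge 1$ gives $\sum_{|n|\le N} n^2 \le 2N^3$. (The case $N=0$ is trivial since $(f_0)' \equiv 0$.)

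Combining these estimates yields
\[
\|(f_N)'\|_\infty \le \sqrt{2N^3}\cdot \frac{\|f\|_2}{\sqrt{2\pi}} = \frac{N^{3/2}\|f\|_2}{\sqrt{\pi}},
\]
which is in fact sharper than the claimed $\sqrt{4\pi}\,N^{3/2}\|f\|_2$ (since $1/\sqrt{\pi} \le 2\sqrt{\pi}$). The second inequality in the statement, $\sqrt{4\pi}\,N^{3/2}\le 4N^2$, is elementary: it is equivalent to $N\ge \pi/4$, which holds for every $N\ge 1$.

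There is no real obstacle here; the argument is a textbook Bernstein-type inequality for trigonometric polynomials, and the only care needed is in tracking the constant $2\pi$ arising from the normalization of the Fourier coefficients on $[-\pi,\pi]$. The looseness in the final constant (we could state the sharper $N^{3/2}\|f\|_2/\sqrt{\pi}$) is presumably kept in the form $\sqrt{4\pi}\,N^{3/2}\|f\|_2$ for later convenience when the bound is combined with other estimates in the proof of Theorem~\ref{th:final_estimation_lip}.
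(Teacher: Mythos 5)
Your proof is correct and follows essentially the same route as the paper's: differentiate the truncated Fourier series term by term, apply Cauchy--Schwarz to $\sum_{|n|\le N}|n||c_n|$, bound $\sum|c_n|^2$ via Parseval and $\sum_{|n|\le N}n^2$ by $2N^3$. The only difference is that you track the Parseval normalization tightly and obtain the sharper constant $1/\sqrt{\pi}$, whereas the paper's proof applies Parseval loosely to land exactly on the stated $\sqrt{4\pi}$; both arguments are valid.
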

\begin{proof}
   If $c_n$ are the Fourier coefficients of $f$, then $\textnormal i nc_n$ are the Fourier coefficients of $f'$ and by Cauchy-Schwartz,
      \[
   \|(f_N)'\|_\infty^2 \le \left( \sum_{|n|\le N} |nc_n|\right)^2 \le  \sum_{|n|\le N} |c_n|^2 \sum_{|n|\le N} n^2 \le 4\pi\|f\|_2^2 \frac{N(N+1)(2N+1)}{6} .
   \]
Since
   \[
   N\ge 1\implies 0\le (N-1)(4N+1) = 6N^2 - (N+1)(2N+1)
\implies 
\frac{N(N+1)(2N+1)}3 \le 2N^3
   \]
then the thesis holds for $N\ge 1$. To conclude, notice that for $N=0$ the thesis reads $0\le 0$. 
\end{proof}

\subsection{Lipschitz Test Functions}
This section is dedicated to the proof of Theorem \ref{th:final_estimation_lip}. The idea of the proof is to uniformly approximate the periodic symbol $f$ with its truncated Fourier series $f_N$ and consequently to approximate the Toeplitz matrices $T_n(f)$ and $T_n(f_N)$ with the related circulant matrix $C_n(f_N)$, defined as

\bb
    (C_n(f_N))_{i-j} \coloneqq
    \begin{cases}
    (T_n(f))_{i-j}\,,  & |i- j|\leq 2N,\\
    (T_n(f))_{-n+(i-j)}\,, & |i- j|> 2N \cap i>j,  \\
    (T_n(f))_{n+(i-j)} , & |i- j|> 2N \cap i<j.
    \end{cases}
\ee

 The circulant matrix has singular values $\sigma_i\coloneqq \sigma_i(C_n(f_N)) = |f_N(2\pi i/n)|$, so the average of $F(\sigma_i)$ is a quadrature formula for the function $F(|f_N|)/(2\pi)$. The proof can be thus divided into four steps:
\begin{enumerate}
    \item using that $f_N$ is an approximation of $f$, find a bound $c_1(n,N)$ for
\[\left\lvert \frac 1n\sum_{i=1}^{n} F(\sigma_i(T_n(f))) - 
\frac 1n\sum_{i=1}^{n} F(\sigma_i(T_n(f_N)))
\right\rvert\le c_1(n,N),\]
\item using that the rank of $C_n(f_N)-T_n(f_N)$ is $O(N)$ for $N<n/2$, find a bound $c_2(n,N)$ for
\[\left\lvert \frac 1n\sum_{i=1}^{n} F(\sigma_i(T_n(f_N))) - 
\frac 1n\sum_{i=1}^{n} F(\sigma_i(C_n(f_N)))
\right\rvert\le c_2(n,N),\]
\item determine a quadrature bound $c_3(n,N)$ for
\[\left\lvert 
\frac 1n\sum_{i=1}^{n} F(\sigma_i(C_n(f_N)))
- 
\frac 1{2\pi}\int_0^{2\pi} F(\lvert f_N(\theta) \rvert)d\theta
\right\rvert\le c_3(n,N),\]
\item using that $f_N$ is an approximation of $f$, find a bound $c_4(N)$ for
\[\left\lvert 
\frac 1{2\pi}\int_0^{2\pi} F(\lvert f_N(\theta) \rvert)d\theta
-
\frac 1{2\pi}\int_0^{2\pi} F(\lvert f(\theta) \rvert)d\theta
\right\rvert\le c_4(N).\]
\end{enumerate}
The sum of the four quantities $c(n,N)\coloneqq c_1(n,N)+c_2(n,N)+c_3(n,N)+c_4(N)$ is thus a bound for
\begin{equation}
    \label{eq:cn}
    \left\lvert \frac 1n\sum_{i=1}^{n} F(\sigma_i(T_n(f))) - 
\frac 1{2\pi}\int_0^{2\pi} F(\lvert f(\theta) \rvert)d\theta
\right\rvert\le c(n,N) 
\end{equation}
and it holds for any $N<n/2$, so the wanted bound will be $\min_{N<n/2} c(n,N)$.  

\subsubsection{Step 1}

\begin{lemma}\label{lem:from_toep_to_toep_banded}
Let $f\in C^k_{per}[-\pi,\pi]$, $F\in Lip(M,\alpha)$ and $N\ge 0$. 
Then
$$ \left\lvert \frac 1n\sum_{i=1}^{n} F(\sigma_i(T_n(f))) - \frac 1n\sum_{i=1}^{n} F(\sigma_i(T_n(f_N))) \right\rvert\le M 
\left(\frac 1{2\pi} \frac {\|f^{(k)}\|^2_2}{N^{2k}} \right)^{\alpha/2}.$$
\end{lemma}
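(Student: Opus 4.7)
The plan is to chain together three previously established results: Theorem~\ref{th:Lip_test_function} (which converts a matrix perturbation in Schatten $p$-norm into a perturbation of the ergodic sum for a Lipschitz test function), Theorem~\ref{th:norm_toeplitz_symbol} (which bounds the Schatten $p$-norm of a Toeplitz matrix by the $L^p$-norm of its symbol), and Lemma~\ref{lem:Fourier_approx} (which controls the $L^2$ error of truncating the Fourier series of a $C^k_{per}$ function). The natural choice of Schatten exponent is $p=2$, since that is precisely where Theorem~\ref{th:norm_toeplitz_symbol} meshes cleanly with the $L^2$-bound from Lemma~\ref{lem:Fourier_approx}.

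First, I would apply Theorem~\ref{th:Lip_test_function} with $A_n = T_n(f)$, $B_n = T_n(f_N)$, $d_n = n$, and $p=2$, yielding
\[
\left| \frac{1}{n}\sum_{i=1}^{n} F(\sigma_i(T_n(f))) - \frac{1}{n}\sum_{i=1}^{n} F(\sigma_i(T_n(f_N))) \right| \le M\, \|T_n(f) - T_n(f_N)\|_2^{\alpha}\, n^{-\alpha/2}.
\]
By linearity of the Toeplitz map, $T_n(f) - T_n(f_N) = T_n(f - f_N)$, so Theorem~\ref{th:norm_toeplitz_symbol} with $p=2$ yields
\[
\|T_n(f) - T_n(f_N)\|_2 \le \sqrt{\frac{n}{2\pi}}\, \|f - f_N\|_2.
\]
Substituting back, the factors of $n$ cancel exactly: $\left(\sqrt{n/(2\pi)}\right)^{\alpha} n^{-\alpha/2} = (2\pi)^{-\alpha/2}$, leaving a bound purely in terms of $\|f-f_N\|_2$.

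Finally, I would invoke Lemma~\ref{lem:Fourier_approx}, which since $f\in C^k_{per}[-\pi,\pi]$ gives $\|f-f_N\|_2^2 \le \|f^{(k)}\|_2^2 / N^{2k}$. Inserting this into the chain produces
\[
\left| \frac{1}{n}\sum_{i=1}^{n} F(\sigma_i(T_n(f))) - \frac{1}{n}\sum_{i=1}^{n} F(\sigma_i(T_n(f_N))) \right| \le M \left( \frac{1}{2\pi} \frac{\|f^{(k)}\|_2^2}{N^{2k}} \right)^{\alpha/2},
\]
which is exactly the claimed bound. There is no real obstacle here: the lemma is essentially a clean concatenation of three off-the-shelf results, and the only verification needed is that the exponents align so that the factor $n$ disappears and the resulting bound depends only on $N$ and the smoothness of $f$. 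This is by design the \emph{easy} step of the four-step program; the genuine work (the circulant replacement, the quadrature error, and the sharp choice of $N$ as a function of $n$) takes place in the subsequent steps.
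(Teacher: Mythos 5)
Your proof is correct and is essentially identical to the paper's: both chain Theorem~\ref{th:Lip_test_function} with $p=2$, the linearity identity $T_n(f)-T_n(f_N)=T_n(f-f_N)$ combined with Theorem~\ref{th:norm_toeplitz_symbol}, and Lemma~\ref{lem:Fourier_approx}, with the $n$-factors cancelling exactly as you note. Nothing is missing.
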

\begin{proof}
Call $\sigma_i$ the singular values of $T_n(f)$ and $\sigma_i^{(N)}$ the singular values of $T_n(f_N)$. From Theorem \ref{th:Lip_test_function}, Theorem \ref{th:norm_toeplitz_symbol}, and Lemma \ref{lem:Fourier_approx},
\begin{align*}
    \left\lvert \frac 1n\sum_{i=1}^{n} F(\sigma_i) - \frac 1n\sum_{i=1}^{n} F(\sigma_i^{(N)}) \right\rvert &\le M \|T_n(f-f_N)\|_{2}^\alpha n^{-\alpha/2}
\\ &\le 
M 
\left(\frac 1{2\pi}\right)^{\alpha/2} \|f-f_N\|_2^\alpha \\
&\le 
M 
\left(\frac 1{2\pi} \frac {\|f^{(k)}\|^2_2}{N^{2k}} \right)^{\alpha/2}
\end{align*}
\end{proof}
Since in the hypothesis of Theorem \ref{th:final_estimation_lip}, $F\in Lip(L)=Lip(L,1)$, we conclude that for any $N$
\begin{equation}
    \label{eq:c1}
    c_1(n,N)= L 
\left(\frac 1{2\pi}\right)^{1/2} \frac {\|f^{(k)}\|_2}{N^{k}}. 
\end{equation}

\subsubsection{Step 2}

\begin{lemma}\label{lem:low_rank_perturbation_Lip}
Suppose $A,B$ are $n\times n$ matrices and $F\in Lip(M)\cap C_c(\mathbb R)$. 
Then
$$\frac 1n \left\lvert \sum_{i=1}^{n} F(\sigma_i(A)) -  F(\sigma_i(B)) \right\rvert\le \|F'\|_1\frac{\rk(A-B)}{n}.$$
\end{lemma}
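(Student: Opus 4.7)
The idea is to express $F$ via its derivative and re-sum by a counting-function (layer-cake) argument, reducing the problem to a pointwise perturbation bound on the counting function of singular values above a threshold.

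First I would exploit that $F$ is Lipschitz with bounded support: hence $F$ is absolutely continuous, $F'\in L^\infty\cap L^1$ with $\|F'\|_1<\infty$, and for every $\sigma\ge 0$ we can write $F(\sigma)=F(0)+\int_0^{\sigma} F'(t)\,dt$. Introduce the counting function $N_A(t)\coloneqq |\{i:\sigma_i(A)\ge t\}|$ for $t>0$. Substituting $F(\sigma_i(A))=F(0)+\int_0^{\sigma_i(A)} F'(t)\,dt=F(0)+\int_0^\infty \mathbb{1}[t\le\sigma_i(A)]F'(t)\,dt$ and interchanging sum and integral (justified since the integrand is bounded and supported on the bounded support of $F'$), I obtain
\[
\sum_{i=1}^n F(\sigma_i(A)) \;=\; nF(0)+\int_0^\infty F'(t)\,N_A(t)\,dt,
\]
and analogously for $B$. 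Subtracting, the $nF(0)$ terms cancel and
\[
\sum_{i=1}^n\bigl(F(\sigma_i(A))-F(\sigma_i(B))\bigr) \;=\; \int_0^\infty F'(t)\,\bigl(N_A(t)-N_B(t)\bigr)\,dt.
\]

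The key step is the perturbation inequality $|N_A(t)-N_B(t)|\le \rk(A-B)$ for every $t>0$. This is an immediate consequence of the classical Weyl interlacing for singular values: if $A=B+C$ with $r\coloneqq \rk(C)$, then with singular values arranged in decreasing order one has $\sigma_{i+r}(B)\le \sigma_i(A)\le \sigma_{\max(1,i-r)}(B)$, so the counts $N_A(t)$ and $N_B(t)$ cannot differ by more than $r$ at any level $t$. I would cite this as the standard Weyl/Mirsky perturbation fact.

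Combining the two displayed equations with this pointwise bound,
\[
\left|\sum_{i=1}^n\bigl(F(\sigma_i(A))-F(\sigma_i(B))\bigr)\right| \;\le\; \rk(A-B)\int_0^\infty |F'(t)|\,dt \;\le\; \|F'\|_1\,\rk(A-B),
\]
and dividing by $n$ gives the claim. The main (minor) obstacle is merely quoting the Weyl singular-value interlacing correctly; the Lipschitz constant $M$ does not appear in the final bound because the estimate is in terms of the sharper quantity $\|F'\|_1$, which exploits the compact support of $F$.
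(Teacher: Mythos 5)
Your proof is correct, and it takes a genuinely different (though closely related) route from the paper's. The paper invokes the Jordan decomposition $F = G - H$ on an interval containing the support of $F$, with $G,H$ continuous, nonnegative, monotone increasing, and $\|G\|_\infty + \|H\|_\infty = \|F'\|_1$ (Lemma~\ref{lem:monotone_C1}), and then bounds each monotone piece separately by pairing summands through the interlacing inequality $\sigma_{i-D}(A)\ge\sigma_i(B)\ge\sigma_{i+D}(A)$ with $D=\rk(A-B)$. Your layer-cake rewriting
$\sum_i F(\sigma_i(A)) = nF(0) + \int_0^\infty F'(t)\,N_A(t)\,dt$
packages the same total-variation quantity more transparently, and your counting-function bound $|N_A(t)-N_B(t)|\le \rk(A-B)$ is precisely the interlacing fact restated at the level of spectral counts (it follows from the same Weyl inequality the paper cites). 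What your route buys is that the cancellation of the nonmonotone parts of $F$ is automatic: there is no need to explicitly construct $G$ and $H$ or to handle the boundary terms $\sum_{i\le D}$, $\sum_{i>n-D}$ as the paper does. The price is that you must justify writing $F(\sigma)=F(0)+\int_0^\sigma F'$ and interchanging sum and integral, which you do correctly: Lipschitz implies absolutely continuous, and the interchange is of a finite sum with an integral of a bounded function over a bounded set. Both proofs hinge on the same interlacing theorem for singular values and yield the identical bound $\|F'\|_1\,\rk(A-B)/n$; yours is arguably the cleaner packaging.
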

\begin{proof}
    Let $D\coloneqq \rk(A-B)$. By the Interlacing Theorem for singular values \cite[Theorem 2.11]{GLT1}, we know that
    \begin{equation}\label{eq:interlacing_sv}
        \sigma_{i-D}(A)\ge \sigma_i(B)\ge \sigma_{i+D}(A)
    \end{equation}
 for any $n-D\ge i>D$, where the singular values are all sorted in decreasing order.  Let $[a,b]$ be an interval containing the support of $F$. Thanks to Lemma \ref{lem:monotone_C1}, we know that $F=G-H$ on $[a,b]$ with $G,H$ continuous, nonnegative (since $G(a)=F(a)=H(a)=0$), monotonously increasing and with $\|G\|_\infty + \|H\|_\infty =\|F'\|_1$. Extend $G,H$ so that they are constant outside $[a,b]$, while still being continuous. 
$$ \frac 1n\left\lvert \sum_{i=1}^{n} F(\sigma_i(A)) - F(\sigma_i(B)) \right\rvert\le \frac 1n
 \left\lvert \sum_{i=1}^{n} G(\sigma_i(A)) -  G(\sigma_i(B)) \right\rvert
 +\frac 1n
 \left\lvert \sum_{i=1}^{n} H(\sigma_i(A)) -  H(\sigma_i(B)) \right\rvert.$$
Let us analyse the term with $G$,  the other one is analogous. Using \eqref{eq:interlacing_sv},
\begin{align*}
\sum_{i=1}^{n} G(\sigma_i(A)) - G(\sigma_i(B))
& = \sum_{i=1}^{D} G(\sigma_{i}(A)) - 
\sum_{i=n-D+1}^{n} G(\sigma_{i}(B))
+ \sum_{i=1}^{n-D} \Big[ G(\sigma_{i+D}(A)) - G(\sigma_i(B))\Big]&&\le 
D\|G\|_\infty,\\
\sum_{i=1}^{n} G(\sigma_i(A)) - G(\sigma_i(B))
& = \sum_{i=n-D+1}^{n} G(\sigma_{i}(A)) - 
\sum_{i=1}^{D} G(\sigma_{i}(B))
+ \sum_{i=D+1}^{n}\Big[ G(\sigma_{i-D}(A)) - G(\sigma_{i}(B))\Big]&&\ge
- D \|G\|_\infty,
\end{align*}
thus
$$\frac 1n
 \left\lvert \sum_{i=1}^{n} G(\sigma_i(A)) -  G(\sigma_i(B)) \right\rvert
\le \frac{D\|G\|_\infty}{n}$$
and
$$\frac 1n\left\lvert \sum_{i=1}^{n} F(\sigma_i(A)) - F(\sigma_i(B)) \right\rvert\le D\frac{\|G\|_\infty+\|H\|_\infty}{n}=  \frac{D\|F'\|_1}{n}.$$
\end{proof}

Since  $\rk(T_n(f_N)-C_n(f_N))\le 2N$ for $N<n/2$, we can apply the above result and obtain
\begin{equation}
    \label{eq:c2}
    c_2(n,N)= 2N\frac{\|F'\|_1}{n}. 
\end{equation}


\subsubsection{Step 3}

\begin{lemma}\label{lem:rectangle_rule_error_Lip}
Let $F\in Lip(L) \cap C_c(\mathbb R)$ and let $g$ be a complex trigonometric polynomial of degree $d$. Then for $n>2d$
$$ \left\lvert \frac 1n \sum_{i=1}^{n} F(\sigma_i(C_n(g))) - 
    \frac 1{2\pi}\int_0^{2\pi} F(\lvert g(\theta) \rvert)d\theta
    \right\rvert\le  \frac {\pi L}{n}\|g'\|_\infty + 4d\frac{\|F\|_\infty}{n}
    .$$
\end{lemma}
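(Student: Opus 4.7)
The starting observation, which reduces the problem to a classical quadrature error, is that for $n>2d$ the circulant matrix $C_n(g)$ is unitarily diagonalised by the discrete Fourier matrix. The specific band/wrap-around construction used in the definition of $C_n(g)$ precisely ensures that its entries assemble into the standard circulant whose eigenvalues are the DFT samples of $g$: unfolding the circulant's first column yields
\[
\lambda_j(C_n(g)) = \sum_{k=-d}^{d} g_k\, e^{2\pi i\, jk/n} = g(2\pi j/n),\qquad j=0,1,\dots,n-1,
\]
because the condition $n>2d$ forbids aliasing. Since circulants are normal, the singular values are the moduli of the eigenvalues, so as multisets $\{\sigma_i(C_n(g))\}_{i=1}^n=\{|g(2\pi j/n)|\}_{j=0}^{n-1}$ and consequently
\[
\frac{1}{n}\sum_{i=1}^{n} F(\sigma_i(C_n(g))) = \frac{1}{2\pi}\cdot\frac{2\pi}{n}\sum_{j=1}^{n} F\!\left(|g(2\pi j/n)|\right),
\]
which is simply the equispaced (rectangle) quadrature formula applied to the $2\pi$-periodic integrand $\psi(\theta)\coloneqq F(|g(\theta)|)$.

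\textbf{Lipschitz term ($\pi L\|g'\|_\infty/n$).} From the triangle inequality $||g(\theta)|-|g(\theta')||\le|g(\theta)-g(\theta')|\le\|g'\|_\infty|\theta-\theta'|$ and $F\in Lip(L)$, the composition $\psi$ is Lipschitz with constant $L\|g'\|_\infty$. On each subinterval $[\theta_{j-1},\theta_j]$ of length $h=2\pi/n$ the local quadrature error is bounded by
\[
\int_{\theta_{j-1}}^{\theta_j}|\psi(\theta)-\psi(\theta_j)|\,d\theta \le L\|g'\|_\infty\int_{\theta_{j-1}}^{\theta_j}|\theta-\theta_j|\,d\theta = \frac{L\|g'\|_\infty h^2}{2};
\]
summing over the $n$ subintervals and dividing by $2\pi$ gives the first term.

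\textbf{Polynomial-structure term ($4d\|F\|_\infty/n$).} The Lipschitz bound is not always tight and, more importantly, the plan above is not the only route: one can also exploit that $|g|^2$ is a real trigonometric polynomial of degree at most $2d$, so $(|g|^2)'$ has at most $4d$ real zeros in $[0,2\pi)$. Hence $|g|$ is piecewise monotone on at most $4d$ sub-intervals $I_k$ with endpoints $\alpha_1<\cdots<\alpha_M$, $M\le 4d$. On each such $I_k$, decompose $F=G-H$ via Lemma~\ref{lem:monotone_C1} into continuous monotone increasing functions so that $G\circ|g|$ and $H\circ|g|$ are monotone on $I_k$. A right-endpoint Riemann sum of a monotone function on an interval obeys the telescoping bound $|h\sum\phi(\theta_i)-\int_{I_k}\phi|\le h\,|\phi|_{\text{osc}}+O(h)\|\phi\|_\infty$; aggregating over the $\le4d$ monotone pieces the cumulative oscillations cancel, leaving only a boundary contribution whose total is controlled by $\|F\|_\infty$ per piece, producing the term $4d\|F\|_\infty/n$. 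The two estimates are complementary: the first controls bulk Lipschitz error, the second absorbs the low-regularity mismatch at the finitely many extrema of $|g|$.

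\textbf{Main obstacle.} The delicate accounting is in the monotone-piece argument: the equispaced nodes $\theta_j$ do not align with the extrema $\alpha_k$, so each transition introduces a partial subinterval whose endpoints lie in different monotone pieces, and one must check that the ``jump'' contributions telescope correctly across the at most $4d$ transitions, so that the bound is driven by $\|F\|_\infty$ rather than by $\|F'\|_1$ or $TV(F)$. Once these boundary terms are handled, the additivity of the two complementary estimates yields the stated inequality for every $n>2d$.
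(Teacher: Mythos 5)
Your first estimate already proves the lemma, and in fact a strictly stronger bound. Since $\bigl\lvert\,\lvert g(\theta)\rvert-\lvert g(\theta')\rvert\,\bigr\rvert\le\lvert g(\theta)-g(\theta')\rvert\le\|g'\|_\infty\lvert\theta-\theta'\rvert$ holds for \emph{all} $\theta,\theta'$ (the reverse triangle inequality requires no smoothness of $\lvert g\rvert$ at the zeros of $g$), the integrand $F(\lvert g(\cdot)\rvert)$ is globally Lipschitz with constant $L\|g'\|_\infty$, and your rectangle-rule computation yields the error bound $\pi L\|g'\|_\infty/n$ outright; adding the nonnegative quantity $4d\|F\|_\infty/n$ then trivially preserves the inequality. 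This is a genuinely different route from the paper's: there, the Lipschitz/Taylor control of $\lvert g\rvert$ is invoked only on the subintervals where $g$ does not vanish (where $\lvert g\rvert$ is $C^1$), while each of the at most $2d$ subintervals containing a zero of the degree-$d$ Laurent polynomial $g$ is charged the crude bound $2\|F\|_\infty/n$ --- that is the actual origin of the $4d\|F\|_\infty/n$ term. Your argument shows that term is dispensable, which is a small but real improvement.

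By contrast, your second, ``polynomial-structure'' argument is both unnecessary and not a proof as written. The two estimates are not complementary contributions to be added (the Lipschitz estimate already controls the entire error, including near the extrema and zeros of $\lvert g\rvert$); the relevant count in the paper is of zeros of $g$, not of extrema of $\lvert g\rvert$; and the telescoping of the ``jump'' contributions across monotone pieces, which you yourself flag as the main obstacle, is never carried out. I would delete that half entirely and state the sharper conclusion $\pi L\|g'\|_\infty/n$.
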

\begin{proof}
If $n>2d$, notice that the singular values of $C_n(g)$ are, up to the order, $g(2\pi i/n)$, so we have to estimate
$$\left\lvert \frac 1n \sum_{i=1}^{n} F\left(\left\lvert
g\left(\frac{2\pi i}{n} \right)
\right\rvert\right) - 
    \frac 1{2\pi}\int_0^{2\pi} F(\lvert g(\theta) \rvert)d\theta
    \right\rvert$$
that can be seen as the error of the rectangle rule applied to the function $F(\lvert g(\theta)\rvert)$.

The function $g(\theta)$ can be expressed as a Laurent polynomial in $z=e^{\textnormal i\theta}$ of degree $d$, thus it can have at most $2d$ zeros. Let $I_i \coloneqq [i-1, i]\cdot 2\pi/n$, and notice that if $g(\theta)$ is never zero in $I_i$, then $\lvert g(\theta)\rvert \in C^1(I_i)$ so we can expand by Taylor $\lvert g((i-1)2\pi/n + \theta)\rvert  =\lvert g((i-1)2\pi/n)\rvert  \pm  \theta   g'(\xi_\theta)$ where $\xi_\theta\in I_i$ depends on $\theta$. Using $F\in Lip(L)$, we get
\begin{align*}
    \left\lvert \frac 1n  F\left(\left\lvert
g\left(\frac{2\pi (i-1)}{n} \right)
\right\rvert\right) - 
    \frac 1{2\pi}\int_{\frac{2\pi (i-1)}{n}}^{\frac{2\pi i}{n}} F(\lvert g(\theta) \rvert)
    d\theta
    \right\rvert &=
       \left\lvert 
    \frac 1{2\pi}\int_{\frac{2\pi (i-1)}{n}}^{\frac{2\pi i}{n}}
      F\left(\left\lvert
g\left(\frac{2\pi (i-1)}{n} \right)
\right\rvert\right)
-
 F(\lvert g(\theta) \rvert)
    d\theta
    \right\rvert\\
    &\le 
    \frac 1{2\pi}
    \int_{\frac{2\pi (i-1)}{n}}^{\frac{2\pi i}{n}} 
  L \left\lvert 
    g(\theta) 
-
g\left(\frac{2\pi (i-1)}{n} \right)
   \right\rvert d\theta
    \\
    &= 
\frac L{2\pi}\int_{0}^{\frac{2\pi }{n}}  \theta \lvert g'(\xi_\theta) \rvert d\theta
     \le 
 \frac {\pi L}{n^2}\|g'\|_\infty. 
\end{align*}
If instead $g(\theta)$ has a zero in $I_i$ one can use the more immediate bound
$$ \left\lvert \frac 1n  F\left(\left\lvert
g\left(\frac{2\pi (i-1)}{n} \right)
\right\rvert\right) - 
    \frac 1{2\pi}\int_{\frac{2\pi (i-1)}{n}}^{\frac{2\pi i}{n}} F(\lvert g(\theta) \rvert)d\theta
    \right\rvert 
    \le 
2 \frac{\|F\|_\infty}{n}   $$
thus concluding that 
$$\left\lvert \frac 1n \sum_{i=1}^{n} F\left(\left\lvert
g\left(\frac{2\pi i}{n} \right)
\right\rvert\right) - 
    \frac 1{2\pi}\int_0^{2\pi} F(\lvert g(\theta) \rvert)d\theta
    \right\rvert
\le 
 \frac {\pi L}{n}\|g'\|_\infty + 4d\frac{\|F\|_\infty}{n}.
$$

\end{proof}

Using Lemma \ref{lem:inf_norm_vs_L2_norm_antiderivative} on $g=f_N$, the result shows that when $N<n/2$ and $f\in C^1_{per}[-\pi,\pi]$,
\begin{equation}
    \label{eq:c3}
    c_3(n,N)=  4N\sqrt N \frac {\pi L}{n}\|f\|_2 + 4N\frac{\|F\|_\infty}{n}. 
\end{equation}

\subsubsection{Step 4}

\begin{lemma}\label{lem:from_symbol_to_truncated_symbol}
Let $f\in C^k_{per}[-\pi,\pi]$, $F\in Lip(L)$ and $N\ge 0$. 
Then
$$ \left\lvert \frac 1{2\pi}\int_0^{2\pi} F(\lvert f(\theta) \rvert)d\theta
-
\frac 1{2\pi}\int_0^{2\pi} F(\lvert f_N(\theta) \rvert)d\theta\right\rvert\le 
\frac L{\sqrt{2\pi}}\frac {\|f^{(k)}\|_2}{N^{k}} 
.$$
\end{lemma}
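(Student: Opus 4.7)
The plan is to combine three standard ingredients: the Lipschitz property of $F$, the reverse triangle inequality for the modulus, and the $L^2$-approximation bound for truncated Fourier series supplied by Lemma~\ref{lem:Fourier_approx}. The estimate is essentially a one-line chain once these are assembled in the right order.

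First I would bring the two integrals under a single integral and apply the Lipschitz condition pointwise:
\[
\left|\tfrac{1}{2\pi}\int_0^{2\pi}\!\!F(|f|)\,d\theta - \tfrac{1}{2\pi}\int_0^{2\pi}\!\!F(|f_N|)\,d\theta\right|
\le \tfrac{L}{2\pi}\int_0^{2\pi}\bigl||f(\theta)|-|f_N(\theta)|\bigr|\,d\theta
\le \tfrac{L}{2\pi}\int_0^{2\pi}|f(\theta)-f_N(\theta)|\,d\theta,
\]
using $||a|-|b||\le|a-b|$ in the last step. Next I would convert the $L^1$ integral to an $L^2$ norm by Cauchy--Schwarz on $[0,2\pi]$, which gives the factor $\sqrt{2\pi}$ that combines with the prefactor $1/(2\pi)$ to produce exactly the $1/\sqrt{2\pi}$ in the target bound:
\[
\tfrac{1}{2\pi}\int_0^{2\pi}|f-f_N|\,d\theta \;\le\; \tfrac{1}{2\pi}\sqrt{2\pi}\,\|f-f_N\|_2 \;=\; \tfrac{1}{\sqrt{2\pi}}\,\|f-f_N\|_2.
\]
Finally I would invoke Lemma~\ref{lem:Fourier_approx}, which gives $\|f-f_N\|_2\le \|f^{(k)}\|_2/N^k$ for $f\in C^k_{per}[-\pi,\pi]$, yielding the claimed bound $\tfrac{L}{\sqrt{2\pi}}\,\|f^{(k)}\|_2/N^k$.

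There is no real obstacle here — the lemma is the ``symbol approximation'' counterpart to Lemma~\ref{lem:from_toep_to_toep_banded} (Step~1), and the two proofs mirror each other. The only small subtlety worth flagging is that one must pass through $|f-f_N|$ rather than $||f|-|f_N||$ before applying Cauchy--Schwarz, since $|f|$ need not lie in $C^k_{per}$ (so one cannot directly bound $\||f|-|f_N|\|_2$ via the Fourier tail estimate); the reverse triangle inequality handles this cleanly.
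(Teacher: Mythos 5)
Your proof is correct and follows exactly the same chain as the paper's: triangle inequality under the integral, the Lipschitz bound, the reverse triangle inequality $\bigl||a|-|b|\bigr|\le|a-b|$, Cauchy--Schwarz to pass from $\|f-f_N\|_1$ to $\|f-f_N\|_2$, and finally Lemma~\ref{lem:Fourier_approx}. Nothing to add.
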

\begin{proof}
By triangular inequality and Cauchy-Schwarz inequality, 
\begin{align*}
    \left\lvert \frac 1{2\pi}\int_0^{2\pi} F(\lvert f(\theta) \rvert)d\theta
-
\frac 1{2\pi}\int_0^{2\pi} F(\lvert f_N(\theta) \rvert)d\theta\right\rvert
&\le \frac 1{2\pi}\int_0^{2\pi}
 \left\lvert  F(\lvert f(\theta) \rvert) - F(\lvert f_N(\theta) \rvert)\right\rvert d\theta\\
 &\le \frac L{2\pi}\int_0^{2\pi}
 \left\lvert  \lvert f(\theta) \rvert - \lvert f_N(\theta) \rvert \right\rvert d\theta\\
  &\le \frac L{2\pi}\int_0^{2\pi}
 \left\lvert  f(\theta) -  f_N(\theta) \right\rvert d\theta
 \\
 &= \frac L{2\pi} \|f-f_N\|_1 \le \frac L{\sqrt{2\pi}} \|f-f_N\|_2.
\end{align*}
Eventually, Lemma \ref{lem:Fourier_approx} shows that 
\[
\|f-f_N\|_2 \le \frac {\|f^{(k)}\|_2}{N^{k}} 
\]
concluding the proof.
\end{proof}
The result shows that for any $N$, 
\begin{equation}
    \label{eq:c4}
    c_4(N)= \frac L{\sqrt{2\pi}}\frac {\|f^{(k)}\|_2}{N^{k}} . 
\end{equation}

\subsubsection{Finalization of the proof}

Take an integer $N<\frac n2$. From the relations \eqref{eq:cn}, \eqref{eq:c1}, \eqref{eq:c2}, \eqref{eq:c3} and \eqref{eq:c4},  we find the bound 
\[
\left\lvert \frac 1n\sum_{i=1}^{n} F(\sigma_i(T_n(f))) - 
\frac 1{2\pi}\int_0^{2\pi} F(\lvert f(\theta) \rvert)d\theta
\right\rvert \le c(n,N)
\]
\begin{align}\label{eq:final:bound_cnN}
c(n,N) =
 \frac {L}{\sqrt{2\pi}} \frac {\|f^{(k)}\|_2}{N^{k}} +    
\|F'\|_1 \frac{2N}{n}
 + 
 4N\sqrt N\frac {\pi L}{n}\|f\|_2 + 4N\frac{\|F\|_\infty}{n}
+ 
\frac {L}{\sqrt{2\pi}} \frac {\|f^{(k)}\|_2}{N^{k}}.
\end{align}
Since the previous formula holds for any $N< \frac{n}{2}$, we can minimise it over $N$. The main non-constant terms are $N\sqrt N/n$ and $1/N^k$, so we look for some $\alpha\in (0,1)$ and $N\sim n^\alpha$ such that the two terms are balanced, i.e. $n^{3\alpha/2-1} = n^{-\alpha k}$. In fact, the choice $\alpha = 1/(k+3/2)$ leads to 
$$\left\lvert \frac 1n\sum_{i=1}^{n} F(\sigma_i(T_n(f))) - 
\frac 1{2\pi}\int_0^{2\pi} F(\lvert f(\theta) \rvert)d\theta
\right\rvert = O\left( n^{- \frac{k}{k+3/2}} \right).$$
More accurately, set $N=\lfloor n^{\frac 1{k+3/2}}\rfloor$, and notice that since $k\ge 1$,
\[
n\ge 4\implies n^5 > 32n^2 \implies \frac n2 > n^{\frac 25} \ge  n^{\frac 1{k+3/2}} \ge \lfloor n^{\frac 1{k+3/2}}\rfloor=N.
\]
and that $n\ne 0\implies N\ge 1\implies N\ge (N+1)/2$, so
$$\frac 12 n^{\frac 1{k+3/2}}\le \frac 12(\lfloor n^{\frac 1{k+3/2}}\rfloor+1)\le N \le n^{\frac 1{k+3/2}}.$$
As a consequence, for any $n>3$ and $k\ge 1$, 
\begin{align}\label{eq:final_estimation}
\min_{N<n/2}c(n,N) \le 
 \left(
\frac {2^{k+1}\|f^{(k)}\|_2L}{\sqrt{2\pi}}  
+    
 4\pi L \|f\|_2
\right)
\frac1{n^{\frac {k}{k+3/2}}}
+ 
(2\|F'\|_1 
 + 
4\|F\|_\infty)
\frac{1}{n^{\frac {k+1/2}{k+3/2}}}
\end{align}
This concludes the proof.

\begin{remark}
    Notice that the term $2^{k+1}$ in the final estimate \eqref{eq:final_estimation} is sub-optimal in a lot of cases. For example, in case of smooth or analytic symbol and if $\|f^{(k)}\|_2$ converges to a constant $C$, then when we can minimise  \eqref{eq:final:bound_cnN} first with respect to $k$. Setting $N=\lfloor n^{\frac 1{k+3/2}}\rfloor$, we get  that $N^k\to n$, $N\to 1$, so we find the bound 
    \[
    \frac 1n\left( 
 \frac {2LC}{\sqrt{2\pi}} +    
2\|F'\|_1 
 + 
4\pi L\|f\|_2 + 4\|F\|_\infty
\right).
    \]
    
\end{remark}

\end{document}